\documentclass[pdflatex,sn-basic]{sn-jnl}

\usepackage{amsmath,amsthm}
\usepackage[none]{hyphenat}  \usepackage{mathtools}  \usepackage{thmtools,thm-restate}
\usepackage{enumerate}
\usepackage[capitalize,noabbrev]{cleveref}
\normalbaroutside  \usepackage[textsize=tiny,disable]{todonotes}

\newcommand{\+}[1]{\mathbb{#1}}

\newcommand{\N}{\+{N}}
\renewcommand{\Z}{\+{Z}}
\renewcommand{\R}{\+{R}}
\renewcommand{\x}{\times}
\newcommand{\eps}{\varepsilon}
\newcommand{\nat}{\mathbb N}

\newcommand{\eqby}[2][=]{\stackrel{{\scalebox{0.5}{{#2}}}}{#1}}
\newcommand{\eqdef}{\eqby{def}}

\usepackage{wasysym} \newcommand{\rsymbol}{\ocircle}
\newcommand{\zsymbol}{\Box}
\newcommand{\osymbol}{\Diamond}
\newcommand{\xsymbol}{\odot}
\newcommand{\zstates}{\states_\zsymbol}
\newcommand{\rstates}{\states_\rsymbol}
\newcommand{\ostates}{\states_\osymbol}
\newcommand{\xstates}{\states_\xsymbol}
\newcommand{\reachset}{T}
\newcommand{\win}{\mathrm{win}}
\newcommand{\lose}{\mathrm{lose}}
\newcommand{\abs}[1]{\lvert#1\rvert}
\newcommand{\card}[1]{\abs{#1}}

\newcommand{\dist}{\mathcal{D}}
\newcommand{\support}{\mathtt{supp}}
\newcommand{\reach}[1]{\mathtt{Reach}(#1)}
\newcommand{\reachn}[2]{\mathtt{Reach}_{#1}(#2)}
\newcommand{\safety}[1]{\mathtt{Safety}(#1)}
\newcommand{\lrc}[1]{(#1)}

\newcommand{\ignore}[1]{}
\newcommand{\ord}{\mathbb O}
\newcommand{\tuple}[1]{\lrc{#1}}
\newcommand{\game}{{\mathcal G}}
\newcommand{\mdp}{{\mathcal M}}
\newcommand{\gametuple}{\tuple{\states,(\zstates,\ostates,\rstates),\transition,\probp}}

\newcommand{\states}{S}

\renewcommand{\state}{s}

\newcommand{\transition}{{\longrightarrow}}
\newcommand{\probp}{P}
\newcommand{\playset}{{\mathfrak R}}
\newcommand{\zstrat}{\sigma}
\newcommand{\ostrat}{\pi}

\newcommand{\zstratset}{\Sigma}
\newcommand{\ostratset}{\Pi}
\newcommand{\px}{\xsymbol}
\newcommand{\pz}{\zsymbol}
\newcommand{\po}{\osymbol}

\newcommand{\memconf}{{\sf m}}
\newcommand{\memconfset}{{\sf M}}

\newcommand{\expectval}{{\mathcal E}}
\newcommand{\probm}{{\mathcal P}}
\newcommand{\expectation}[1][]{ \expectval_{#1}}
\newcommand{\valueof}[2]{{\mathtt{val}_{#1}(#2)}}

\newcommand{\NthState}[1]{X_{#1}}
\newcommand{\poststar}[1]{\mathit{Post}^*(#1)}
\newcommand{\ferr}[1]{L(#1)}
\newcommand{\RVInT}[1]{[\NthState{#1} \in \reachset]}
\newcommand{\before}{\mathsf{beforeagain}}
\newcommand{\czerobeforeu}{{{c_0}\ \before\ u}}
\newcommand{\statesopt}{\states_\mathit{opt}}

\newcommand{\weighted}[1]{{\mathcal W}_{#1}}

\jyear{2022}

\theoremstyle{thmstyleone}\newtheorem{theorem}{Theorem}\newtheorem{proposition}[theorem]{Proposition}\newtheorem{lemma}[theorem]{Lemma}
\newtheorem{claim}[theorem]{Claim}

\newtheorem{remark}{Remark}

\newtheorem{definition}{Definition}

\newtheoremstyle{mythm}
{0.5\baselineskip}
{0.5\baselineskip}
{\itshape}
{}
{\bfseries}
{:}
{.5em}
{}
\theoremstyle{mythm}\newtheorem*{flbr*}{First Lower-Bound result (Q1)}
\newtheorem*{mupr*}{Main Upper-Bound result (Q2)}
\newtheorem*{mlpr*}{Main Lower-Bound result (Q3)}

\newtheoremstyle{myhope}
{0.5\baselineskip}
{0.5\baselineskip}
{\itshape}
{}
{\bfseries}
{:}
{.5em}
{}
\theoremstyle{myhope}\newtheorem{hope}{Hope}

\raggedbottom

\begin{document}

\title{Strategy Complexity of Reachability 
in Countable Stochastic 2-Player Games}

\author[1]{\fnm{Stefan}~\sur{Kiefer}}\email{stekie@cs.ox.ac.uk}
\author[2]{\fnm{Richard}~\sur{Mayr}}\email{rmayr@inf.ed.ac.uk}
\author[3]{\fnm{Mahsa}~\sur{Shirmohammadi}}\email{mahsa@irif.fr}
\author[4]{\fnm{Patrick}~\sur{Totzke}}\email{totzke@liverpool.ac.uk}

\affil*[1]{
\orgname{University of Oxford},
\country{UK}}
\affil*[2]{
\orgname{University of Edinburgh},
\country{UK}}
\affil*[3]{
\orgdiv{IRIF \& CNRS}, \orgname{Universit\'e Paris cit\'e},
\country{France}}
\affil*[4]{
\orgname{University of Liverpool},
\country{UK}}

\abstract{
We study countably infinite stochastic 2-player games with reachability objectives.
Our results provide a complete picture of the memory
requirements of $\eps$-optimal (resp.\ optimal) strategies. These results depend on the size of the players' action sets and
on whether
one requires strategies that are uniform (i.e., independent of the start state).

Our main result is that $\eps$-optimal (resp.\ optimal) Maximizer strategies
require infinite memory if Minimizer is allowed infinite action sets.
This lower bound holds even under very strong restrictions.
Even in the special case of infinitely branching turn-based reachability
games, even if all states allow an almost surely winning Maximizer strategy,
strategies with a step counter plus finite private memory are still useless.

Regarding \emph{uniformity}, we show that
for Maximizer there need not exist memoryless (i.e., positional) 
uniformly $\eps$-optimal strategies even in the special
case of finite action sets or in finitely branching turn-based games.
On the other hand, in games with finite action sets,
there always exists a uniformly $\eps$-optimal Maximizer strategy
that uses just one bit of public memory.

 }

\keywords{
Stochastic Games,
Discrete-time games,
Strategy Complexity}

\pacs[MSC Classification]{
91A15,
60J05,
91A60,
60G40,
60J05}

\maketitle

\newcommand{\new}[1]{\textcolor{red}{#1}}
\renewcommand{\new}[1]{{#1}}
\newcommand{\verynew}[1]{\textcolor{red}{#1}}
\renewcommand{\verynew}[1]{{#1}}

\section{Introduction}\label{sec:intro}

\new{We study 2-player zero-sum stochastic games on countably
  \footnote{\verynew{Our proofs of upper bounds do not carry over to
      uncountable state spaces. E.g., we partition events into as many (by cardinality)
      parts as there are states and then rely on sigma-additivity of measures.
      Our lower bounds trivially carry over to uncountable state spaces.}} 
  infinite graphs. This section outlines the background and our contribution.
Formal definitions of games, strategies, memory, etc., are given in \Cref{sec:prelim}.
}

Stochastic games were first introduced by Shapley in his seminal 1953 work~\cite{shapley1953}, 
and model dynamic interactions in which the environment
responds randomly to players' actions.
Shapley's games were generalized by \cite{Gillette1958} and \cite{KumarShiau} 
to allow infinite state and action sets and non-termination.
They play a central role in the solution of many problems
in economics, see \cite{sorin1992,nowak2005,jaskiewiczN11,solan2015stochastic,bacharach2019}, evolutionary biology, e.g., \cite{raghaven2012}, 
and computer science, see \cite{AlfaroH01,neyman2003,AltmanAMM05, Altman2007,solan2015stochastic,svorevnova2016,bouyer2016} among others.

In general concurrent games, in each state \new{both Maximizer and Minimizer} independently choose
an action
and the next state is determined
according to a pre-defined distribution that depends on the chosen pair of
actions.
Turn-based games (also called switching-control games) are a subclass where
each state is owned by some player and only this player gets to choose an action.
These games were studied first in the 1980s and 90s in \cite{filar1980,filar1981ordered,Vrieze1983,vrieze1987,CONDON1992203}
but have recently received much attention by computer
scientists, for instance in \cite{GimbertH10,chen2013prism,bouyer2016,KieferMSW17a,BertrandGG17}.
An even more special case of stochastic games are \emph{Markov Decision
Processes (MDPs)}:
MDPs are turn-based games where all controlled states are Maximizer states.
Since Minimizer is passive, they are also called games against nature.

In order to get the strongest results,
we will show that our lower bound results hold even for
the special subclass of turn-based games
while our upper bounds hold even for general games.

A strategy for a player is a function that, given a history of a play,
determines the next action of the player.
Objectives are defined via functions that assign numerical rewards to plays,
and the Maximizer (resp.\ Minimizer) aim to maximize (resp.\ minimize)
the expected reward.
A central result in zero-sum 2-player stochastic games with finite action sets 
is the existence of a \emph{value} for the large class of Borel measurable
objectives \cite[]{martin_1998,Maitra-Sudderth:1998} (i.e., that
$\sup_{\it Max}\inf_{\it Min} = {\it value} = \inf_{\it Min}\sup_{\it Max}$
over Maximizer/Minimizer strategies).
In particular, this implies the existence of $\eps$-optimal strategies for
every $\eps >0$ and either player, i.e., strategies that enforce that the outcome of a game is
$\eps$-close to its value, regardless of the behavior of the other player.
Optimal strategies ($\eps$-optimal for $\eps=0$)
need not exist in general, but their properties have been
studied in those cases where they do exist, for example in \cite{Puterman:book,kucera_2011,KieferMSW17a,KMST2020c}.

The nature of good strategies in stochastic games -- that is $\eps$-optimality vs.~optimality,
and their memory requirements --
is relevant in computer science \cite[]{BKKNK14,KieferMSW17a,KMST2020c}, in particular, in the sense of computability~\cite[]{kucera_2011}.
It is also recognized as \new{a} central notion 
in branches of mathematics and economics, especially
operations research
\cite[]{MaitraS07},
probability theory
\cite[]{flesch2018}, 
game theory
\cite[]{FleschHMP21, LarakiMS13, MaitraS07}
and economic theory \cite[]{aumann1981survey,bacharach2019,KALAI1990131}. 

The simplest type of strategy bases its decisions only on the current state,
and not on the history of the play. Such strategies are called
\emph{memoryless} or \emph{positional}.
\footnote{A closely related concept is a \emph{stationary} strategy, which
  also bases decisions only on the current state. However, 
some authors
call a strategy
``stationary $\eps$-optimal''
if it is $\eps$-optimal from every state,
and call it ``semi-stationary''
if it is $\eps$-optimal only from the fixed initial state.
Since this difference is important in our work, we avoid the term
``stationary'' here.
Instead, if a strategy is $\eps$-optimal from every state then we call
it \emph{uniformly $\eps$-optimal}.
I.e., $\eps$-optimal stationary strategies are uniformly $\eps$-optimal
\new{memoryless} strategies.
}
\new{
By default, we assume that strategies can use randomization (i.e., use mixed
actions), while the subclass of deterministic (pure) strategies are
limited to choosing a single pure action at each state.
\emph{Memoryless randomized (MR)} strategies choose a mixed action at each
state, while \emph{memoryless deterministic (MD)} strategies choose a pure
action at each state, both independently of the history.
}

More complex strategies might use some finite amount of 
memory.
The strategy chooses an action depending only on the current state and the current memory mode.
The memory mode can be updated in every round according to the current state,
the observed chosen actions and the next state.
We assume perfect\new{-}information games, so the actions and states are observable
at the end of every round.
In general, for strategies that are not deterministic but use randomization, this memory update may also be randomized.
Therefore, in the case of games, a player does not necessarily know for sure
the current memory mode of the other player.
It may be advantageous for a player to keep his memory mode hidden from the
other player. We distinguish between \emph{public memory}, where the strategies' memory mode is public knowledge, and \emph{private memory}, which is hidden from the opponent.
A step counter is \new{an infinite memory device corresponding to} a discrete clock that is incremented after every round.
We consider this to be a type of public memory, because the update is deterministic and the memory mode can be
deduced by the opponent.
Strategies that use only a step counter are called \emph{Markov strategies}.
Combinations of the above are possible, e.g., a strategy that uses a step
counter and an additional finite public/private general purpose memory.
The amount/type of memory and randomization required for a good
($\eps$-optimal, resp.\ optimal) strategy for a
given objective is also called its \emph{strategy complexity}.

\subsection*{The Reachability Objective}
With a reachability objective, a play is defined as winning for Maximizer iff it
visits a defined target state (or a set of target states) at least once.
Thus Maximizer aims to maximize the probability that the target is reached.
Dually, Minimizer aims to minimize the probability of reaching the target.
So, from Minimizer's point of view, this is the dual \emph{safety objective} of
avoiding the target.

Reachability is arguably the simplest objective in games on
graphs.
It can trivially be encoded into the usual reward-based objectives, i.e.,
every play that reaches the target gets reward~$1$ and all other plays get
reward~$0$.
\new{
Moreover, it can be encoded into many other objectives 
including B\"uchi, Parity and average-payoff conditions,
by turning the target vertex into a good (for the new objective) sink.}

Despite their apparent simplicity, reachability games are not trivial.
While both players have optimal MD strategies in finite-state turn-based
reachability games \cite[]{CONDON1992203}; see also \cite[Proposition 5.6.c, Proposition 5.7.c]{kucera_2011},
this does not carry over to
finite-state concurrent reachability games.
\new{
A counterexample 
where Maximizer has no optimal strategy
is the \emph{Hide-or-Run} game \cite[Example 1]{Everett1957},
also see \cite{KumarShiau,AlfaroHK98}.
}

In countably infinite reachability games, Maximizer does not have an optimal
strategy even if the game is turn-based, in fact not even in countably
infinite MDPs that are finitely branching \cite[]{KMSW2017}.
On the other hand, \cite[Proposition A]{Ornstein:AMS1969} shows that Maximizer has
$\eps$-optimal MD strategies in countably infinite MDPs.
Better yet, the MD strategies can be made uniform, i.e., independent of the
start state.\footnote{\verynew{This memoryless uniformity does not carry over
to MDPs with uncountable state spaces by \cite[Theorem A]{Ornstein:AMS1969}.}}
This led to the question whether Ornstein's results can be generalized from
MDPs to countably infinite stochastic games.
\cite{Secchi97}, Corollary~3.9, proved the following.
\begin{proposition}\label{prop:story-state-of-the-art}
Maximizer has $\eps$-optimal memoryless (MR)
strategies in countably infinite concurrent reachability games with finite
action sets.
\end{proposition}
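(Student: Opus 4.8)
The plan has three stages: (i) use determinacy to obtain the value together with a fixed-point characterisation of it; (ii) build a single positional Maximizer strategy out of slightly perturbed one-shot matrix games, calibrated along an enumeration of the state space; (iii) verify $\eps$-optimality by a potential/martingale argument on the values of the visited states. Fix a start state $s_0$ and $\eps>0$; we aim for a positional $\sigma$ with $\Pr^{\sigma,\pi}_{s_0}(\reach{\reachset})\ge v(s_0)-\eps$ for every Minimizer strategy $\pi$, where $v\colon\states\to[0,1]$ is the value function. Since all action sets are finite, the reachability objective is Borel and the game is a Blackwell game, so by Martin's determinacy theorem $v$ exists; moreover $v$ is, by a standard dynamic-programming argument, the pointwise least fixed point $\ge\mathbf 0$ of the Shapley operator $\Phi$, where $\Phi(w)(s)=1$ for $s\in\reachset$ and otherwise $\Phi(w)(s)$ is the minimax value of the finite matrix game with Maximizer payoff $(a,b)\mapsto\sum_{s'}P(s,a,b,s')\,w(s')$. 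This matrix is finite (its entries are numbers in $[0,1]$) even if the game branches infinitely, so von Neumann's theorem applies and $v(s)$ is attained by an optimal mixed Maximizer action at $w=v$. We may assume every state is reachable from $s_0$ and, by redirecting to an absorbing copy, that $\reachset$ is absorbing.

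For the construction, enumerate the states $s_0,s_1,s_2,\dots$ (anchored at $s_0$, e.g.\ by breadth-first search) and fix $\eps_0,\eps_1,\dots>0$ with $\sum_i\eps_i\le\eps$. At each $s=s_i\notin\reachset$ we want a mixed action $\mu_i$ that is not merely $\eps_i$-optimal in the one-shot game at $s$, but \emph{progressive}: against any Minimizer response it guarantees expected next-step value $\ge v(s)-\eps_i$ and, whenever Minimizer's response would keep the value from rising, a definite probability of strictly increasing the value or of reaching $\reachset$. The device is to let $\mu_i$ be Maximizer-optimal in the \emph{perturbed} matrix game with payoff $(a,b)\mapsto\sum_{s'}P(s,a,b,s')\,v(s')+\delta_i\sum_{s'}P(s,a,b,s')\,\mathbbm 1\!\left[\,v(s')>v(s)\text{ or }s'\in\reachset\,\right]$ for a suitably small $\delta_i\le\eps_i$; this is the trick (after de Alfaro--Henzinger--Kupferman for finite concurrent reachability games) that repairs the Hide-or-Run pathology, while the $\eps_i$-calibration along the enumeration is the trick (after Ornstein for countable MDPs) for coping with infinitely, and possibly densely, many value levels. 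Put $\sigma(s_i):=\mu_i$.

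For the verification, fix any Minimizer $\pi$ and consider the play $S_0=s_0,S_1,S_2,\dots$ under $(\sigma,\pi)$ with its natural filtration $(\mathcal F_n)$. By the choice of $\mu_i$, one gets $\mathbb E[v(S_{n+1})\mid\mathcal F_n]\ge v(S_n)-\eps_{\iota(S_n)}$, where $\iota(s)$ is the index of $s$; the aim is to show the slack accumulated along the play is a.s.\ at most $\eps$, so $\mathbb E[v(S_n)]\ge v(s_0)-\eps$ for every $n$. The process $v(S_n)$ is bounded, hence (up to the slack) a submartingale converging a.s.\ to some $V_\infty$. Since $\reachset$ is absorbing, $V_\infty=1$ on $\reach{\reachset}$, and on $\neg\reach{\reachset}$ the progress property forces $V_\infty=0$: if instead $v(S_n)\to c>0$ then from some point on each step carries, by progressiveness, a conditional chance bounded below of reaching $\reachset$ or of jumping the value up by a fixed amount, which by the conditional Borel--Cantelli (L\'evy $0$--$1$) law contradicts either $\neg\reach{\reachset}$ or convergence to $c$. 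Hence $V_\infty=\mathbbm 1[\reach{\reachset}]$ a.s., and bounded convergence gives $\Pr^{\sigma,\pi}_{s_0}(\reach{\reachset})=\mathbb E[V_\infty]\ge v(s_0)-\eps$.

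The main obstacle is the bookkeeping in stage (ii): making the perturbation margins $\delta_i$ and the slack budget $\eps_i$ interact so that the a.s.\ bound on accumulated slack really holds, i.e.\ that no state of value bounded away from $0$ is revisited too often on $\neg\reach{\reachset}$, with the quantitative escape estimates composing across value levels. In the finite-state case this is a finite backward induction over the finitely many value levels; in the countable case one must discretise into the finitely many bands $[\eta,\eta+\eta')$ with $\eta\ge\eps/2$ (states of value $<\eps/2$ together cost at most $\eps/2$) and run an Ornstein-style horizon-truncation inside each band, handling the circular dependency that $\mu_i$ ought to depend on how good $\sigma$ already is on $s_0,\dots,s_{i-1}$. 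It is exactly here that the enumeration must be anchored at the start state: the uniform variant is false (the paper shows one bit of public memory is needed there), so no calibration independent of $s_0$ can work, and the argument must use that the guarantee is demanded only from $s_0$.
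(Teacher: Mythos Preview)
The paper does not prove this proposition; it cites it from Secchi \cite[Corollary~3.9]{Secchi97} and uses it as a black box. Your proposal is an independent attempt at a direct proof.

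The outline has a genuine gap in stage~(iii), which you yourself flag as ``the main obstacle'' without resolving. From $\mathbb{E}[v(S_{n+1})\mid\mathcal F_n]\ge v(S_n)-\eps_{\iota(S_n)}$ you obtain only $\mathbb{E}[v(S_n)]\ge v(s_0)-\mathbb{E}\bigl[\sum_{k<n}\eps_{\iota(S_k)}\bigr]$, and in a cyclic game the expected accumulated slack can exceed~$\eps$ (indeed diverge) because the same state may be revisited arbitrarily often. Your $V_\infty=\mathbbm{1}[\reach{\reachset}]$ argument does not repair this: even granting it, bounded convergence gives $\Pr(\reach{\reachset})=\lim_n\mathbb{E}[v(S_n)]$, and the submartingale-plus-compensator structure does not yield $\liminf_n\mathbb{E}[v(S_n)]\ge v(s_0)-\eps$ without a bound on the compensator. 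Moreover, the progressiveness claim itself (``a conditional chance bounded below of \ldots\ jumping the value up by a fixed amount'') is exactly what the finite-state de~Alfaro--Henzinger--Kupferman argument extracts from there being finitely many value levels; in a countable game the values can be dense in $[0,1]$, and your banding sketch is where the real work would have to live, not a remark at the end. Ornstein's enumeration works in MDPs because the single player can commit to reaching a finite set within a bounded horizon from each start state; in a concurrent game Minimizer can frustrate precisely that.

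A route that does close, and which the paper carries out in the appendix for the closely related \cref{lem:conc-reach-non-uniform}, reduces to a finite-state problem rather than fighting cycles: use $\valueof{\reachn{n}{\reachset}}{s_0}\uparrow\valueof{\reach{\reachset}}{s_0}$ to fix a horizon~$n$; finiteness of the action sets then yields a finite $R\subseteq\states$ in which the $n$-step play stays with probability $\ge 1-\eps/3$ under \emph{all} strategy pairs; on the finite-state concurrent reachability game restricted to~$R$ invoke the classical result that Maximizer has $\eps/3$-optimal MR strategies; define the strategy arbitrarily outside~$R$. The resulting positional strategy is $\eps$-optimal from $s_0$ and necessarily depends on~$s_0$, consistent with the paper's \cref{thm:BMI-zplus} showing the uniform version fails.
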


However, these MR strategies are not uniform, i.e., they depend
on the start state.
In fact, \cite{Raghavan-Nowak:1991} showed that there cannot exist
any uniformly $\eps$-optimal memoryless Maximizer strategies
in countably infinite concurrent reachability games with finite
action sets. Their counterexample is called the \emph{Big Match on $\N$}
which, in turn, is inspired by the \emph{Big Match}
\cite[]{Gillette1958,solan2015stochastic,BlackwellFerguson,Hansen2018}. 
Several fundamental questions remained open:
\begin{description}
\item[Q1.]
Does the negative result of \cite{Raghavan-Nowak:1991} still hold
in the special case of countable \emph{turn-based} (finitely branching) reachability games?
\item[Q2.]
    If \emph{uniformly} $\eps$-optimal Maximizer strategies cannot be
    memoryless, \verynew{how much memory do they need?}
\item[Q3.]
Does the positive result of Secchi (\Cref{prop:story-state-of-the-art} above) still hold if the
restriction to finite action sets is \verynew{relaxed}?
The question is meaningful, since 
concurrent games where only one player has \new{countably} infinite action sets are still
determined \cite[Theorem 11]{Flesch-Predtetchinski-Sudderth:2020}
(though not if both players have infinite action sets, unless one imposes
other restrictions).
Moreover, what about infinitely branching turn-based reachability games?
How much memory do good Maximizer strategies need in these cases?
\end{description}

\subsection*{Our Contribution}
Our  results, summarized in~\Cref{maxtable,mintable}, provide a comprehensive
view on the strategy complexity of (uniformly) $\eps$-optimal strategies for
reachability (and also about optimal strategies when they exist).

\begin{table*}[t]
\noindent\resizebox{\textwidth}{!}{
\begin{tabular}{|l|c|c|c|c|c|}
    \hline
    \multirow{2}{1.45cm} {Maximizer} & countable & turn-based games & turn-based games & concurrent games \\
                                & MDPs      & finite branching & infinite branching & finite action sets \\
    \hline\hline
    \multirow{3}{1.25cm}{{$\eps$-optimal}} & MD & MD & $\infty$-memory & MR\\
    &   \cite[Thm.~B]{Ornstein:AMS1969} & \cite[Proposition 5.7.c]{kucera_2011}, & [\Cref{thm:no-sc-plus-finite}] & \cite[Cor.~3.9]{Secchi97}\\
    && [\Cref{lem:conc-reach-non-uniform}] &&\\
    \hline
    \multirow{3}{1.25cm}{{Uniform $\eps$-optimal}} & MD & no MR [\Cref{thm:TB-BMI-zplus}]; & $\infty$-memory & no MR \cite[]{Raghavan-Nowak:1991};\\
     & \cite[Thm.~B]{Ornstein:AMS1969}  & det.\ public 1-bit & [\Cref{thm:no-sc-plus-finite}] & rand.\ public 1-bit,\\
     && [\Cref{thm:conc-reach-uniform}] && [\Cref{thm:conc-reach-uniform}]\\
    \hline
    \multirow{4}{1.25cm}{{Optimal}} & MD & no FR \cite[Prop.~5.7.b]{kucera_2011}; & $\infty$-memory & $\infty$-memory\\
    & \cite[Prop.~B]{Ornstein:AMS1969} & No Markov [\Cref{prop:turn-fb-optmax-lower}] & [\Cref{thm:no-sc-plus-finite}] & [\Cref{prop:conc-optmax}]\\
    && step counter + det.~public 1-bit, && \\
    && [\Cref{thm:turn-fb-optmax-upper}] &&
    \\
    \hline
    \multirow{2}{1.25cm}{{Almost sure}} & MD & MD & $\infty$-memory & MR\\
    & \cite[Prop.~B]{Ornstein:AMS1969} & \cite[Theorem~5.3]{KieferMSW17a} &
    [\Cref{thm:no-sc-plus-finite}] & [\Cref{thm:conc-reach-as}]\\
    \hline
\end{tabular}
}
\smallskip
\caption{The strategy complexity of Maximizer for the reachability
    objective. Since optimal and Almost sure (a.s.) winning strategies 
    are not guaranteed to exist,
    the results in the two bottom rows are conditioned upon their
    existence.
    ``$\infty$-memory'' means that even randomized strategies with a
    step counter plus an arbitrarily large finite private
    memory do not suffice. Deterministic strategies are useless in
    concurrent games, regardless of memory.}\label{maxtable}
\end{table*}
 
\begin{table*}[t]
\noindent\resizebox{\textwidth}{!}{
\begin{tabular}{|l|c|c|c|c|c|}
    \hline
    \multirow{2}{1.7cm} {Minimizer} & turn-based games & turn-based games & concurrent games \\
                                &  finite branching & infinite branching & finite action sets \\
    \hline\hline
    \multirow{2}{1.7cm}{{(Uniform) $\eps$-optimal}} & MD & no FR \cite[Thm.~3]{KMSW2017}; & \\
    & \cite[Thm.~3.1]{BBKO:IC2011} & det.\ Markov [\Cref{thm:min-eps-optimal}] & MR, \cite[Thm.~1]{Raghavan-Nowak:1991}\\
    \hline
    Optimal & MD \cite[Thm.~3.1]{BBKO:IC2011} &  $\infty$-memory [\Cref{prop:infbranch-optmin}] & MR \cite[Thm.~1]{Raghavan-Nowak:1991}\\ \hline
\end{tabular}
}
\smallskip
  \caption{The strategy complexity of Minimizer for the reachability
    objective. 
    Since optimal Minimizer strategies do not need to exist for infinitely
    branching games (unlike in the other cases), the result of
    \Cref{prop:infbranch-optmin} is conditioned upon their
    existence.
    Deterministic strategies are useless in concurrent games, regardless of memory.
  }\label{mintable}
\end{table*}
 
Our first result strengthens the negative result of \cite{Raghavan-Nowak:1991} to the turn-based case.
\begin{flbr*}(\Cref{thm:TB-BMI-zplus})
There exists a finitely branching turn-based version of the Big Match on $\N$ where
Maximizer still does not have any uniformly $\eps$-optimal MR strategy.
\end{flbr*}
Our second result solves the open question about uniformly $\eps$-optimal
Maximizer strategies. While uniformly $\eps$-optimal Maximizer strategies
cannot be memoryless, $1$ bit of memory is enough.
\begin{mupr*}(\Cref{thm:conc-reach-uniform}) 
In concurrent games with finite action sets and reachability objective, for any $\eps>0$,
Maximizer has a uniformly $\eps$-optimal public-memory 1-bit strategy.
This strategy can be chosen as deterministic if the game is turn-based and
finitely branching. 
\end{mupr*}

Our main contribution (\Cref{thm:story-main}) addresses Q3.
It determines the strategy complexity
of Maximizer in infinitely branching reachability games.
\new{Our result is a strong lower bound, and we present the path towards it 
by disproving a sequence of hopeful conjectures towards upper bounds.}
\begin{hope} \label{hope:general}
In turn-based reachability games, Maximizer has $\eps$-optimal MD strategies.
\end{hope}
This is motivated by the fact that the property holds if the game is
finitely branching \cite[Proposition 5.7.c]{kucera_2011}
and \Cref{lem:conc-reach-non-uniform} or if it is just an MDP as in
\cite{Ornstein:AMS1969}.

One might even have hoped for \emph{uniformly} $\eps$-optimal MD strategies,
i.e., strategies that do not depend on the start state of the game,
but this hope was crushed by the answer to Q1.

Let us mention a concern about Hope~\ref{hope:general} as stated (i.e., disregarding uniformity).
Consider any turn-based reachability game that is finitely branching, and let $x \in [0,1]$ be the value of the game.
The proof of \cref{prop:story-state-of-the-art} actually shows that for every
$\eps > 0$, Maximizer has both a strategy
and a time horizon $n \in \N$ such that for all Minimizer strategies,
the game visits the target state with probability at least $x-\eps$ \emph{within the first $n$ steps} of the game.
There is no hope that such a guarantee on the time horizon can be given in
infinitely branching games.
Indeed, consider the infinitely many states $f_0, f_1, f_2, \ldots$, where $f_0$ is the target state and for $i>0$ state $f_i$ leads to $f_{i-1}$ regardless of the players' actions, and an additional Minimizer state, $u$, where Minimizer chooses, by her action, one of the $f_i$ as successor state.
In this game, starting from~$u$, Maximizer wins with probability~$1$ (he is passive in this game).
Minimizer cannot avoid losing, but her strategy determines when $f_0$ is visited.
This shows that a proof of Hope~\ref{hope:general} would require different methods.

In case Hope~\ref{hope:general} turns out to be false, there are various plausible weaker versions.
Let us briefly discuss their motivation.

\begin{hope} \label{hope:MR}
Hope~\ref{hope:general} is true if MD is replaced by MR.
\end{hope}
This is motivated by \Cref{prop:story-state-of-the-art},
i.e., that in concurrent games with finite action sets for both players, Maximizer has $\eps$-optimal MR strategies.
In fact, \cite[Theorem 12.3]{Flesch-Predtetchinski-Sudderth:2020} implies that
this holds even under the weaker assumption that just Minimizer has finite
action sets (while Maximizer is allowed infinite action sets).

\begin{hope} \label{hope:optimal}
Hope~\ref{hope:general} is true if Maximizer has an optimal strategy.
\end{hope}
This is motivated by the fact that in MDPs with \emph{B\"uchi} objective (i.e., the player tries to visit a set of target states infinitely often), if the player has an optimal strategy, he also has an MD optimal strategy.
The same is not true for $\eps$-optimal strategies
as shown in \cite{KMSW2017}.
This example shows that although optimal strategies do not always exist, if they do exist, they may be simpler.

\begin{hope} \label{hope:almost-sure}
Hope~\ref{hope:general} is true if Maximizer has an almost surely winning strategy, i.e., a strategy that guarantees him to visit the target state with probability~$1$.
\end{hope}
This is weaker than Hope~\ref{hope:optimal}, because an almost surely winning strategy is necessarily optimal.

In a turn-based game, let us associate to each state~$\state$ its \emph{value}, which is the value of the game when started in~$s$.
We call a controlled step $\state \to \state'$ \emph{value-decreasing}
(resp., \emph{value-increasing}),
if the value of $\state'$ is smaller (resp., larger) than the value of $\state$.
It is easy to see that Maximizer cannot do value-increasing steps and 
Minimizer cannot do value-decreasing steps, but the opposite is possible in general. 

\begin{hope} \label{hope:no-value-dec}
Hope~\ref{hope:general} is true if Maximizer does not have value-decreasing steps.
\end{hope}

\begin{hope} \label{hope:no-value-inc}
Hope~\ref{hope:general} is true if Minimizer does not have value-increasing steps.
\end{hope}

Hopes~\ref{hope:no-value-dec} and~\ref{hope:no-value-inc} are motivated
by the fact that sometimes the absence of Maximizer value-decreasing steps
or the absence of Minimizer value-increasing steps implies the existence of
optimal Maximizer strategies and then Hope~\ref{hope:optimal} might apply.
For example, in finitely branching turn-based reachability games,
the absence of Maximizer value-decreasing steps or the absence of Minimizer
value-increasing steps implies the existence of optimal Maximizer strategies, and they can be chosen MD \cite[Theorem~5]{KieferMSW17a}.

\begin{hope} \label{hope:acyclic}
Hope~\ref{hope:general} is true for games with acyclic game graph.
\end{hope}
This is motivated, e.g., by the fact that in \emph{safety} MDPs (where the only active player tries to \emph{avoid} a particular state~$f$) with acyclic game graph and infinite action sets the player has $\eps$-optimal MD strategies \cite[Corollary~26]{KMST2020c}.
The same does not hold without the acyclicity assumption \cite[Theorem~3]{KMSW2017}.

\begin{hope} \label{hope:Markov}
Hope~\ref{hope:general} is true if Maximizer can additionally use a step counter to choose his actions.
\end{hope}
This is weaker than Hope~\ref{hope:acyclic}, because by using a step counter
Maximizer effectively makes the game graph acyclic.
However, the reverse does not hold. Not every acyclic game graph has an implicit step counter.

\begin{hope} \label{hope:finite-mem}
In turn-based reachability games, Maximizer has $\eps$-optimal strategies that use only finite memory.
\end{hope}
This is motivated, e.g., by the fact that in MDPs
with acyclic game graph and B\"uchi objective, the player has $\eps$-optimal
deterministic strategies that require only 1 bit of memory, but no
$\eps$-optimal MR strategies \cite[]{KMST:ICALP2019}.

It might be advantageous for Maximizer to keep his memory mode private.
This motivates the following final weakening of Hope~\ref{hope:finite-mem}.

\begin{hope} \label{hope:finite-private-mem}
In turn-based reachability games, Maximizer has $\eps$-optimal strategies that use only private finite memory.
\end{hope}

\medskip
The main contribution of this paper is to crush all these hopes.
That is, Hope~\ref{hope:general} is false, even if all weakenings proposed in Hopes \ref{hope:MR}-\ref{hope:finite-private-mem} are imposed \emph{at the same time}.
Specifically, we show the following theorem \verynew{(stated in more detail as \cref{thm:no-sc-plus-finite} later on)}.
\begin{theorem}\label{thm:story-main}
There is a turn-based reachability game (necessarily,
by \cref{prop:story-state-of-the-art},
with infinite action sets for Minimizer) with the following properties:
\begin{enumerate}
 \item for every Maximizer state, Maximizer has at most two actions to choose from;
 \item for every state Maximizer has a strategy to visit the target state with probability~$1$, regardless of Minimizer's strategy;
\item for every Maximizer strategy that uses only a step counter and  private finite memory and randomization, for every $\eps > 0$, Minimizer has a strategy so that the target state is visited with probability at most~$\eps$.
\end{enumerate}
\end{theorem}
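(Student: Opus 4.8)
The plan is to build $\game$ as an infinite cascade of almost identical gadgets (``rounds'') $G_1,G_2,G_3,\dots$: on entering round $i$, Maximizer either \emph{wins} the round and is routed to the target, or \emph{loses} it and is routed, by a Minimizer move, into round $i+1$. The construction is arranged so that \emph{every reachable state has value~$1$}, which at one stroke discharges almost all of the demands: losing a round is never fatal, only postponing; a history-aware Maximizer wins round~$1$ outright, so Maximizer is almost-surely winning from every state (item~(2)); Maximizer never makes a value-decreasing step and Minimizer never a value-increasing one; acyclicity is obtained by a layered construction over a step-counter-like coordinate; and each Maximizer state of a gadget is designed to carry at most two actions (item~(1)). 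Item~(3) then reduces to a single per-round estimate.

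The role of a round is that Minimizer uses her infinite action set to inflate the amount of information needed to win: in $G_i$ she picks a parameter $n_i\in\N$, and the gadget is shaped so that winning the round hinges on one two-way Maximizer choice whose ``correct'' side is pinned down by the entire prior history but not by anything a bounded strategy can reconstruct. Since $\game$ is turn-based with perfect information, the missing information cannot be ``a value Maximizer must recall and the game then checks'' — whatever the game keeps in order to check it, Maximizer also observes. Instead, the data Minimizer hides from a bounded strategy is \emph{global-history} data such as ``which round is this'' or ``how far into the round was the decisive fork set up'': because Minimizer varies the lengths of earlier rounds, this data is not a function of the current state together with the step counter, and it cannot be maintained in $m$ private memory modes because it grows without bound. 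A history-dependent Maximizer, by contrast, carries exactly this data and resolves the fork correctly already in round~$1$; hence the value is~$1$ everywhere. The subtle point to get right is that \emph{both} sides of the decisive fork must keep the value equal to~$1$ — so that a confused strategy is pushed into the non-winning side without ever taking a value-decreasing step — while still only one side, for the actual history, leads to the target; it is Minimizer's choice of $n_i$, together with her play inside $G_i$, that steers a bounded strategy into the wrong side. (This is also why the game must be infinitely branching for Minimizer, as forced by \cref{prop:story-state-of-the-art}.)

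For the quantitative core, fix any Maximizer strategy $\sigma$ using only a step counter plus $m$ private, possibly randomized, memory modes, and any $\eps>0$. When the play enters $G_i$ the step counter sits at one fixed value and $\sigma$'s memory is in one of at most $2^m$ modes; Minimizer, who knows $\sigma$, then chooses $n_i$ (and her remaining round-$i$ moves) so large that on the decisive window the behaviour of $\sigma$ is essentially independent of the history feature it would need, so that $\sigma$ wins $G_i$ with probability at most $\eps\,2^{-i}$ — a short averaging/pigeonhole argument, since at most boundedly many of Minimizer's exponentially many possible round-$i$ behaviours can be ``covered'' well by $2^m$ modes starting from one step-counter value. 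As failing $G_i$ leads deterministically into $G_{i+1}$, the target is reached only if $\sigma$ wins some round, so the probability of reaching the target is at most
\[
  \sum_{i\ge 1}\eps\,2^{-i}\;=\;\eps ,
\]
which is item~(3).

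The hard part — and essentially the only nontrivial part — is the gadget design sketched in the second paragraph: one must exhibit a round $G_i$ in which Maximizer has at most two actions everywhere, every reachable state has value~$1$ (in particular, losing the round is not losing the game), the graph is acyclic, and yet winning the round provably requires resolving a binary choice using unbounded history that a step-counter-plus-finite-private-memory strategy cannot possess. Reconciling ``value~$1$ everywhere'' and ``acyclic'' with the very existence of infinite non-winning plays that a bounded strategy is forced into, while keeping Minimizer's infinite branching as the sole source of hardness, is where essentially all the work lies; once such a gadget is available, the cascade assembly and the probability bound above are routine.
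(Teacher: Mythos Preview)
Your proposal does not prove the theorem: it reduces everything to the existence of a ``round gadget'' with a long list of properties and then explicitly concedes that ``the gadget design \ldots\ is where essentially all the work lies; once such a gadget is available, the cascade assembly and the probability bound above are routine.'' But the gadget \emph{is} the theorem. A linear cascade $G_1\to G_2\to\cdots$ with the per-round bound $\eps\,2^{-i}$ presupposes that for every step-counter-plus-$m$-mode Maximizer strategy and every $\delta>0$, Minimizer can hold the probability of winning a single round below~$\delta$; that statement, applied to $G_1$ alone (with its ``lose'' exit redirected to a value-$1$ tail), already yields items~(2) and~(3). Your intuition for how the gadget would work---a binary fork whose correct side is determined by ``global-history data such as which round is this''---does not survive scrutiny in a turn-based perfect-information game: if the rounds have disjoint state spaces (as a cascade suggests), then ``which round'' is readable from the current state; if they share states, you no longer have a clean cascade. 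You correctly note that the obstruction must be about \emph{recall} rather than observation, but you never exhibit a mechanism that forces a bounded-memory Maximizer into the wrong side while keeping both sides at value~$1$.

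The paper's construction is structurally very different from a linear cascade. It is based on the turn-based Big Match on~$\N$ (\cref{def:turn-big-match}), augmented first with an infinitely branching Minimizer restart state (\cref{def:inf-branch-no-MR}) and then with Minimizer-controlled delay gadgets (\cref{def:inf-branch-no-Markov}) that neutralise the step counter. To defeat $k$ additional private memory modes, the paper uses \emph{nesting} rather than chaining: the game $\game_{k+1}$ is the outer Big Match, but at each state $c_i$ Maximizer must first win an embedded copy of $\game_k$ before he is allowed to take his next outer action (\cref{def:nested-BMI}). The key \cref{lem:BMI-no-SC-plus-F} is proved by induction on~$k$, and its induction step is not a pigeonhole count but a two-case analysis: one defines $\alpha(i)$, roughly the probability that Maximizer plays action~$1$ in the outer Big Match right after escaping the $i$-th subgame, and splits on whether $\sum_i\alpha(i)$ diverges (then Minimizer plays~$0$ outside and wins the outer Big Match by a Weierstrass-product estimate) or converges (then Minimizer plays~$1$ outside; inside each subgame there is a \emph{forbidden memory mode}---one that commits Maximizer to small $\alpha$---and Maximizer either enters it, which is bad for him in the outer game, or avoids it, which reduces him to $k$ modes inside the subgame and the induction hypothesis applies). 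None of this apparatus is hinted at in your outline, and your ``averaging/pigeonhole'' sentence is not a substitute for it.

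Two smaller points. First, the theorem as stated does not require acyclicity; the paper's game has the cycle $\lose\to u\to\cdots\to\lose$, and acyclicity (Hope~\ref{hope:acyclic}) is subsumed because a step counter is already granted. Second, $m$ private memory modes means $m$ modes, not $2^m$.
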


This lower bound trivially carries over to concurrent stochastic games with infinite
Minimizer action sets, and for all Borel objectives
that subsume reachability, e.g., B\"uchi, co-B\"uchi, Parity, 
average-reward and total-reward.

To put this result into perspective, we show in \Cref{sec:restrictedmin}
that it is crucial that Minimizer can use infinite branching (resp.\ infinite
actions sets) \emph{infinitely often}.
If the game is restricted such that Minimizer can use infinite actions sets
only \emph{finitely often} in any play then Maximizer still has
uniformly $\eps$-optimal public 1-bit strategies.

While optimal Maximizer strategies need not exist in general,
it is still relevant to study the case where they do exist.
If Minimizer can use infinite branching (resp.\ infinite actions sets)
then \Cref{thm:story-main} shows that Maximizer needs infinite memory even in
that case.
\new{
However, optimal Maximizer strategies in finitely branching turn-based games
can be chosen to use a step counter plus 1~bit of public memory,
while just a step counter is not enough; cf.~\Cref{sec:optimalmax}.
}

Finally, in \Cref{sec:minimizer}, we \new{determine} the strategy complexity of
Minimizer for all cases.
In particular, Minimizer has uniformly $\eps$-optimal memoryless strategies in
turn-based games that are infinitely branching but acyclic.
 
\section{Preliminaries and Notations}\label{sec:prelim}
A \emph{probability distribution} over a countable set $\states$ is a function
$f:\states\to[0,1]$ with $\sum_{\state\in \states}f(\state)=1$.
Let $\support(f) \eqdef \{\state \mid f(\state) >0\}$ denote the support of $f$.
We write 
$\dist(\states)$ for the set of all probability distributions over $\states$. 

\new{
We study perfect-information 2-player stochastic games between
the two players \emph{Maximizer} (also denoted as $\pz$)
and \emph{Minimizer} (also denoted as $\po$).}

\subsection*{2-Player Concurrent Stochastic Games}
A 2-player concurrent game $\game$ is played on a countable set of states $\states$.
For each state $\state \in \states$ there are nonempty countable action sets
$A(\state)$ and $B(\state)$ for Maximizer and Minimizer, respectively.
Let $Z \eqdef \{(\state,a,b) \mid \state \in \states, a \in A(\state), b \in B(\state)\}$.
For every triple $(\state,a,b) \in Z$
there is a distribution $p(\state,a,b) \in \dist(\states)$ over successor states.
\new{We call a state~$s \in \states$ a \emph{sink} state, or \emph{absorbing}, if $p(s,a,b) = s$ for all $a \in A(s)$ and $b \in B(s)$.}
The set of \emph{plays} from an initial state $\state_0$ is given by the infinite
sequences in $Z^\omega$ where the first triple contains $\state_0$.
The game from $\state_0$ is played in stages $\N=\{0,1,2,\dots\}$.
At every stage $t \in \N$, the play is in some state $\state_t$.
Maximizer chooses an action $a_t \in A(\state_t)$ and Minimizer chooses
an action $b_t \in B(\state_t)$. 
The next state $\state_{t+1}$ is then chosen according to the distribution
$p(\state_t,a_t,b_t)$.
(Since we just consider the reachability objective here,
we don't define a reward function.)

\subsection*{2-Player Turn-based Stochastic Games}
\new{
A special subclass of concurrent stochastic games are turn-based games,
where in each round either Maximizer or Minimizer is passive (i.e., has just a single action to play).
Turn-based games are often represented in a form that explicitly separates
local decisions into Maximizer-controlled ones, Minimizer-controlled ones,
and random decisions. Thus one describes the turn-based game as 
$\game=\gametuple$ where the countable set of
states $\states$ is partitioned into
the set~$\zstates$ of states controlled by Maximizer ($\pz$),
the set~$\ostates$ of states controlled by Minimizer ($\po$) and
\emph{random states} $\rstates$.}
The relation $\mathord{\transition}\subseteq\states\times\states$ is the transition relation.
We write $\state\transition{}\state'$ if $\tuple{\state,\state'}\in \transition$, and we assume that each state~$\state$ has a \emph{successor} state $\state'$ with $\state \transition \state'$.
The probability function $\probp:\rstates \to \dist(\states)$ assigns to each random 
state~$\state \in \rstates$ a probability distribution over its 
successor states. 

The game~$\game$ is called \emph{finitely branching} if each state has only finitely many successors;
otherwise, it is \emph{infinitely branching}.
A game is \emph{acyclic} if the underlying  graph~$(S,\transition)$ is acyclic.
Let $\xsymbol \in \{\zsymbol,\osymbol\}$.
At each stage $t$, if the game is in state $\state_t \in \xstates$ 
then player $\xsymbol$ chooses a successor state~$\state_{t+1}$ with~$\state_t\transition{}\state_{t+1}$;
otherwise the game is in a random state~$\state_t\in \rstates$ and proceeds
randomly to~$\state_{t+1}$ with probability~$\probp(\state_t)(\state_{t+1})$.
If $\xstates=\emptyset$, we say that player~$\px$ is \emph{passive}, and the
game is a \emph{Markov decision process (MDP)}.
A \emph{Markov chain} is an MDP where both players are passive.

\subsection*{Strategies and Probability Measures}
\new{The set of \emph{histories} at stage $n$, with $n\in \mathbb{N}$, is denoted by $H_n$. That is,
$H_0 \eqdef \states$ and $H_n \eqdef Z^n \times \states$ for all $n>0$.
Let $H \eqdef \bigcup_{n \in \N} H_n$ be the set of all histories.
For each history $h = (s_0,a_0,b_0) \cdots (s_{n-1},a_{n-1},b_{n-1}) s_{n} \in H_n$, let $\state_h \eqdef s_{n}$ denote the final state in $h$.
}
In the special case of turn-based games, the history can be represented by
the sequence of states $\state_0\state_1\cdots \state_h$, where
$\state_i\transition{}\state_{i+1}$ for all~$i\in \mathbb{N}$.
\new{We say that a history $h \in H$ \emph{visits} the set of states $\reachset \subseteq \states$ at stage $t$ if $\state_t \in \reachset$.}

A mixed action for Maximizer (resp.\ Minimizer) in state $\state$
is a distribution over $A(\state)$ (resp.\ $B(\state)$).
A \emph{strategy} for Maximizer (resp.\ Minimizer)
is a function $\zstrat$ (resp.\ $\ostrat$) that to each history $h \in H$ assigns
a mixed action $\zstrat(h) \in \dist(A(\state_h))$
(resp.\ $\ostrat(h) \in \dist(B(\state_h))$ for Minimizer).
For turn-based games this means instead a distribution
$\zstrat(h) \in \dist(\states)$ over successor states if
$\state_h \in \zstates$ (and similarly for Minimizer with
$\state_h \in \ostates$).
Let $\zstratset$ (resp.\ $\ostratset$) denote the set of strategies for
Maximizer (resp.\ Minimizer).

An initial state $\state_0$ and a pair of strategies $\zstrat, \ostrat$
for Maximizer and Minimizer
induce a probability measure on sets of plays.
We write $\probm_{\game,\state_0,\zstrat,\ostrat}({\playset})$ for the probability of a 
measurable set of plays $\playset$ starting from~$\state_0$.
\new{
More generally, if $f: Z^\omega \to \R$ is a measurable reward function on
plays then we write $\expectval_{\game,\state_0,\zstrat,\ostrat}(f)$
for the expected reward w.r.t.~$f$ and $\probm_{\game,\state_0,\zstrat,\ostrat}$.
The case of measurable sets of plays $\playset$ is subsumed by this, since
we can choose $f$ as the indicator function of $\playset$.
}
These measures are initially defined for the cylinder sets and extended to the sigma
algebra by Carath\'eodory's unique extension
theorem~\cite[]{billingsley-1995-probability}.

\subsection*{Objectives}
We consider the reachability objective for Maximizer.
Given a set $\reachset \subseteq \states$ of states, 
the \emph{reachability} objective
$\reach{\reachset}$ is the set of plays that visit $\reachset$ at least once.
From Minimizer's point of view, this is the dual
\emph{safety} objective~$\safety{\reachset} \eqdef Z^\omega \setminus \reach{\reachset}$
of plays that never visit~$T$.
Maximizer (resp.\ Minimizer) attempts to maximize (resp.\ minimize)
the probability of $\reach{\reachset}$.

\new{
For any subset of states $R\subseteq \states$,
let $\reachn{R}{\reachset}$ denote the objective of visiting $\reachset$ while remaining in~$R$ before visiting~$\reachset$.
For $X \subseteq \N$, let $\reachn{X}{\reachset}$ denote the objective of visiting~$\reachset$
in some number of rounds $n \in X$.
For $n \in \N$ let
$\reachn{n}{\reachset} \eqdef \reachn{{\{k \mid k \le n\}}}{\reachset}$
denote the objective of reaching $\reachset$ in at most $n$ rounds.
}

\subsection*{Value and Optimality}
For a game $\game$, initial state $\state_0$ and objective $\playset$ the \emph{lower value} is
defined as
\[
\alpha(\state_0) \eqdef \sup_{\zstrat \in \zstratset} \inf_{\ostrat \in \ostratset}
\probm_{\game,\state_0,\zstrat,\ostrat}({\playset})
\]
Similarly, the \emph{upper value} is defined as
\[
\beta(\state_0) \eqdef \inf_{\ostrat \in \ostratset} \sup_{\zstrat \in \zstratset}
\probm_{\game,\state_0,\zstrat,\ostrat}({\playset})
\]
The inequality $\alpha(\state_0) \le \beta(\state_0)$ trivially holds.
If $\alpha(\state_0) = \beta(\state_0)$, then this quantity is called the
\emph{value} of the game, denoted by $\valueof{\game,\playset}{\state_0}$.
Reachability objectives, like all Borel objectives, have value \cite[]{Maitra-Sudderth:1998}.
For $\eps >0$, a strategy $\zstrat \in \zstratset$ from $\state_0$ for Maximizer is called
\emph{$\eps$-optimal} if
$\forall \ostrat \in \ostratset.\, \probm_{\game,\state_0,\zstrat,\ostrat}({\playset})
\ge \valueof{\game,\playset}{\state_0} - \eps$.
Similarly, a strategy $\ostrat \in \ostratset$ from $\state_0$ for Minimizer is called
$\eps$-optimal if
$\forall \zstrat \in \zstratset.\, \probm_{\game,\state_0,\zstrat,\ostrat}({\playset})
\le \valueof{\game,\playset}{\state_0} + \eps$.
If a strategy is $0$-optimal we simply call it optimal.

\subsection*{Memory-based Strategies}
A \emph{memory-based strategy} $\zstrat$ of Maximizer is a strategy that
can be described by a tuple $(\memconfset, \memconf_0, \zstrat_\alpha, \zstrat_\memconf)$
where $\memconfset$ is the set of memory modes, $\memconf_0 \in \memconfset$
is the initial memory mode, and
the functions $\zstrat_\alpha$ and $\zstrat_\memconf$ describe how actions are
chosen and memory modes updated; see below.
A play according to $\zstrat$ 
generates a random sequence of memory
states $\memconf_0, \dots, \memconf_t, \memconf_{t+1}, \dots$ from a given set
of memory modes $\memconfset$, where $\memconf_t$ is the memory mode at
stage $t$.
The strategy $\zstrat$ selects the action at stage $t$ according to a distribution that
depends only on the current state $\state_t$ and the memory $\memconf_t$.
Maximizer's action $a_t$ is chosen via a distribution
$\zstrat_\alpha(\state_t, \memconf_t) \in \dist(A(\state_t))$.
(Minimizer's action is $b_t$).
The next memory mode $\memconf_{t+1}$ of Maximizer is chosen according to a
distribution $\zstrat_m(\state_t, a_t, b_t, \state_{t+1}) \in \dist(\memconfset)$
that depends on the chosen actions and the observed outcome.
The memory is \emph{private} if the other player
cannot see the memory mode. Otherwise, it is \emph{public}.

Let $\zstrat[\memconf]$ denote the memory-based strategy $\zstrat$ that starts
in memory mode $\memconf$.
In cases where the time is relevant and the strategy has access to the time (by
using a step counter)
$\zstrat[\memconf](t)$ denotes the strategy $\zstrat$ in memory mode $\memconf$ at time~$t$.

A \emph{finite-memory strategy} is one where $\card{\memconfset} < \infty$.
A \emph{$k$-memory strategy} is a memory-based strategy with at most $k$
memory modes, i.e., $\card{\memconfset} \le k$.
A $2$-memory strategy is also called a \emph{1-bit strategy}.
A strategy is \emph{memoryless} (also called positional)
if $\card{\memconfset}=1$.
A strategy is called \emph{Markov} if it uses only a step counter but no additional memory.
A strategy is \emph{deterministic} (also called pure) if the distributions for the
action and memory update are Dirac. Otherwise, it is called \emph{randomized}
(or mixed).
\new{Memoryless randomized strategies are also called MR 
and memoryless deterministic strategies are also called MD.
Similarly, randomized (resp.\ deterministic) finite-memory strategies are also called FR (resp.\ FD).
}

\new{
A finite-memory strategy $\zstrat$ is called \emph{uniformly $\eps$-optimal}
for an objective $\playset$ iff
$\forall \state \in \states.\forall \ostrat.\, \probm_{\game,\state,\zstrat[\memconf_0],\ostrat}({\playset})
\ge
\valueof{\game,\playset}{\state} - \eps$, i.e.,
the strategy performs well from every state.
}

The definitions above carry over directly to the simpler turn-based games
where we have
chosen/observed transitions instead of actions.
 
\section{Uniform Strategies in Concurrent Games}\label{sec:uniconcurr}

\begin{figure}[t]
\begin{minipage}[l]{0.3\textwidth}
    \includegraphics[width=\textwidth, angle=0]{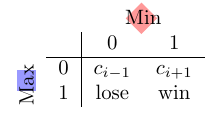}
\end{minipage}
\hfill
\begin{minipage}[r]{0.6\textwidth}
    \includegraphics[width=\textwidth, angle=0]{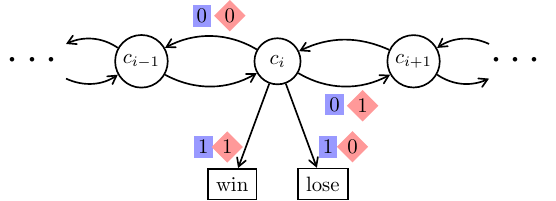}
\end{minipage}
\smallskip
   \caption{The Concurrent Big Match on~$\Z$; see~\Cref{def:concurrent-big-match}. On the right is a depiction of the game graph; on the left we see how joint actions from state $c_i$ are resolved.
   }
 \label{fig:bigmatchZconcurr}
\end{figure}
First, we consider the concurrent version of the Big Match on the integers,
in the formulation of \cite{FLS:1995}.
\begin{definition}[Concurrent Big Match on~$\mathbb{Z}$]\label{def:concurrent-big-match}
This game is shown in~\Cref{fig:bigmatchZconcurr}.
The state space is $\{c_i \mid i \in \Z\} \cup \{\win,\lose\}$, where the states $\win$
and $\lose$ are absorbing.
Both players have the action set $\{0,1\}$ at each state.
If Maximizer chooses action $1$ in $c_i$ then the game is decided in this round:
If Minimizer chooses $0$ (resp.\ $1$) then the game goes to $\lose$ (resp.\ $\win$).
If Maximizer chooses action $0$ in $c_i$ and Minimizer chooses action $0$ (resp.\ $1$)
then the game goes to $c_{i-1}$ (resp.\ $c_{i+1}$).
Maximizer wins iff state $\win$ is reached or
$\liminf\{i \mid \mbox{$c_i$ visited}\} = - \infty$.
\end{definition}
\begin{theorem}[{\cite{FLS:1995}, Theorem 1.1}]\label{thm:FLS1995}
In the concurrent Big Match on~$\mathbb{Z}$, shown in~\Cref{fig:bigmatchZconcurr}, every state $c_i$
has value $1/2$. An optimal strategy for Minimizer is to toss a fair
coin at every stage.
Maximizer has no optimal strategy, but for any start state $c_x$ and any positive integer
$N$, he can win with probability $\ge N/(2N+2)$ by choosing action $1$
with probability $1/(n+1)^2$ whenever the current state is
$c_i$ with $i=x+N-n$ for some~$n \ge 0$.
\end{theorem}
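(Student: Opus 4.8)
The plan is to establish the three claims separately: (i) every $c_i$ has value $1/2$; (ii) Minimizer's fair-coin strategy guarantees Maximizer at most $1/2$; and (iii) the described Maximizer strategy guarantees $\ge N/(2N+2)$ from $c_x$, which approaches $1/2$ as $N\to\infty$, so combined with (ii) the value is exactly $1/2$ and Maximizer has no optimal strategy. Since this is a known result from \cite{FLS:1995}, the main work is to reproduce the two strategy analyses cleanly; I expect the Maximizer side to be the delicate one.

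\emph{Minimizer's bound (upper value $\le 1/2$).} Fix the Minimizer strategy $\ostrat$ that plays $0$ and $1$ each with probability $1/2$ at every stage, independently of history. Consider any Maximizer strategy and any start state $c_x$. The key observation is to track, as a stochastic process, the "index position" together with the event that Maximizer has already played action $1$. Conditioned on the game not yet being decided, at the round where Maximizer first plays action $1$ (if ever), Minimizer independently sends the play to $\win$ or $\lose$ with probability $1/2$ each; so the contribution to $\reach{\win}$ from "Maximizer eventually commits" is at most $1/2$ times the probability of that event. On the complementary event, Maximizer always plays $0$, so the index performs a simple symmetric random walk on $\mathbb{Z}$ driven by Minimizer's fair coins; this walk is recurrent, hence almost surely $\liminf$ of the index is $-\infty$ \emph{and} $+\infty$, but crucially we must be careful: the win condition $\liminf\{i : c_i \text{ visited}\} = -\infty$ is satisfied on this event. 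So that event contributes its full probability to $\reach{\win}$. This seems to give a bound of the form $\tfrac12 q + (1-q)$ where $q$ is the probability Maximizer eventually commits, which is \emph{not} $\le 1/2$. The resolution — and the subtle point — is that Minimizer's fair-coin strategy is a "suicide" strategy here and is \emph{not} the one that caps the value at $1/2$ in this $\liminf$ formulation; rather one shows $\le 1/2$ by a different Minimizer strategy, or by the standard Big-Match argument where Minimizer biases toward $1$ just enough. I would follow \cite{FLS:1995} and use the martingale/potential argument: define a suitable bounded submartingale (for Minimizer) in terms of the index $i$ and the "committed" flag, e.g.\ something like $2^{-k}$-type weights that make each of Maximizer's two options at $c_i$ non-improving against the fair coin, and conclude by optional stopping. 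The honest statement is that the theorem already asserts optimality of the fair coin, so I would simply cite the computation from \cite{FLS:1995} verifying that the fair-coin strategy is optimal, i.e.\ $\sup_{\zstrat}\probm_{\game,c_i,\zstrat,\ostrat}(\reach{\win}) \le 1/2$, and not re-derive it.

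\emph{Maximizer's bound (lower value $\ge N/(2N+2)$).} Fix $N$ and start state $c_x$. Maximizer uses a step counter: at stage $n \ge 0$, if the current state is $c_i$ with $i = x + N - n$, play action $1$ with probability $1/(n+1)^2$ and action $0$ otherwise; at any state not of this form (or after the index "escapes" the intended window) play $0$ forever. The plan is a direct computation against an arbitrary Minimizer strategy. Let $A_n$ be the event "the game is decided in favour of $\win$ by Maximizer committing at stage $n$" and $B_n$ the event "decided in favour of $\lose$ at stage $n$". Tracking the index relative to the counter, one shows that whenever Maximizer is at the "intended" state $c_{x+N-n}$ at stage $n$, committing sends the play to $\win$ exactly when Minimizer plays $1$ and to $\lose$ exactly when Minimizer plays $0$; and if Maximizer does not commit, Minimizer's action moves the index by $\pm1$ while the counter also advances, so the offset $i - (x+N-n)$ changes by $0$ (if Minimizer plays $1$, index $+1$, counter $+1$, offset unchanged — wait, offset is $i-(x+N-n)$, and both $i$ and $x+N-n$ change, so need to recheck the bookkeeping) — this index/counter bookkeeping is exactly the place to be careful, and I would set it up so that the play stays on the "diagonal" $i = x+N-n$ as long as Minimizer keeps playing the move that pushes toward the decision, mirroring the classical Big Match coupling. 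Summing a geometric-type series in $1/(n+1)^2$ and using $\sum_{n\ge 0} 1/(n+1)^2$-tail estimates, one obtains the uniform lower bound $N/(2N+2)$ on $\probm(\reach{\win})$ regardless of Minimizer's strategy; I would reproduce this calculation following \cite{FLS:1995}.

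\emph{Conclusion.} From Minimizer's side, $\beta(c_i) \le 1/2$; from Maximizer's side, $\alpha(c_i) \ge \sup_N N/(2N+2) = 1/2$. Since $\alpha \le \beta$ always, $\valueof{\game,\reach{\win}}{c_i} = 1/2$ for every $i$. Finally, no Maximizer strategy is optimal: by the same martingale argument as for Minimizer's bound (now showing each Maximizer strategy leaves a strictly positive "gap"), any single strategy achieves value strictly below $1/2$ from some (indeed every) $c_i$ — intuitively because a strategy that ever commits with total positive probability loses half of that probability mass to $\lose$, while a strategy that never commits wins only on the null-probability recurrence event under Minimizer's optimal play; I would cite \cite{FLS:1995} for the precise argument. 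The main obstacle in a self-contained write-up would be the Minimizer optimality argument (claim (ii)) and the non-existence of Maximizer optimal strategies, both of which require the Big-Match-style potential/martingale analysis rather than a naive random-walk argument; since the excerpt permits citing \cite{FLS:1995}, I would lean on it for those two points and give the explicit computation only for Maximizer's $N/(2N+2)$ guarantee.
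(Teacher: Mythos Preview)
The paper does not prove this theorem: it is stated as \cite[Theorem~1.1]{FLS:1995} and used as a black box. Your bottom line --- lean on \cite{FLS:1995} for the delicate parts --- therefore matches the paper's treatment exactly.

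Your attempted self-contained arguments, however, contain two concrete problems.

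On the Maximizer side, you misread the strategy as time-indexed. In the statement, $n$ is not the stage counter; it is defined implicitly by the current state via $n = x+N-i$ whenever $i \le x+N$. The strategy is \emph{positional}: at $c_i$ with $i \le x+N$, play action~$1$ with probability $1/(x+N-i+1)^2$, and at $c_i$ with $i > x+N$ play action~$0$. Your ``offset'' bookkeeping trouble stems entirely from this misreading; with the positional reading there is no diagonal to track, and the $N/(2N+2)$ bound is the standard Blackwell--Ferguson potential computation.

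On the Minimizer side, your confusion is well-founded rather than a gap in your reasoning: under the winning condition exactly as written in \Cref{def:concurrent-big-match} (Maximizer wins if $\liminf$ of the visited indices equals $-\infty$), the fair coin cannot be optimal, because if Maximizer always plays~$0$ the index performs a simple symmetric random walk, which is recurrent and hence has $\liminf = -\infty$ almost surely, so Maximizer wins with probability~$1$. Your proposed ``resolution'' (that some other Minimizer strategy caps the value at $1/2$) contradicts the very theorem you are asked to prove. The honest move is to flag this as an apparent discrepancy between the paper's transcription of the objective and the optimality claim for the fair coin, and to defer to \cite{FLS:1995} for the precise formulation under which the fair coin is optimal --- which is, again, exactly what the paper does.
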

The  concurrent Big Match on~$\mathbb{Z}$ is not a reachability game, due to
its particular winning condition. However, the following slightly modified
version (played on $\N$) is a reachability game.

\begin{definition}[Concurrent Big Match on $\N$]\label{def:concurrent-big-match-zplus}
This game is shown in~\Cref{fig:bigmatchNconcurr}.
The state space is $\{c_i \mid i \in \N\} \cup \{\lose\}$ where 
$\lose$ and $c_0$ are absorbing.
Both players have the action set $\{0,1\}$ at each state.
If Maximizer chooses action $1$ in $c_i$ then the game is decided in this round:
If Minimizer chooses $0$ (resp.\ $1$) then the game goes to $\lose$ (resp.\ $c_0$).
If Maximizer chooses action $0$ in $c_i$ and Minimizer chooses action $0$ (resp.\ $1$)
then the game goes to $c_{i-1}$ (resp.\ $c_{i+1}$).

Maximizer wins iff $c_0$ is reached, i.e., we have the reachability objective
$\reach{\{c_0\}}$.
\end{definition}
The following theorem summarizes results on the concurrent Big Match on $\N$
by combining results from~\cite{FLS:1995} and~\cite{Raghavan-Nowak:1991}.

\begin{figure}[t]
\begin{minipage}[l]{0.3\textwidth}
    \includegraphics[width=\textwidth, angle=0]{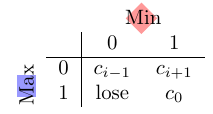}
  \end{minipage}
\hfill
\begin{minipage}[r]{0.65\textwidth}

    \includegraphics[width=\textwidth, angle=0]{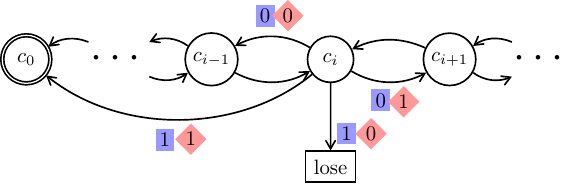}
\end{minipage}
\caption{The Concurrent Big Match on~$\N$; see~\Cref{def:concurrent-big-match-zplus}. On the right is a depiction of the game graph; on the left we see how joint actions from state $c_i$ are resolved.
   }
\label{fig:bigmatchNconcurr}
\end{figure}

\begin{theorem}\label{thm:BMI-zplus}
    Denote by~$\game$ the concurrent Big Match game  on $\N$, as shown in~\Cref{fig:bigmatchNconcurr}, and let \new{$x\in\N$}. Then,
\begin{enumerate}
\item\label{thm:BMI-zplus-1}
$\valueof{\game}{c_x} = (x+2)/(2x+2) \ge 1/2$.
\item\label{thm:BMI-zplus-2}
For every start state $c_x$ and $N\ge 0$,
Maximizer can win with probability $\ge N/(2N+2)$ by choosing action $1$
with probability $1/(n+1)^2$ whenever the current state is
$c_i$ with $i=x+N-n$ for some $n \ge 0$.
\item\label{thm:BMI-zplus-3}
For any $\eps < 1/2$
there is no uniformly $\eps$-optimal memoryless (MR) strategy for
Maximizer.
Every MR Maximizer strategy $\zstrat$ attains arbitrarily little from $c_x$ as
$x \to \infty$.
Formally,
$
\limsup_{x \to \infty}\inf_{\ostrat}\probm_{\game,c_x,\zstrat,\ostrat}(\reach{\{c_0\}})=0.
$
\end{enumerate}
\end{theorem}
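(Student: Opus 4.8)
The plan is to prove the three items largely independently, leaning on \Cref{thm:FLS1995} for the Maximizer side and on a short supermartingale argument for the Minimizer side. Item~(2) is the Maximizer strategy of \Cref{thm:FLS1995} transplanted to the $\N$-game, with ``reaching $c_0$'' playing the role of ``reaching $\win$''. The point to check is that the estimate survives the transplant. Under that strategy, started in $c_x$, positions change by $\pm1$ and Maximizer plays action~$1$ with probability~$1$ at $c_{x+N}$, so the play stays in the finite window $\{c_0,\dots,c_{x+N},\lose\}$ and, at each visit of an interior state, is decided with probability bounded away from~$0$; hence it is decided almost surely, and the same type of elementary calculation underlying \Cref{thm:FLS1995} applies to this confined play, yielding the $N/(2N+2)$ guarantee. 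Letting $N\to\infty$ gives $\valueof{\game}{c_x}\ge 1/2$, the lower half of~(1).

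For the upper bound in~(1), put $v_i\eqdef(i+2)/(2i+2)$ and extend it by $v_{c_0}=1$, $v_{\lose}=0$; let Minimizer play at every $c_i$ ($i\ge1$) action~$1$ with probability $v_i$ and action~$0$ with probability $1-v_i$. A one-line identity gives $v_i=(1-v_i)v_{i-1}+v_iv_{i+1}$, which says exactly that, against this Minimizer strategy, \emph{both} Maximizer actions at $c_i$, and hence every mixed action, leave the expected successor value equal to $v_i$. So $t\mapsto v(\state_t)$ is a bounded martingale under $\probm_{\game,c_x,\zstrat,\ostrat}$ for \emph{every} $\zstrat$; it converges almost surely to some $v_\infty$ with $\mathbb E[v_\infty]=v_x$, and $v_\infty=1$ on $\reach{\{c_0\}}$, $=0$ on $\reach{\{\lose\}}$, and $\ge0$ elsewhere, whence $\probm_{\game,c_x,\zstrat,\ostrat}(\reach{\{c_0\}})\le v_x$. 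To squeeze the lower bound up to $v_x$ I would compare $\game$ with the finite concurrent reachability game on $\{c_0,\dots,c_m,\lose\}$ in which $c_m$ is made absorbing-losing: its value satisfies $w^{(m)}_x\le\valueof{\game}{c_x}$ (truncation only removes Maximizer options), the $w^{(m)}_i$ satisfy the matrix-game Bellman equations on the window, and $w^{(m)}_x\uparrow v_x$ as $m\to\infty$, matching the upper bound. (If \cite{FLS:1995} already computes this $\N$-variant's value, one may quote it instead.)

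For~(3), fix an MR Maximizer strategy $\zstrat$ and let $p_i$ be the probability it plays action~$1$ at $c_i$. If $\sum_i p_i<\infty$, Minimizer plays action~$1$ forever; then from $c_x$ the position only ever increases and Maximizer reaches $c_0$ iff he plays action~$1$ at some point, which has probability $1-\prod_{i\ge x}(1-p_i)$, and this tends to~$0$ as $x\to\infty$. If $\sum_i p_i=\infty$, Minimizer instead deploys the adaptive ``catching'' strategy from \cite{Raghavan-Nowak:1991} for the Big Match on $\N$: she plays action~$1$ most of the time but interleaves the trap action~$0$ with a frequency tuned to the observed history, so that Maximizer's cash-in attempt --- which now occurs almost surely --- from a sufficiently far $c_x$ is routed to $\lose$ with probability arbitrarily close to~$1$; this makes $\inf_{\ostrat}\probm_{\game,c_x,\zstrat,\ostrat}(\reach{\{c_0\}})\le\eps$ for all large $x$. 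Either way $\limsup_{x\to\infty}\inf_{\ostrat}\probm_{\game,c_x,\zstrat,\ostrat}(\reach{\{c_0\}})=0$, and since $\valueof{\game}{c_x}\ge 1/2$ this shows no MR strategy is uniformly $\eps$-optimal for $\eps<1/2$.

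\textbf{Main obstacle.} The two technically loaded steps are the exact constant in~(1) --- obtaining $v_x$ rather than merely $1/2$ needs a careful limit of the finite-truncation values (or a refinement of the \cite{FLS:1995} analysis) --- and the $\sum_i p_i=\infty$ case of~(3), where one must actually construct Minimizer's adaptive trapping strategy; both can be offloaded to the cited works, so the genuinely new content is the succinct supermartingale argument for the upper bound in~(1).
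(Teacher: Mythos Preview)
Your treatment of items~(1) and~(2) is fine. The martingale upper bound for~(1) is a pleasant self-contained alternative to the paper's direct citation of \cite[Proposition~5.1]{FLS:1995}; the identity $(1-v_i)v_{i-1}+v_iv_{i+1}=v_i$ checks out, and the paper simply quotes the cited result for the exact value. For~(2) the paper gives the one-line coupling argument (any play that wins in the $\Z$-game, by reaching $\win$ or by $\liminf=-\infty$, corresponds to a play that has already hit $c_0$ in the $\N$-game), which is slightly slicker than your rederivation but equivalent.

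The genuine gap is in item~(3), second case $\sum_i p_i=\infty$. You appeal to an ``adaptive catching strategy from \cite{Raghavan-Nowak:1991}'' in which Minimizer ``plays action~$1$ most of the time but interleaves the trap action~$0$''. That is neither what \cite{Raghavan-Nowak:1991} does nor a workable plan: the game is concurrent, so Minimizer cannot synchronize her action-$0$ plays with Maximizer's action-$1$ plays, and every round in which she plays~$1$ hands Maximizer probability~$p_i$ of jumping straight to~$c_0$. If the position drifts upward visiting every $c_i$ with $i\ge x$, then since $\sum_{i\ge x}p_i=\infty$ Maximizer would win almost surely---the opposite of what you need.

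The correct argument, which is exactly what both the paper and \cite[Lemma~4]{Raghavan-Nowak:1991} do, is far simpler: Minimizer plays action~$0$ at every step. Then from $c_x$ the only possible transitions are $c_i\to c_{i-1}$ (Max plays~$0$) or $c_i\to\lose$ (Max plays~$1$), so reaching $c_0$ requires Max to play~$0$ at each of $c_x,c_{x-1},\dots,c_1$, which happens with probability $\prod_{k=1}^x(1-p_k)\le \frac{1}{1+\sum_{k=1}^x p_k}\to 0$ by \Cref{prop:Weierstrass}. Replace your adaptive paragraph with this two-line computation and item~(3) is complete.
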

\begin{proof}
Item~\ref{thm:BMI-zplus-1} follows directly from \cite[Proposition 5.1]{FLS:1995}.

Item~\ref{thm:BMI-zplus-2} follows from \cref{thm:FLS1995}, since it is easier
for Maximizer to win in the game of \cref{def:concurrent-big-match-zplus}
than in the game of \cref{def:concurrent-big-match}.

Towards item~\ref{thm:BMI-zplus-3}, we follow the proof of \cite[Lemma 4]{Raghavan-Nowak:1991}.
Let $\zstrat$ be an MR Maximizer strategy and $f(x)$ the probability that
$\zstrat$ picks action $1$ at state $c_x$.
There are two cases.

In the first case $\sum_{x \ge 1} f(x) < \infty$. Let $\ostrat$ be the strategy of
Minimizer that always picks action $1$. For all $x \ge 1$ we have
\begin{align*}
    \probm_{\game,c_x,\zstrat,\ostrat}(\reach{\{c_0\}})
    \quad
    \le \quad &\ f(x)  + (1-f(x))f(x+1) \\
        &+ (1-f(x))(1-f(x+1))f(x+2) \\
        &+ \dots\ \le \sum_{k=x}^\infty f(k) < \infty.
\end{align*}
Thus $\limsup_{x \to \infty}\probm_{\game,c_x,\zstrat,\ostrat}(\reach{\{c_0\}})=0$.

In the second case $\sum_{x \ge 1} f(x) = \infty$.
Let $\ostrat$ be the strategy of
Minimizer that always picks action $0$. For all $x \ge 1$ we have
\begin{align*}
\probm_{\game,c_x,\zstrat,\ostrat}(\reach{\{c_0\}})
&=(1-f(x))(1-f(x-1))\cdots (1-f(1)) \\
&=\prod_{k=1}^x (1-f(k))\\
&\le\frac{1}{1+ \sum_{k=1}^x f(k)}
\end{align*}
\new{For the final inequality we refer the reader to \cref{prop:Weierstrass} in Appendix \ref{app:technical}.}
Thus $\limsup_{x \to \infty}\probm_{\game,c_x,\zstrat,\ostrat}(\reach{\{c_0\}})=0$.

Since, by item~\ref{thm:BMI-zplus-1}, 
$\valueof{\game}{c_x} = (x+2)/(2x+2) \ge 1/2$
for every $x \ge 0$, the MR strategy $\zstrat$ cannot
be uniformly $\eps$-optimal for any $\eps < 1/2$.
\end{proof}

While uniformly $\eps$-optimal Maximizer strategies cannot be memoryless,
we show that they can be chosen with just 1 bit of public memory in
\cref{thm:conc-reach-uniform}.

First we need an auxiliary lemma that is essentially known; see, e.g.,
\cite{MaitraSudderth:DiscreteGambling} Section 7.7, and
\cite{Flesch-Predtetchinski-Sudderth:2020} Theorem 12.1.
We extend it slightly to fit our purposes, i.e., for the proof of
\cref{thm:conc-reach-uniform} below.

\begin{lemma} \label{lem:conc-reach-non-uniform}
  Consider a concurrent game with countable state space~$S$,
  \emph{finite} action sets for Minimizer at every state and
  unrestricted (possibly infinite) action sets for Maximizer.

  For the
  reachability objective $\reach{\reachset}$, for every finite set $S_0 \subseteq S$ of initial states, and for every $\eps>0$,
there exists a memoryless strategy~$\sigma$ and a finite set of states~$R \subseteq S$ such that
for all $s_0 \in S_0$
 \[
  \inf_{\ostrat \in \ostratset} \probm_{\state_0,\zstrat,\ostrat}(\reachn{R}{\reachset}) \ \ge \ \valueof{\reach{\reachset}}{\state_0} - \eps\,,
 \]
where $\reachn{R}{\reachset}$ denotes the objective of visiting $\reachset$ while remaining in~$R$ before visiting~$T$.
If the game is turn-based and finitely branching at Minimizer-controlled
states, there is a deterministic (i.e., MD) such strategy~$\sigma$.
\end{lemma}
\begin{proof}
Since Minimizer's action sets are finite,
by \cite[Theorem 11.1]{Flesch-Predtetchinski-Sudderth:2020}, the game has \new{a} value.
Moreover, using the finiteness of Minimizer's action sets again,
it follows from
\cite[Theorem 12.1]{Flesch-Predtetchinski-Sudderth:2020}
that for all $s \in S$
\begin{equation} \label{eq:MS-DiscreteGambling-7-7}
 \lim_{n \to \infty} \valueof{\reachn{n}{\reachset}}{s} \ = \ \valueof{\reach{\reachset}}{s}\,,
\end{equation}
where $\reachn{n}{\reachset}$ denotes the objective of visiting~$\reachset$
within at most $n$ rounds of the game.

To achieve the uniformity (across the set~$S_0$ of initial states) required by the statement of the lemma, we add a fresh ``random'' state (i.e., a state in which each player has only a single action available) that branches uniformly at random to a state in~$S_0$.
Call this state~$\hat s_0$.
The value of~$\hat s_0$ is the arithmetic average of the values of the states in~$S_0$.
It follows that every $(\eps/\card{S_0})$-optimal memoryless strategy for Maximizer in~$\hat s_0$ must be $\eps$-optimal in every state in~$S_0$.
So it suffices to prove the statement of the lemma under the assumption that $S_0$ is a singleton, say $S_0 = \{s_0\}$.

Fix $\eps>0$ and let $\eps' \eqdef \eps/4$.
By Equation~\eqref{eq:MS-DiscreteGambling-7-7} there is a number $n$ such that $\valueof{\reachn{n}{\reachset}}{s_0} = \valueof{\reach{\reachset}}{s_0} - \eps'$.
Let $\sigma$ be a Maximizer strategy such that
\begin{equation}\label{eq:conc-reach-non-uniform-1}
 \inf_{\ostrat \in \ostratset} \probm_{s_0,\sigma,\ostrat}(\reachn{n}{\reachset}) \ \ge \ \valueof{\reach{\reachset}}{s_0} - 2\eps'\,.
\end{equation}
For each $m$ with
$0 \le m \le n$ we will inductively construct a finite subset
$H'_m \subseteq H_m$ of the $m$-step histories of
plays from $s_0$ that are compatible with $\sigma$ 
such that, for every Minimizer strategy $\ostrat$,
the plays in $H_m'Z^\omega$ have probability $\ge 1 - \frac{m}{n} \eps'$,
where the event $H_m'Z^\omega$ is defined as the set of
continuations of the $m$-step histories in $H'_m$.
Formally,
\begin{equation}\label{eq:conc-reach-non-uniform-2}
 \inf_{\ostrat \in \ostratset} \probm_{s_0,\sigma,\ostrat}(H_m'Z^\omega) \ \ge \ 1 - \frac{m}{n} \eps'
\end{equation}

The base case of $m=0$ is trivial.
Now we show the inductive step from $m$ to $m+1$.
For any of the finitely many histories $h \in H'_m$ ending in some state $\state$,
consider the chosen mixed actions $a \in \dist(A(s))$ and $b \in \dist(B(s))$
by Maximizer and Minimizer, respectively.
Since $B(s)$ is finite, $b$ has finite support.
However, the distribution $a$ can have infinite support.
We fix a sufficiently large finite subset $A'$ of the support of $a$
that has probability mass $\ge 1-\frac{\eps'}{2n}$.
Consider the set $\gamma(s)$ of possible successor states  of $s$.
Since the size of the support of $b$ is upper bounded by the finite
number $\card{B(s)}$ independently of $\ostrat$, we can pick a finite subset
$\gamma'(s) \subseteq \gamma(s)$ sufficiently large such that both Maximizer's
chosen action is inside $A'$ and the chosen successor state is inside
$\gamma'(s)$ with probability $\ge 1 - \frac{1}{n} \eps'$.
We then define $H'_{m+1}$ as the finitely many one-round extensions of histories in $H'_m$
with Maximizer action in $A'$ and successor state in $\gamma'(s)$.
Using the induction hypothesis and the properties above, we obtain that
\new{
\begin{align*}
  \inf_{\ostrat \in \ostratset} \probm_{s_0,\sigma,\ostrat}(H_{m+1}'Z^\omega) 
  & \ge \ \inf_{\ostrat \in \ostratset} \probm_{s_0,\sigma,\ostrat}(H_m'Z^\omega) (1-\frac{1}{n}\eps')\\
  & \ge \ (1-\frac{m}{n}\eps') (1-\frac{1}{n}\eps') \\ 
  & = \ 1 - \frac{m+1}{n}\eps' + \frac{m}{n^2} \eps'^2\\ 
  & \ge \ 1 - \frac{m+1}{n} \eps'.
\end{align*}
}
This completes the induction step, and thus we obtain \eqref{eq:conc-reach-non-uniform-2}.

For every $0 \le m \le n$ let $R_m$ be the finite set of
states that are visited during the first $m$ steps of the histories in $H_m'$.
Then $R \eqdef R_n$ is a finite set of states.
It follows that
\new{
\begin{align*}
  & \inf_{\ostrat \in \ostratset} \probm_{s_0,\sigma,\ostrat}(\reachn{R}{\reachset})\\
  &\ge
  \inf_{\ostrat \in \ostratset} \probm_{s_0,\sigma,\ostrat}(H'_n Z^\omega \cap \reachn{R}{\reachset})\\
  &\ge
  \inf_{\ostrat \in \ostratset} \probm_{s_0,\sigma,\ostrat}(H'_n Z^\omega \cap
    \reachn{n}{\reachset}) & \mbox{set incl.}\\
  & =
  \inf_{\ostrat \in \ostratset}
    (\probm_{s_0,\sigma,\ostrat}(\reachn{n}{\reachset}) -
\probm_{s_0,\sigma,\ostrat}(\overline{H'_n Z^\omega} \cap \reachn{n}{\reachset}))\\
  & \ge
  \inf_{\ostrat \in \ostratset}
    (\probm_{s_0,\sigma,\ostrat}(\reachn{n}{\reachset}) -
\probm_{s_0,\sigma,\ostrat}(\overline{H'_n Z^\omega}))\\
  & \ge
  \inf_{\ostrat \in \ostratset}
    (\probm_{s_0,\sigma,\ostrat}(\reachn{n}{\reachset})) -
 \sup_{\ostrat \in \ostratset}(\probm_{s_0,\sigma,\ostrat}(\overline{H'_n Z^\omega}))\\
  & =
  \inf_{\ostrat \in \ostratset}
    (\probm_{s_0,\sigma,\ostrat}(\reachn{n}{\reachset})) -
 (1-\inf_{\ostrat \in \ostratset} \probm_{s_0,\sigma,\ostrat}(H'_n Z^\omega))\\
  &\ge
  \inf_{\ostrat \in \ostratset}
    \probm_{s_0,\sigma,\ostrat}(\reachn{n}{\reachset}) - \eps' & \mbox{by \eqref{eq:conc-reach-non-uniform-2}}\\
  &\ge 
  \valueof{\reach{\reachset}}{s_0} - 3\eps'. & \mbox{by \eqref{eq:conc-reach-non-uniform-1}}
\end{align*}
Note that the restriction on the time horizon, $n$, has been lifted here.
In particular, \verynew{the} above implies that
\begin{equation}\label{eq:conc-reach-non-uniform-3}
\valueof{\reachn{R}{\reachset}}{s_0} \ge \valueof{\reach{\reachset}}{s_0} - 3\eps'.
\end{equation}
}
The restriction of the objective to the (finitely many) states in~$R$ means that we have effectively another reachability game.
It is known \cite[Corollary~3.9]{Secchi97} that for concurrent games with finite action sets and reachability objective, Maximizer has a memoryless $\eps'$-optimal strategy.
In turn-based games with finitely many states, he even has an MD optimal strategy~\cite[]{CONDON1992203}.
So Maximizer has a memoryless (in the turn-based case: MD) strategy $\sigma'$ such that
\new{
\begin{align*}
& \inf_{\ostrat \in \ostratset} \probm_{s_0,\sigma',\ostrat}(\reachn{R}{\reachset})\\
& \ge \valueof{\reachn{R}{\reachset}}{s_0} - \eps' & \mbox{by $\eps'$-optimality of $\sigma'$} \\
& \ge \valueof{\reach{\reachset}}{s_0} - 4\eps'    & \mbox{by \eqref{eq:conc-reach-non-uniform-3}}\\ 
&  =\valueof{\reach{\reachset}}{s_0} - \eps. 
\end{align*}
}
\end{proof}

\begin{restatable}{theorem}{concreachuniform}\label{thm:conc-reach-uniform}
For any concurrent game with finite action sets and reachability objective, for any $\eps>0$,
Maximizer has a uniformly $\eps$-optimal public 1-bit strategy.
If the game is turn-based and finitely branching, Maximizer has a deterministic such strategy.
\end{restatable}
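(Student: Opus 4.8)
The plan is to build the uniformly $\eps$-optimal 1-bit strategy out of a (non-uniform) family of $\eps$-optimal strategies, one for each start state, and to use the single bit of public memory as a "commitment flag" that records whether Maximizer has already chosen which local $\eps$-optimal strategy to follow. Concretely, fix $\eps>0$ and, for the concurrent case, invoke \cref{prop:story-state-of-the-art} (Secchi) to get, for every state $s$, a positional (MR) strategy $\zstrat_s$ that is $(\eps/2)$-optimal from $s$; for the turn-based finitely branching case, invoke \cite[Proposition 5.7.c]{kucera_2011} (equivalently \Cref{lem:conc-reach-non-uniform}) to get an MD strategy $\zstrat_s$ that is even exactly optimal from $s$, i.e. $\probm_{\game,s,\zstrat_s,\ostrat}(\reach{T})\ge\valueof{\game,\reach{T}}{s}$ for all $\ostrat$. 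The two memory modes will be $\memconf_0$ ("uncommitted") and $\memconf_1$ ("committed"); once in $\memconf_1$ the strategy never leaves it, so the bit is monotone, and since it is public both players see the same value.

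The key idea for the update and action functions: while in memory mode $\memconf_0$ and visiting state $s_t$, Maximizer plays according to $\zstrat_{s_t}(s_t)$ — i.e. the local optimal strategy \emph{pretending the game just started in $s_t$} — and then at the next step, based on the observed transition, he decides whether to "commit". The commitment rule is designed so that Maximizer commits (flips the bit to $\memconf_1$) precisely when the current state's value is large, specifically when $\alpha(s_{t+1})\ge\alpha(s_t)$ is witnessed along the play and some threshold condition is met; after committing in state $s$ he henceforth plays $\zstrat_s$ forever. Since $\zstrat_s$ is MD (or MR), executing it after committing needs no memory beyond "which $s$ we committed at" — but note we do \emph{not} need to store $s$: the committed strategy can be taken to be a single global positional strategy. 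Indeed, the real simplification is this: in mode $\memconf_1$ use the global positional strategy of \cref{prop:story-state-of-the-art}/\cite{kucera_2011} read off the \emph{current} state, which is $\eps/2$-optimal from wherever you happen to be; the role of mode $\memconf_0$ is only to "shepherd" the play, via a sequence of value-non-decreasing moves, from the true start state into a region where starting the positional committed strategy loses at most another $\eps/2$ — so the bit records "I have reached a good-enough state and switched on the positional strategy".

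The main obstacle — and the place where the real work lies — is arranging the commitment (bit-flip) rule so that (i) with probability close to $1$ under the play, Maximizer commits in finite time at a state $s$ from which the committed positional strategy guarantees $\ge\valueof{\game}{s}-\eps/2\ge\valueof{\game}{s_0}-\eps$ (using that values are non-decreasing along the shepherding phase), while (ii) on the small-probability event that Maximizer never commits, he has anyway already reached $T$ or the play's value has decayed so little that the shortfall is absorbed into $\eps$. This is exactly the kind of bookkeeping where the Big Match on $\N$ shows a positional strategy alone fails: there, one \emph{must} eventually raise the probability of playing the "decisive" action, which a fixed positional rule cannot do, but a single bit "have I entered the decisive phase yet" can. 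So I would structure the estimate as a stopping-time argument: let $\vartheta$ be the first stage at which the commitment condition fires; show $\probm(\vartheta<\infty)$ is either $1$ or the play has already won; condition on $\{\vartheta<\infty\}$ and apply $(\eps/2)$-optimality of the committed positional strategy from $s_\vartheta$ together with the value inequality $\valueof{\game}{s_\vartheta}\ge\valueof{\game}{s_0}-\eps/2$ guaranteed by the shepherding invariant; and finally combine these two contributions. For the turn-based finitely branching sub-case the argument is cleaner because the local strategies are genuinely optimal and König's lemma / finiteness of $\postof{}{s}$ lets the shepherding phase be made deterministic, yielding a \emph{deterministic} public 1-bit strategy; this is why the theorem claims determinism only in that regime. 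I expect the detailed verification of the shepherding invariant (that $\memconf_0$-mode play does not let the value drop by more than $\eps/2$ before commitment) to be the crux, with everything else being routine.
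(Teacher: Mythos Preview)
Your proposal has a fundamental gap at the step where you assert that ``the committed strategy can be taken to be a single global positional strategy \ldots which is $\eps/2$-optimal from wherever you happen to be.'' No such strategy exists: this is precisely what \cref{thm:BMI-zplus}(\ref{thm:BMI-zplus-3}) and \cref{thm:TB-BMI-zplus} rule out. \Cref{prop:story-state-of-the-art} gives you, for each fixed start state $s$, an MR strategy $\zstrat_s$ that is $\eps$-optimal \emph{from that particular $s$}; the strategies differ across $s$ and cannot be replaced by one global positional strategy. If mode $\memconf_1$ plays any fixed positional $\zstrat^*$, then by \cref{thm:BMI-zplus} there are states from which $\zstrat^*$ attains arbitrarily little, and the shepherding phase cannot guarantee commitment only at the (finitely many) states where $\zstrat^*$ is good, because Minimizer controls the drift. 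Concretely, in the Big Match on $\N$ your mode-$\memconf_0$ rule ``play $\zstrat_{c_i}(c_i)$ at $c_i$'' is itself an MR strategy with some action-$1$ probability $f(i)$ at $c_i$; by the dichotomy in the proof of \cref{thm:BMI-zplus}(\ref{thm:BMI-zplus-3}) either $\sum_i f(i)<\infty$ (Minimizer plays $1$ forever, the play drifts to infinity without winning) or $\sum_i f(i)=\infty$ (Minimizer plays $0$ forever, the play hits $\lose$ before $c_0$ with probability tending to~$1$ from high $c_x$). A one-shot flip from one positional strategy to another does not escape this dichotomy, since each mode individually is positional and subject to the same argument. (Incidentally, in the turn-based countably infinite case Maximizer need not have \emph{optimal} MD strategies, only $\eps$-optimal ones, so your appeal to exact optimality there is also inaccurate.)

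The paper's proof takes a quite different route. It does not use the bit as a one-way commitment flag. Instead it regards a public 1-bit strategy as a memoryless strategy on the layered game $\states\times\{0,1\}$ and constructs that strategy by an Ornstein-style \emph{plastering}: in round $i$ it invokes \cref{lem:conc-reach-non-uniform} to obtain a positional $\zstrat_i$ that is near-optimal from a finite set $S_i[0]\times\{0\}$ of inner-layer states and attains this within a finite region $R_i$; it then \emph{fixes} $\zstrat_i$ on $S_i[0]\times\{0\}$ in the inner layer and on the newly encountered part of $R_i$ in the outer layer. The layer bit is toggled when the play crosses between the region fixed so far and the not-yet-fixed region, so that fixings from different rounds never collide on the same layered state. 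Two invariants are maintained round by round: (A) from every already-fixed inner-layer state the partially-built strategy already nearly attains the value using only fixed states, and (B) each round's fixings degrade the value of not-yet-fixed inner-layer states by at most $\eps_i$, with $\sum_i \eps_i \le \eps$. Thus the single bit serves as a bookkeeping device for interleaving \emph{infinitely many} local positional strategies (one per plastering round), not as a switch between two global positional strategies.
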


\ignore{
\begin{proof}[Proof sketch.]
The full proof can be found in \Cref{app:uniconcurr}.
Here we just sketch the construction of the 1-bit strategy.

Let us describe the 1-bit strategy in terms of a \emph{layered} game with state space $S \times \{0,1\}$, where the 
$S$ is the state space of the original game and
second component ($0$ or~$1$) reflects the current memory mode of Maximizer.
Accordingly, we think of the state space as organized in two \emph{layers}, the ``inner'' and the ``outer'' layer, with the memory mode being $0$ and~$1$, respectively.
In these terms, our goal is to construct, for the layered game, a \emph{memoryless} strategy for Maximizer.
The current state of the layered game is known to both players, which corresponds to the (1-bit) memory being public in the original game. In particular, Minimizer always knows the distribution of actions that Maximizer is about to play.

In general Maximizer does not have uniformly $\eps$-optimal memoryless strategies in reachability games; cf.\ \cref{thm:BMI-zplus}.
Our construction exploits the special structure in the layered game, namely, the symmetry of the two layers.
The memoryless Maximizer strategy we construct will be $\eps$-optimal from each state $(s,0)$ in the inner layer, but not necessarily from the states in the outer layer.

As building blocks we use memoryless (yet non-uniform) $\eps$\nobreakdash-optimal strategies which are known to exist.
In the turn-based finitely branching case they are even MD.
These memoryless strategies have the convenient property that they are likely to reach the target within a bounded subspace and thus can be left unspecified outside (see \cref{lem:conc-reach-non-uniform} in the appendix).
We iteratively ``fix'' such non-uniform\new{ly} $\eps$-optimal memoryless strategies in finite subsets of the state space of the layered game.
In the limit, when the whole state space of the layered game has been covered, these fixings define a uniformly $\eps$-optimal public 1-bit strategy.

Loosely speaking, we fix good memoryless strategies in those regions of the inner layer where
they have a near-optimal attainment and otherwise we only fix them in the outer layer.
As a consequence, the resulting 1-bit strategy is $\eps$-optimal from every
state if the
initial memory mode is $0$,
but not necessarily if the initial memory mode is $1$.
\end{proof}
}

\begin{proof}
  Denote the game by~$\hat\game$, over state space~$S$.
  \new{Let $\eps >0$. We show how to construct the uniformly $\eps$-optimal public 1-bit strategy.}
It is convenient to describe the 1-bit strategy \new{in $\hat\game$ in terms of a memoryless
strategy in a derived} game~$\game$ with state space $S \times \{0,1\}$, where the second component ($0$ or~$1$) reflects the current memory mode of Maximizer.
Accordingly, we think of the state space of~$\game$ as organized in two \emph{layers}, the ``inner'' and the ``outer'' layer, with the memory mode being $0$ and~$1$, respectively.
\new{In each state $(s,j)$ of~$\game$ (where $j \in \{0,1\}$ denotes the layer), Maximizer can choose the layer, $j' \in \{0,1\}$, of the successor state~$(s',j')$, possibly depending on~$s'$. This is exactly analogous to Maximizer using $1$~bit of memory.}
In these terms, our goal is to construct, for the layered game~$\game$, a \emph{memoryless} strategy for Maximizer.
\new{From this one can naturally extract a public 1-bit strategy for Maximizer in the original game~$\hat\game$.}
Upon reaching the target, the memory mode is irrelevant, so for notational simplicity we denote the objective as $\reach{T}$, also in the layered game~$\game$ (instead of $\reach{T \times \{0,1\}}$).
The current state of the layered game is known to both players (to Minimizer in particular); this corresponds to the (1-bit) memory being public in the original game~$\hat\game$: at each point in the game, Minimizer knows the distribution of actions that Maximizer is about to play.
Notice that the values of states $(s,0)$ and $(s,1)$ in~$\game$ are equal to the value of~$s$ in~$\hat\game$; this is because the definition of value does not impose restrictions on the memory of strategies and so the players could, in~$\hat\game$, simulate the two layers of~$\game$ in their memory if that were advantageous.

In general Maximizer does not have uniform\new{ly} $\eps$-optimal memoryless strategies in reachability games; cf.\ \cref{thm:BMI-zplus}.
So our construction will exploit the special structure in the layered game, namely, the symmetry of the two layers.
The memoryless Maximizer strategy we construct will be $\eps$-optimal from each state $(s,0)$ in the inner layer, but not necessarily from the states in the outer layer.

As building blocks we use the non-uniformly \verynew{$\eps$-optimal}
memoryless strategies that we get from \cref{lem:conc-reach-non-uniform};
in the turn-based finitely branching case they are even MD.
We combine them by ``plastering'' the state space (of the layered game).
This is inspired by the construction in \cite{Ornstein:AMS1969}; see \cite[Section~3.2]{KMSTW2020} for a recent description.\footnote{These papers consider MDPs, i.e., Minimizer is passive.
In countable MDPs, Maximizer has uniform\new{ly} $\eps$-optimal MD strategies
even without layering the system
\cite[Proposition A]{Ornstein:AMS1969}.
}

In the general concurrent case, a memoryless strategy prescribes for each state $(s,i)$ a probability distribution over Maximizer's actions.
We define a memoryless strategy by successively \emph{fixing} such distributions in more and more states.
Technically, one can \emph{fix} a state~$s$ by replacing the actions~$A(s)$ available to Maximizer by a single action which is a convex combination over~$A(s)$.
Visually, we ``plaster'' the whole state space by the fixings.
This is in general an infinite (but countable) process; it defines a memoryless strategy for Maximizer in the limit.

In the turn-based and finitely branching case, an MD strategy prescribes one outgoing transition for each Maximizer state.
Accordingly, \emph{fixing} a Maximizer state means restricting the outgoing transitions to a single such outgoing transition.
The plastering proceeds similarly as in the concurrent case; it defines an MD strategy for Maximizer in the limit.

Put the states of~$\hat\game$ in some order, i.e., $s_1, s_2, \ldots$ with $S = \{s_1, s_2, \ldots\}$.
The plastering proceeds in \emph{rounds}.
In round $i \ge 1$ we fix the states in $S_i[0] \times \{0\}$ and in $S_i[1] \times \{1\}$, where $S_1[0], S_2[0], \ldots \subseteq S$ are pairwise disjoint and $S_1[1], S_2[1], \ldots \subseteq S$ are pairwise disjoint; see~\Cref{fig:uniform-eps-two-layers} for an example of sets $S_i[0]$ and~$S_i[1]$ in a two-layer game.
Define $F_i[0] \eqdef \bigcup_{j \le i} S_j[0]$ and $F_i[1] \eqdef \bigcup_{j \le i} S_j[1]$.
So $F_i[0] \times \{0\}$ and $F_i[1] \times \{1\}$ are the states that have been fixed by the end of round~$i$.
We will keep an invariant $F_i[0] \subseteq F_i[1] \subseteq F_{i+1}[0]$.

Let $\game_i$ be the game obtained from~$\game$ after the fixings of the first $i-1$ rounds (with $\game_1 = \game$).
Define \[S_i[0] \eqdef (\{s_i\} \cup S_{i-1}[1]) \setminus F_{i-1}[0]\]
 (and $S_1[0] \eqdef \{s_1\}$), the set of states to be fixed in round~$i$.
In particular, round~$i$ guarantees that states $(s,0)$ whose ``outer sibling'' $(s,1)$ has been fixed previously are also fixed, ensuring $F_{i-1}[1] \subseteq F_i[0]$.
\new{It follows from the invariant above} that $\bigcup_{j=1}^\infty S_j[0] = \bigcup_{j=1}^\infty S_j[1] = S$.
The set $S_i[1]$ will be defined below.

\begin{figure}
\begin{center}
    \includegraphics[width=0.8\textwidth, angle=0]{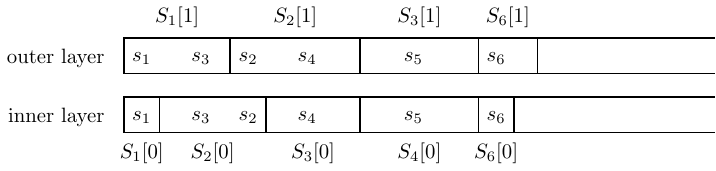}
\end{center}
\caption{Example of sets $S_i[0]$ and~$S_i[1]$ in the two layers of~$\game$, where the outer and inner layers are $S\times \{1\}$ and $S \times \{0\}$, respectively.}
\label{fig:uniform-eps-two-layers}
\end{figure}

In round~$i$ we fix the states in $(S_i[0] \times \{0\}) \cup (S_i[1] \times \{1\})$ in such a way that
\begin{enumerate}[(A)]
\item starting from any $(s,0)$ with $s \in S_i[0]$, the (infimum over all Minimizer strategies~$\ostrat$) probability of reaching~$T$ using only fixed states is not much less than the value $\valueof{\game_i,\reach{T}}{(s,0)}$; and
\item for all states $(s,0) \in S \times\{0\}$ in the inner layer, the value $\valueof{\game_{i+1},\reach{T}}{(s,0)}$ is almost as high as $\valueof{\game_{i},\reach{T}}{(s,0)}$.
\end{enumerate}
The purpose of goal~(A) is to guarantee good progress towards the target when starting from any state $(s,0)$ in~$S_i[0] \times \{0\}$.
The purpose of goal~(B) is to avoid fixings that would cause damage to the values of other states in the inner layer.

We want to define the fixings in round~$i$.
First we define an auxiliary game $\bar\game_i$ with state space $\bar S_i \eqdef (F_i[0] \times \{0,1\}) \cup (S \setminus F_i[0])$.
Game~$\bar\game_i$ is obtained from~$\game_i$ by collapsing, for all $s \in S \setminus F_i[0]$, the siblings $(s,0), (s,1)$ (neither of which have been fixed yet) to a single state~$s$.
See \Cref{fig:bar-game}.
The game~$\bar\game_i$ inherits the fixings from~$\game_i$.
The values remain equal; in particular, for $s \in S \setminus F_i[0]$, the values of $(s,0)$ and $(s,1)$ in~$\game_i$ and the value of $s$ in~$\bar\game_i$ are all equal.

\begin{figure}
\begin{center}
    \includegraphics[width=0.8\textwidth, angle=0]{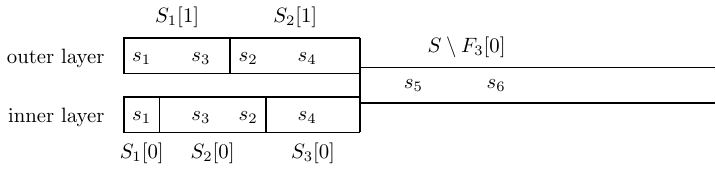}
\end{center}
\caption{Example of a game $\bar\game_3$.}
\label{fig:bar-game}
\end{figure}

Let $\eps_i > 0$.
We apply \cref{lem:conc-reach-non-uniform} to~$\bar\game_i$ with set of initial states $S_i[0] \times \{0\}$.
So Maximizer has a memoryless strategy~$\sigma_i$ for~$\bar\game_i$ and a finite set of states $R \subseteq \bar S_i$ so that for all $s \in S_i[0]$ we have $\inf_\ostrat \probm_{\bar\game_i, (s,0),\sigma_i,\ostrat}(\reachn{R}{T}) \ge \valueof{\game_i,\reach{T}}{(s,0)} - \eps_i$.

Now we carry the strategy $\sigma_i$ from $\bar\game_i$ to $\game_i$ by suitably
adapting it (see below). Then we obtain $\game_{i+1}$ from $\game_i$ by fixing
(the adapted version of) $\sigma_i$ in $\game_i$.

\new{The adaption of~$\sigma_i$ to~$\game_i$ is by treating states $s \in S \setminus F_i[0]$ in~$\bar\game_i$ as states in the \emph{outer} layer $(s,1)$ of~$\game_i$, as follows.
Every transition that in~$\bar\game_i$ goes from a state $(s,j) \in F_i[0] \times \{0,1\}$ to a state $s' \in S \setminus F_i[0]$ is redirected so that in~$\game_i$ it goes from $(s,j)$ to~$(s',1)$.
Similarly, every transition that in~$\bar\game_i$ goes from a state $s' \in S \setminus F_i[0]$ to a state $(s,j) \in F_i[0] \times \{0,1\}$ goes in~$\game_i$ from $(s',1)$ to~$(s,j)$.
Finally, every transition that in~$\bar\game_i$ goes from a state $s' \in S \setminus F_i[0]$ to another state $t' \in S \setminus F_i[0]$ goes in~$\game_i$ from $(s',1)$ to~$(t',1)$.
}

Accordingly, define $S_i[1] \eqdef (S_i[0] \setminus F_{i-1}[1]) \cup ((S \setminus F_i[0]) \cap R)$ (this ensures that $F_i[0] \subseteq F_i[1]$), and obtain~$\game_{i+1}$ from~$\game_i$ by fixing the adapted version of~$\sigma_i$ in $(S_i[0] \times \{0\}) \cup (S_i[1] \times \{1\})$.
This yields, for all $s \in S_i[0]$,

\begin{equation}
\begin{aligned}
\label{eq:invoke-non-uniform}
\inf_{\sigma,\ostrat} \probm_{\game_{i+1}, (s,0),\sigma,\ostrat} (\reachn{(F_i[0] \times \{0\}) \cup (F_i[1] \times \{1\})}{T}) \ge\valueof{\game_i,\reach{T}}{(s,0)} - \eps_i,
\end{aligned}
\end{equation}
achieving goal~(A) above.
Notice that the fixings in~$\game_{i+1}$ ``lock in'' a good attainment from $S_i[0] \times \{0\}$, regardless of the Maximizer strategy~$\sigma$.
Now we extend \eqref{eq:invoke-non-uniform} to achieve goal~(B) from above:
for all $s \in S$ we have
\begin{equation}
 \valueof{\game_{i+1},\reach{T}}{(s,0)} \ \ge \ \valueof{\game_i,\reach{T}}{(s,0)} - \eps_i\,. \label{eq:invoke-non-uniform-extended}
\end{equation}
Indeed, consider any Maximizer strategy $\sigma$ in~$\game_i$ from any~$(s,0)$.
Without loss of generality we can assume that $\sigma$ is such that the play enters the outer layer only (if at all) after having entered $F_i[0] \times \{0\}$.
Now change $\sigma$ to a strategy $\sigma'$ in~$\game_{i+1}$ so that as soon as $F_i[0] \times \{0\}$ is entered, $\sigma'$ respects the fixings (and plays arbitrarily afterwards).
By \eqref{eq:invoke-non-uniform} this decreases the (infimum over Minimizer strategies~$\ostrat$) probability by at most~$\eps_i$.
Thus,
\[
 \inf_\ostrat \probm_{\game_{i+1}, (s,0),\sigma',\ostrat}(\reach{T}) ~\ge~ \inf_\ostrat \probm_{\game_{i}, (s,0),\sigma,\ostrat}(\reach{T}) - \eps_i\,.
\]
Taking the supremum over strategies~$\sigma$ in~$\game_i$ yields \eqref{eq:invoke-non-uniform-extended}.

For any $\eps > 0$ choose $\eps_i \eqdef 2^{-i} \eps$; thus, $\sum_{i \ge 1} \eps_i = \eps$.
Let $\sigma$ be the memoryless strategy that respects all fixings in all $\game_i$.
Then, by \eqref{eq:invoke-non-uniform-extended}, for all $s \in S$ we have
\[
\inf_\ostrat \probm_{\game, (s,0),\sigma,\ostrat}(\reach{T}) \ \ge \ \valueof{\game,\reach{T}}{(s,0)} - \sum_{i=1}^\infty \eps_i\,,
\]
so $\sigma$ is $\eps$-optimal in~$\game$ from all $(s,0)$.
Hence, the corresponding public 1-bit memory strategy (with initial memory mode~$0$, corresponding to the inner layer) is uniform\new{ly} $\eps$-optimal in~$\hat\game$.
\end{proof}

\section{Uniform Strategies in Turn-based Games}\label{sec:uniturn}
\new{
\Cref{thm:BMI-zplus} in the previous section shows that Maximizer
has no \emph{uniformly} $\eps$-optimal memoryless strategies in
concurrent reachability games with finite action sets (since the concurrent Big Match game on $\N$
is a counterexample).
Here we strengthen this negative result by showing that it holds even for the
subclass of finitely branching \emph{turn-based} reachability games.
}

To this end, we define a finitely branching
turn-based game in \cref{def:turn-big-match} that is very similar to the
concurrent Big Match  on $\N$, as shown in~\Cref{fig:bigmatchNconcurr}.
The difference is that
at each $c_i$ Maximizer has to announce his mixed choice of actions first,
rather than concurrently with Minimizer.
Note that Maximizer only announces his distribution over the actions
$\{0,1\}$, not any particular action.
Since a good Maximizer strategy needs to work even if Minimizer knows
it in advance, this makes no difference with respect
to the attainment of memoryless Maximizer strategies.
Another slight difference is that Maximizer is restricted to choosing
distributions with only \emph{rational} probabilities where the probability of picking
action $1$ is of the form $1/k$ for some $k \in \N$.
However, since we know that there exist good Maximizer strategies
of this form (cf.\ \cref{thm:BMI-zplus}), it is not a significant restriction.

\begin{figure}[t]
    \centering
    \includegraphics[width=0.8\textwidth, angle=0]{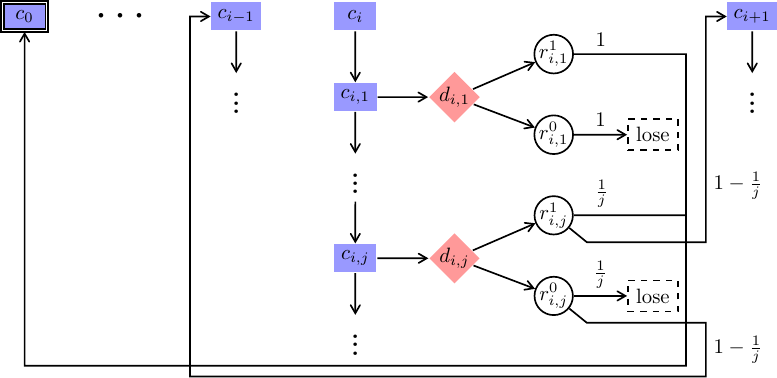}
    \smallskip
   \caption{Turn-based Big Match on~$\N$.}
 \label{fig:bigmatchturned}
\end{figure}
\begin{definition}[Turn-based Big Match on $\N$]\label{def:turn-big-match}
This game  is shown in \Cref{fig:bigmatchturned}. Maximizer controls the set $\{c_i \mid i \in \nat\} \cup \{c_{i,j} \mid i,j \in \nat\} \cup \{\lose\}$ of states, whereas Minimizer controls only the states in $\{d_{i,j} \mid i,j \in \nat\}$. The
remaining set $\{r_{i,j}^0, r_{i,j}^1 \mid i,j \in \nat\}$ of states are random. 
For all $i,j \in \nat$, there are \new{the} following transitions  
\begin{align*}c_i \transition c_{i,1} \qquad \quad c_{i,j} \transition c_{i,j+1} \qquad \quad
c_{i,j} \transition d_{i,j} \\
d_{i,j} \transition r_{i,j}^0 \qquad \quad d_{i,j} \transition r_{i,j}^1
\end{align*}
 and $\lose \transition \lose$.
Intuitively, by going from $c_i$ to $d_{i,j}$, Maximizer chooses action $1$ with
probability $1/j$ and action $0$ with probability $1-1/j$.
Minimizer chooses actions $0$ or $1$ by going from
$d_{i,j}$ to $r_{i,j}^0$ or $r_{i,j}^1$, respectively.
The probabilistic function is defined by
\begin{align*}P(r_{i,j}^0)(\lose) = 1/j \qquad & \qquad P(r_{i,j}^0)(c_{i-1}) = 1-1/j
\\
 P(r_{i,j}^1)(c_0) = 1/j\qquad & \qquad P(r_{i,j}^1)(c_{i+1}) = 1-1/j\end{align*}
where  $i,j \in \nat$. 
The objective is $\reach{\{c_0\}}$.

This finitely branching turn-based game
mimics the behavior of the game in \cref{def:concurrent-big-match-zplus}.
\end{definition}
\begin{theorem}\label{thm:TB-BMI-zplus}
Consider the turn-based Big Match game $\game$ on $\N$
from \cref{def:turn-big-match} and let \new{$x\in \N$}.

\begin{enumerate}
\item\label{thm:TB-BMI-zplus-2}
For every start state $c_x$ and $N\ge 0$,
Maximizer can win with probability $\ge N/(2N+2)$ by choosing
the transitions $c_i \transition \dots d_{i,j}$
where $j = (n+1)^2$ whenever he is in state $c_i$ with $i=x+N-n$ for some $n \ge 0$.

In particular, $\valueof{\game}{c_x} \ge 1/2$.
\item\label{thm:TB-BMI-zplus-3}
For any $\eps < 1/2$
there does not exist any uniformly $\eps$-optimal memoryless (MR) strategy for
Maximizer.

Every MR Maximizer strategy $\zstrat$ attains arbitrarily little from $c_x$ as
$x \to \infty$.
Formally,
$
\limsup_{x \to \infty}\inf_{\ostrat}\probm_{\game,c_x,\zstrat,\ostrat}(\reach{\{c_0\}})=0.
$
\end{enumerate}
\end{theorem}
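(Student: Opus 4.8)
The plan is to derive both items from the corresponding facts about the concurrent big match on $\N$ (\Cref{thm:BMI-zplus}), using that the turn-based game of \Cref{def:turn-big-match} faithfully simulates the concurrent one of \Cref{def:concurrent-big-match-zplus}. The point to keep in mind is that in the turn-based game Maximizer must fix his mixed action at $c_i$ \emph{before} Minimizer reacts: he walks along the chain $c_i \transition c_{i,1} \transition c_{i,2} \transition \cdots$ and eventually leaves it to some $d_{i,j}$, which corresponds to playing action~$1$ with probability $1/j$ in the concurrent game. Indeed, from $d_{i,j}$ Minimizer moves to $r^0_{i,j}$ (her action~$0$) or $r^1_{i,j}$ (her action~$1$), from where the random outcome reproduces exactly the concurrent distribution: $r^0_{i,j}$ gives $\lose$ with probability $1/j$ and $c_{i-1}$ otherwise, while $r^1_{i,j}$ gives $c_0$ with probability $1/j$ and $c_{i+1}$ otherwise. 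Announcing the mixed action first is harmless, since a good Maximizer strategy must work even against a Minimizer who knows it, and never leaving the chain (walking it forever) can only hurt Maximizer, as it leads to $\lose$.

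For item~\ref{thm:TB-BMI-zplus-2}, I would use the memoryless Maximizer strategy $\hat\zstrat$ (depending on $x$ and $N$) that from $c_i$ walks to $c_{i,(n+1)^2}$ and then leaves the chain to $d_{i,(n+1)^2}$, where $n = x+N-i$ (for $i \le x+N$; elsewhere $\hat\zstrat$ is arbitrary). By the simulation above, from $c_x$ and against \emph{any} Minimizer strategy the probability of $\reach{\{c_0\}}$ under $\hat\zstrat$ equals the one obtained in the concurrent game when Maximizer plays action~$1$ with probability $1/(n+1)^2$ at the corresponding $c_i$; by item~\ref{thm:BMI-zplus-2} of \Cref{thm:BMI-zplus} this is at least $N/(2N+2)$. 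Letting $N \to \infty$ yields $\valueof{\game}{c_x} \ge 1/2$.

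For item~\ref{thm:TB-BMI-zplus-3}, I would replay the argument behind item~\ref{thm:BMI-zplus-3} of \Cref{thm:BMI-zplus}. Fix an MR Maximizer strategy $\zstrat$; it is fully described by the probabilities $p_{i,j}$ with which, at $c_{i,j}$, it leaves the chain to $d_{i,j}$ rather than continuing to $c_{i,j+1}$. Writing $w_{x,j} \eqdef p_{x,j}\prod_{k<j}(1-p_{x,k})$ for the probability of leaving the level-$x$ chain exactly at $c_{x,j}$, define the ``effective action-$1$ probability'' $A(x) \eqdef \sum_{j\ge 1} w_{x,j}/j$ and the ``effective climb probability'' $B(x) \eqdef \sum_{j\ge 1} w_{x,j}(1-1/j)$, so that $A(x)+B(x)\le 1$ for every $x\ge 1$ (the deficit being the probability that $\zstrat$ never leaves the chain at level~$x$). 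Against the blind Minimizer strategy that always moves to $r^1_{i,j}$, from $c_x$ the target $c_0$ is reached with probability at most $\sum_{k\ge x}A(k)$ (each level $\ge x$ contributes at most $A(k)$); against the one that always moves to $r^0_{i,j}$, $c_0$ can only be reached by descending all the way, so with probability exactly $\prod_{k=1}^x B(k) \le \prod_{k=1}^x(1-A(k))$. If $\sum_{x\ge 1}A(x) < \infty$ the first bound tends to $0$ as $x\to\infty$; if $\sum_{x\ge 1}A(x) = \infty$ the second bound is at most $1/(1+\sum_{k=1}^x A(k)) \to 0$ by \Cref{prop:Weierstrass}. In either case $\limsup_{x\to\infty}\inf_{\ostrat}\probm_{\game,c_x,\zstrat,\ostrat}(\reach{\{c_0\}}) = 0$, and since $\valueof{\game}{c_x}\ge 1/2$ for all $x$ by item~\ref{thm:TB-BMI-zplus-2}, no MR Maximizer strategy can be uniformly $\eps$-optimal for $\eps < 1/2$.

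I expect the only genuine subtlety to lie in item~\ref{thm:TB-BMI-zplus-3}: an MR strategy in the turn-based game does not literally ``choose a $j$'' at each level but makes a sequence of continue/leave decisions along the $c_{i,\cdot}$-chain (and may never leave, forfeiting that level to $\lose$), so one must distil the right aggregate quantity $A(x)$ that plays the role of $f(x)$ in the concurrent argument and check $A(x)+B(x)\le 1$, which ensures that the two inequalities from the concurrent proof transfer unchanged. Everything else is either a transcription of \Cref{thm:BMI-zplus} through the step-by-step simulation or routine bookkeeping.
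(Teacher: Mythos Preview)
Your proposal is correct and follows essentially the same approach as the paper: both items are derived by transferring the corresponding facts about the concurrent big match on~$\N$ (\cref{thm:BMI-zplus}) through the simulation built into \cref{def:turn-big-match}. For item~\ref{thm:TB-BMI-zplus-3} the paper argues slightly more indirectly---it assumes without loss of generality that $\zstrat$ almost surely leaves each chain, defines the induced concurrent MR strategy with $f(i)=\sum_j p_{i,j}/j$, and then invokes the near-isomorphism of the resulting MDPs to cite \cref{thm:BMI-zplus}(\ref{thm:BMI-zplus-3}) directly---whereas you replay the two-case computation in the turn-based game with the aggregate quantities $A(x)$ and $B(x)$; your treatment is a touch more explicit about the ``never leave'' possibility via $A(x)+B(x)\le 1$, but the underlying idea is identical.
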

\begin{proof}
Let $\game'$ be the concurrent game
from \cref{def:concurrent-big-match-zplus}.

Towards item~\ref{thm:TB-BMI-zplus-2}, consider the concurrent game
$\game'$ and the turn-based game $\game$ from
\cref{def:turn-big-match}. Let $c_x$ be our start state. 
After fixing the Maximizer strategy from \cref{thm:BMI-zplus}(2) in $\game'$,
we obtain an MDP $\mdp'$ from Minimizer's point of view.
Similarly, after fixing the strategy described above in $\game$, we obtain an
MDP $\mdp$. Then $\mdp'$ and $\mdp$ are almost isomorphic (apart from linear chains
of steps \new{$c_{i,j}$} $\transition c_{i,j+1} \dots$ in $\mdp'$), and thus the
infimum of the chance of winning, over all Minimizer strategies are the same.
Therefore the result follows from \cref{thm:BMI-zplus}(2).

Towards item~\ref{thm:TB-BMI-zplus-3}, note that every Maximizer MR strategy $\zstrat$
in $\game$ corresponds to a Maximizer MR strategy $\zstrat'$ in $\game'$.
First, forever staying in states $c_{i,j}$ is losing, since the target is
never reached. Thus, without restriction, we assume that $\zstrat$ almost
surely moves from
$c_i$ to some $d_{i,j}$ eventually. Let $p_{i,j}$ be the probability that
$\zstrat$ moves from $c_i$ to $d_{i,j}$.
Thus the corresponding strategy $\zstrat'$ in $\game'$
in $c_i$ plays action $1$ with probability $\sum_j p_{i,j}(1/j)$ and action
$0$ otherwise.
Again the MDPs resulting from fixing the respective strategies in $\game$
and $\game'$ are (almost) isomorphic, and thus the result follows from
\cref{thm:BMI-zplus}(3).
\end{proof}

\new{
In the rest of this section we briefly describe an alternative
construction of a turn-based finitely branching reachability game without uniformly
$\eps$-optimal memoryless Maximizer strategies, i.e., \Cref{thm:TB-BMI-simple}
is a different proof of the same
result as in \Cref{thm:TB-BMI-zplus}.
In the direct construction in \Cref{def:turn-big-match},
Maximizer had many alternatives in the states $c_i$ (by going to some state
$d_{i,j}$ for some $j \ge 1$).
However, \cref{thm:conc-reach-uniform} shows that a deterministic 1-bit
strategy suffices for Maximizer. Thus, it suffices for Maximizer to have just
two alternatives, corresponding to the two memory modes of the 1-bit strategy.
The following definition uses this observation to construct an alternative counterexample.
}

\begin{definition}\label{def:TB-BMI-simple}
Consider the concurrent reachability game
from \cref{def:concurrent-big-match-zplus} and let $\eps = 1/4$.
By \cref{thm:conc-reach-uniform}, Maximizer has a uniform $\eps$-optimal
1-bit strategy $\hat{\zstrat}$. Let $p_{i,0}$ (resp.\ $p_{i,1}$)
be the probability that $\hat{\zstrat}$ picks action $1$ at state $c_i$ when
in memory mode $0$ (resp.\ memory mode $1$).

We construct a turn-based reachability game $\game$ with branching
degree two where Maximizer can pick randomized actions according to these
probabilities $p_{i,0}, p_{i,1}$, but nothing else.
Let $\game=\gametuple$ where
$\zstates = \{c_i \mid i \in \nat\} \cup \{\lose\}$,
$\ostates = \{d_{i,0}, d_{i,1} \mid i \in \nat\}$,
and
$\rstates = \{r_{i,0,0}, r_{i,0,1}, r_{i,1,0}, r_{i,1,1} \mid i \in \nat\}$.
We have controlled transitions
$c_i \transition d_{i,j}$, 
$d_{i,j} \transition r_{i,j,k}$ for all $i \in \nat$ and $j,k \in \{0,1\}$ and
$\lose \transition \lose$.
Intuitively, by going from $c_i$ to $d_{i,j}$, Maximizer chooses action $1$ with
probability $p_{i,j}$ and action $0$ otherwise.
Minimizer chooses action $k$ by going from
$d_{i,j}$ to $r_{i,j,k}$.
The random transitions are defined by
$P(r_{i,j,0})(\lose) = p_{i,j}$,
$P(r_{i,j,0})(c_{i-1}) = 1-p_{i,j}$,
$P(r_{i,j,1})(c_0) = p_{i,j}$,
$P(r_{i,j,1})(c_{i+1}) = 1-p_{i,j}$.

The objective is $\reach{\{c_0\}}$.
\end{definition}

\begin{theorem}\label{thm:TB-BMI-simple}
Consider the turn-based reachability game $\game$ of branching degree two 
from \cref{def:TB-BMI-simple} and let \new{$x\in \N$}.
\begin{enumerate}
\item\label{thm:TB-BMI-simple-1}
$\valueof{\game}{c_x} \ge 1/4$.
\item\label{thm:TB-BMI-simple-2}
There does not exist any uniformly $\eps$-optimal memoryless (MR) strategy for
Maximizer.

Every MR Maximizer strategy $\zstrat$ attains arbitrarily little from $c_x$ as
$x \to \infty$.
Formally,
$
\limsup_{x \to \infty}\inf_{\ostrat}\probm_{\game,c_x,\zstrat,\ostrat}(\reach{\{c_0\}})=0.
$
\end{enumerate}
\end{theorem}
\begin{proof}
Let $\game'$ be the concurrent game from \cref{def:concurrent-big-match-zplus}.
We have $\valueof{\game'}{c_x} \ge 1/2$ by \cref{thm:BMI-zplus}.
Consider the $(1/4)$-optimal 1-bit Maximizer strategy $\hat{\zstrat}$ used in $\game'$
in \cref{def:TB-BMI-simple}.
We can define a corresponding 1-bit Maximizer strategy $\zstrat$ in $\game$.
In every state $c_i$, it picks the move $c_i \transition d_{i,j}$
whenever its memory mode is $j$, and it updates its memory in the same way
as $\hat{\zstrat}$.
Then
\begin{align*}\valueof{\game}{c_x}
&\ge \inf_\ostrat\probm_{\game,c_x,\zstrat,\ostrat}(\reach{\{c_0\}})\\
&= \inf_\ostrat\probm_{\game',c_x,\hat{\zstrat},\ostrat}(\reach{\{c_0\}})\\
&\ge \valueof{\game'}{c_x} - 1/4 \ge 1/4.
\end{align*}

For item~\ref{thm:TB-BMI-simple-2}
the argument is exactly the same as in \cref{thm:TB-BMI-zplus}(2).
\end{proof}
 
\section{No \verynew{Memoryless} Strategies for Reachability in Infinitely Branching Games}\label{sec:positional}
\begin{figure}[t]
\includegraphics[width=\textwidth, angle=0]{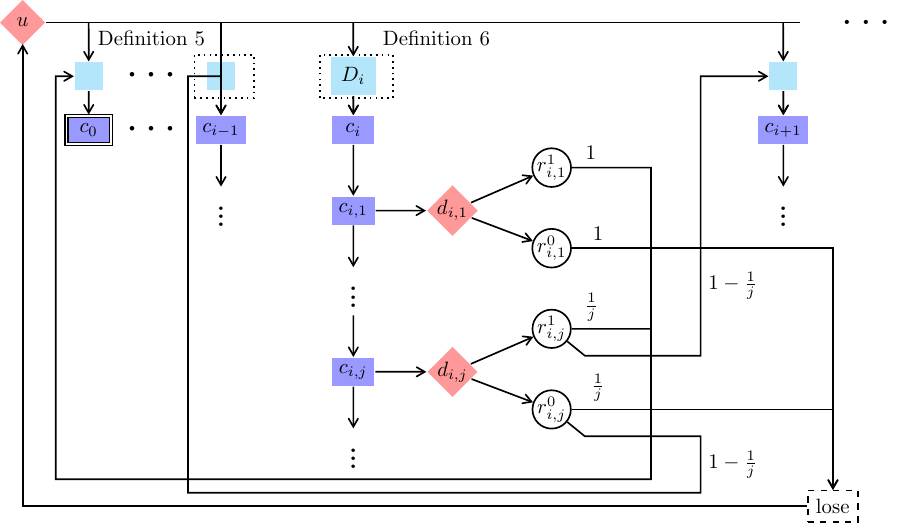}
\caption{The scheme of games defined in~\Cref{def:inf-branch-no-MR} and in~\Cref{def:inf-branch-no-Markov}, where in the former the 
  teal-colored boxes  are replaced with a line connecting $u$ directly to the $c_i$, whereas in the latter  such a  teal-colored box is replaced with a Delay gadget~$D_i$ illustrated in~\Cref{fig:Delay}.
   }
 \label{fig:TheGame-inf-barnch-Pos-Markov}
\end{figure}

In finitely branching turn-based stochastic 2-player games with reachability objectives,
Maximizer has $\eps$-optimal MD strategies (\Cref{lem:conc-reach-non-uniform}).
We go on to show that this does not carry over to infinitely branching turn-based
reachability games. In this case, there are not even $\eps$-optimal MR
strategies, i.e., good Maximizer strategies need memory.
The reason for this is the infinite branching of Minimizer.
\new{
Infinite
branching of Maximizer states and random states does not make a difference
in the case of reachability objectives.
Each infinitely branching Maximizer state can be encoded into a gadget
containing an infinite sequence
of binary branching Maximizer states, where the sequence must eventually be
left, because the target state is not on the sequence.
Similarly, each infinitely branching random state can be encoded into a gadget
containing an infinite sequence of binary branching random states, where the
sequence is left eventually almost surely.
Such an encoding is not possible for infinitely branching Minimizer states,
because Minimizer could choose to stay inside the gadget forever, and spuriously win the game.
(Strictly speaking, such encodings do not preserve path lengths. However, we
show in \Cref{sec:acyclic} that a step counter does not help Maximizer anyway.)
}

\begin{definition}\label{def:inf-branch-no-MR}
Consider the finitely branching turn-based reachability game
from \cref{def:turn-big-match}. We construct an infinitely branching game
$\game$ by adding a new Minimizer-controlled initial state $u$, Minimizer-transitions
$u \transition c_i$ for all
$i \in \N$ and $\lose \transition u$. See \Cref{fig:TheGame-inf-barnch-Pos-Markov} for a scheme of this game. 
The objective is still $\reach{\{c_0\}}$.
\end{definition}

\begin{theorem}\label{thm:inf-branch-no-MR}
Let $\game$ be the infinitely branching turn-based
reachability game from \cref{def:inf-branch-no-MR}.
\begin{enumerate}
\item\label{thm:inf-branch-no-MR-1}
All states in $\game$ are almost surely winning. I.e., for every state $s$ there
exists a Maximizer strategy $\zstrat$ such that
$\inf_{\ostrat}\probm_{\game,s,\zstrat,\ostrat}(\reach{\{c_0\}})=1$.
\item\label{thm:inf-branch-no-MR-2}
For each MR Maximizer strategy $\zstrat$ we have
\[\inf_{\ostrat}\probm_{\game,u,\zstrat,\ostrat}(\reach{\{c_0\}})=0.\]
I.e., for any $\eps < 1$ there does not exist any $\eps$-optimal MR Maximizer
strategy $\zstrat$ from state $u$.
\end{enumerate}
\end{theorem}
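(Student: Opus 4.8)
The plan is to prove items~\ref{thm:inf-branch-no-MR-1} and~\ref{thm:inf-branch-no-MR-2} separately. Write $\game_0$ for the finitely branching turn-based Big Match on $\N$ of \cref{def:turn-big-match}; the game $\game$ is obtained from $\game_0$ only by adding the Minimizer state $u$, the edges $u\to c_i$, and the edge $\lose\to u$. The key feature of the last edge is that ``losing'' becomes recoverable: every time the play reaches $\lose$ it returns to $u$, where Minimizer must again pick some $c_i$, and Maximizer gets a fresh attempt.

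For item~\ref{thm:inf-branch-no-MR-1} I would have Maximizer play in \emph{episodes}. An episode starts when the play sits at some $c_x$ (the initial state, or the state Minimizer just moved to from $u$), and Maximizer then follows, for start state $c_x$, the bounded-horizon strategy of \cref{thm:TB-BMI-zplus}(2) with parameter $N=1$. Two observations are needed. First, under that strategy the play stays within the finite set $\{c_1,\dots,c_{x+1}\}$ together with the finitely many gadget states attached to them: at $c_{x+1}$ the strategy commits with probability~$1$, so the play never climbs above $c_{x+1}$, and from every $c_i$ with $1\le i\le x+1$ one ``round'' reaches $\{c_0,\lose\}$ with probability $\ge 1/(x+1)^2>0$ regardless of Minimizer; hence the episode almost surely ends in $c_0$ or in $\lose$. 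Second, \cref{thm:TB-BMI-zplus}(2) guarantees that it ends in $c_0$ with probability $\ge 1/4$ against every Minimizer strategy. When an episode ends in $\lose$, Maximizer takes $\lose\to u$ and a new episode begins. Conditioned on the $k$-th episode taking place, it reaches $c_0$ with probability $\ge 1/4$, so the probability that $c_0$ is never reached is at most $\prod_{k\ge1}(1-\tfrac14)=0$; thus this (infinite-memory) strategy is almost surely winning from $u$. From any other state the play reaches $u$, $c_0$ or some $c_i$ within a bounded number of steps (from a $c_{i,j}$ Maximizer simply commits at once), so the same strategy wins almost surely everywhere. In particular $\valueof{\game}{u}=1$.

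For item~\ref{thm:inf-branch-no-MR-2}, fix an MR Maximizer strategy $\zstrat$ in $\game$ and let $\zstrat_0$ be its restriction to the states of $\game_0$ (well defined, since in $\game_0$ Maximizer has no real choice at the absorbing $\lose$). By \cref{thm:TB-BMI-zplus}(3),
\[
  \limsup_{x\to\infty}\ \inf_{\ostrat}\ \probm_{\game_0,c_x,\zstrat_0,\ostrat}(\reach{\{c_0\}}) = 0,
\]
so, given $\eps>0$, I can pick $x_1<x_2<\cdots$ and Minimizer strategies $\ostrat_k$ of $\game_0$ with $\probm_{\game_0,c_{x_k},\zstrat_0,\ostrat_k}(\reach{\{c_0\}})<\eps\,2^{-k-1}$. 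Let Minimizer use in $\game$ the counter-based strategy $\ostrat$ that, on its $k$-th visit to $u$, moves to $c_{x_k}$ and, until the next visit to $u$, plays $\ostrat_k$ on the part of the history since that move (which is a $\game_0$-history). The play splits into episodes, with episode $k$ starting at $c_{x_k}$ and running until $\{c_0,\lose\}$ is visited; since the start state is $u\ne c_0$ and the moments spent at $u$ lie between episodes, $\reach{\{c_0\}}$ is the disjoint union over $k$ of the events ``$c_0$ is first reached during episode $k$''. As $\game$ and $\game_0$ have identical dynamics on the big-match states until $\{c_0,\lose\}$ is hit and $\lose$ is absorbing in $\game_0$, the probability that $c_0$ is reached during episode $k$, conditioned on episode $k$ taking place, equals $\probm_{\game_0,c_{x_k},\zstrat_0,\ostrat_k}(\reach{\{c_0\}})<\eps\,2^{-k-1}$. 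Summing over $k$, and using that each episode takes place with probability $\le1$, gives $\probm_{\game,u,\zstrat,\ostrat}(\reach{\{c_0\}})<\eps/2<\eps$; since $\eps>0$ was arbitrary, $\inf_{\ostrat}\probm_{\game,u,\zstrat,\ostrat}(\reach{\{c_0\}})=0$. The last sentence of the statement is then immediate: $\valueof{\game}{u}=1$ by item~\ref{thm:inf-branch-no-MR-1}, while every MR strategy attains $0$ from $u$ and is therefore not $\eps$-optimal from $u$ for any $\eps<1$.

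I expect the main obstacle to be the gluing in item~\ref{thm:inf-branch-no-MR-2}: one has to check carefully that the within-episode behaviour of $\game$ is faithfully simulated by a play of $\game_0$ (so that \cref{thm:TB-BMI-zplus}(3) transfers), and that the countably many episodes combine, through a single Minimizer counter-strategy, without the per-episode win-probabilities compounding — which is handled precisely by the disjointness of the events ``$c_0$ first reached in episode $k$'' and the trivial bound ``episode $k$ occurs with probability $\le1$''. A smaller point in item~\ref{thm:inf-branch-no-MR-1} is almost-sure termination of an episode, i.e.\ that Maximizer's bounded-horizon strategy genuinely confines the play to a finite region with a uniform positive per-round chance of escaping into $\{c_0,\lose\}$.
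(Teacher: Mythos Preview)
Your proposal is correct and follows essentially the same approach as the paper: both arguments structure the play into episodes between visits to~$u$, show each episode succeeds with probability $\ge 1/4$ for item~\ref{thm:inf-branch-no-MR-1}, and for item~\ref{thm:inf-branch-no-MR-2} have Minimizer choose, at the $k$-th visit to~$u$, an entry point $c_{x_k}$ where the fixed MR strategy has attainment $<\eps\,2^{-k-1}$ in~$\game_0$ (via \cref{thm:TB-BMI-zplus}(\ref{thm:TB-BMI-zplus-3})), then sum. The only noteworthy variation is in item~\ref{thm:inf-branch-no-MR-1}: the paper invokes an abstract finite-horizon $h_x$ for a near-optimal strategy and then explicitly forces a return to~$u$ by playing $c_i\to c_{i,1}\to d_{i,1}$, whereas you exploit that the concrete $N=1$ strategy already confines the play to $\{c_1,\dots,c_{x+1}\}$ and hence terminates almost surely on its own---a slightly more self-contained argument yielding the same per-episode $\ge 1/4$ bound.
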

\begin{proof}
Towards item~\ref{thm:inf-branch-no-MR-1}, first note that by playing
\[c_i \transition c_{i,1} \transition d_{i,1}\] Maximizer can enforce that he either wins (if
Minimizer goes to $r_{i,1}^1$) or the game returns to state $u$ (via state
$\lose$ if Minimizer goes to $r_{i,1}^0$).
Thus it suffices to show that Maximizer can win almost surely from state $u$.
We construct a suitable strategy $\zstrat$ (which is not MR).
By \cref{thm:TB-BMI-zplus}, $\valueof{\game}{c_x} \ge 1/2$ for every $x$.
\new{Moreover, the subgraph between any $c_x$ and a return to $u$ (which goes via a losing state and is thus to be avoided by Maximizer) is finitely branching.}
Thus there exists a strategy $\zstrat_x$ and a finite horizon $h_x$ such that
$\inf_\ostrat \probm_{\game,c_x,\zstrat_x,\ostrat}(\reachn{h_x}{\{c_0\}}) \ge 1/4$.
Then $\zstrat$ plays from $u$ as follows. If Minimizer moves
$u \transition c_x$ then first play $\zstrat_x$ for $h_x$ steps, unless $c_0$ or $u$
are reached first. Then play to reach $u$ again, i.e., the next
time that the play reaches a state $c_i$ play
$c_i \transition c_{i,1} \transition d_{i,1}$ (thus either Maximizer wins or the play returns to $u$).
So after every visit to $u$ the Maximizer strategy $\zstrat$ wins with
probability $\ge 1/4$ before seeing $u$ again, and otherwise the play returns to $u$.
Thus $\inf_{\ostrat}\probm_{\game,s,\zstrat,\ostrat}(\reach{\{c_0\}}) \ge 1 -
(1/4)^\infty = 1$.
This proves item~\ref{thm:inf-branch-no-MR-1}.

\begin{claim}\label{thm:inf-branch-no-MR-claim-1}
Suppose that for each MR Maximizer strategy~$\sigma$ and every $\eps > 0$ there is a Minimizer strategy~$\pi(\sigma,\eps)$ such that, from~$u$, the probability of visiting~$c_0$ before revisiting~$u$ is at most~$\eps$.
Then item~\ref{thm:inf-branch-no-MR-2} is true.
\end{claim}
\begin{proof}[Proof of the claim]
Suppose the precondition of the claim.
Let $\sigma$ be an MR Maximizer strategy.
Let $\pi$ be the Minimizer strategy that, after the $i$-th visit to $u$, continues to play $\pi(\sigma,\eps \cdot 2^{-i})$ until the next visit to~$u$.
It follows that
$\probm_{\game,u,\zstrat,\ostrat}(\reach{\{c_0\}}) \le \sum_{i \ge 1} \eps \cdot 2^{-i} = \eps$.
\end{proof}
It remains to prove the precondition of the claim.
Let $\sigma$ be an MR Maximizer strategy, and let $\eps > 0$.
By \cref{thm:TB-BMI-zplus}(2), there are $i \in \N$ and a Minimizer strategy~$\pi$ such that $\probm_{\game',c_i,\zstrat,\ostrat}(\reach{\{c_0\}}) \le \eps$, where $\game'$ is the finitely branching subgame of $\game$ from \cref{def:turn-big-match}.
The Minimizer strategy $\pi(\sigma,\eps)$ that, in~$\game$, from~$u$ goes to~$c_i$ and then plays~$\pi$ has the required property.
\end{proof}

\ignore{
\begin{theorem}
In infinitely branching turn-based 2-player stochastic games, there are no
good memoryless (MR) strategies for Max.
\end{theorem}
\begin{proof}
  Extend the turn-based Big Match on the integers with restarts
  so that Max can win a.s.
  However, against memoryless Max strategies, Min can make the attainment
  arbitrarily close to 0.

Add an infinitely branching initial
state $u$ for Min where he can go arbitrarily high.
With memory Max can attain $~1/2$, but without it Min can push the prob. of
reaching the target arb. close to 0 for every MR Max strategy.
By repeating this game, one obtains an example where Max can win a.s. with
memory, but only ~0 by any MR strategy (see picture).

Another modification is to replace "lose" by an (infinitely-branching)
Minimizer state, u, which allows Minimizer to choose an arbitrary entry point to the "Big Match on the integers".
Max has an almost surely winning strategy for u. (In fact, all states are
almost surely winning.) No matter which entry point $s_i$ Min chooses, Max has
a strategy $\sigma_i$ and a number $n_i$
such that Max reaches the target, without returning to u, with probability at
least 0.5 in at most
$n_i$ steps, regardless of Min's strategy. This strategy $\sigma_i$ may be
chosen MD.
Such a strategy exists and can be derived from the original concurrent game.
If Max does not reach the target within $n_i$
steps, he plays maximally risky in the next step, so that he
either immediately wins or the game returns to u.
Then again Min chooses an entry point etc and Max an MD strategy depending on the new entry point, etc. In each such round, Max succeeds with probability at least 0.5, so he almost surely wins eventually.
However, if Max fixes any MD strategy, Min can make the reachability
probability arbitrarily small. I think it follows from the Novak/Raghavan
paper that with any MD strategy for Max, the value in the Big Match on the
integers converges to 0 with increasing entry points. So for any $\eps>0$,
Min can choose a sequence of entry points such that Max's success
probabilities sum up to at most epsilon.
\end{proof}
}

\begin{figure}[t]
    \centering
\includegraphics[width=0.8\textwidth, angle=0]{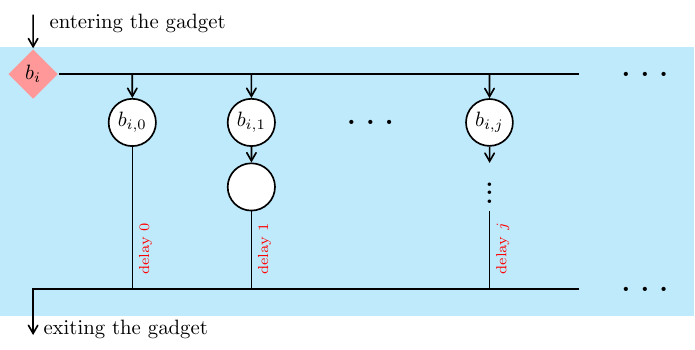}
\caption{Delay gadget~$D_i$, used in~\Cref{fig:TheGame-inf-barnch-Pos-Markov}.}
 \label{fig:Delay}
\end{figure}
 
\section{No Markov Strategies for Reachability in Infinitely Branching Games}\label{sec:acyclic}
\new{Recall that Markov strategies are strategies that use just a step counter as memory.}
We strengthen the result from the previous section by modifying the game so
that even Markov strategies are useless
for Maximizer.
The modification of the game allows Minimizer to cause an arbitrary but finite delay before any state~$c_i$ is entered.

\begin{definition}\label{def:inf-branch-no-Markov}
Consider the infinitely branching turn-based reachability game
from \cref{def:inf-branch-no-MR}. We modify it as follows.
For each $i \in \nat$ we add a Minimizer-controlled state~$b_i$ and redirect all transitions going into $c_i$ to go into $b_i$ instead.
Each~$b_i$ is infinitely branching: for each $j \in \nat$ we add a random state~$b_{i,j}$ and a transition $b_i \transition b_{i,j}$.
We add further states so that the game moves (deterministically, via a chain
of random states) from $b_{i,j}$ to~$c_i$ in exactly $j$~steps. See \Cref{fig:TheGame-inf-barnch-Pos-Markov,fig:Delay} for a depiction of this game. 
The objective is still $\reach{\{c_0\}}$.
\end{definition}

\begin{theorem}\label{thm:inf-branch-no-Markov}
Let $\game$ be the infinitely branching turn-based
reachability game from \cref{def:inf-branch-no-Markov}.
\begin{enumerate}
\item\label{thm:inf-branch-no-Markov-1}
All states in $\game$ are almost surely winning. I.e., for every state $s$ there
exists a Maximizer strategy $\zstrat$ such that
$\inf_{\ostrat}\probm_{\game,s,\zstrat,\ostrat}(\reach{\{c_0\}})=1$.
\item\label{thm:inf-branch-no-Markov-2}
For every Markov Maximizer strategy~$\zstrat$ it holds that
$\inf_{\ostrat}\probm_{\game,u,\zstrat,\ostrat}(\reach{\{c_0\}})=0$.
I.e., no Markov Maximizer strategy is $\eps$-optimal from state $u$ for any $\eps < 1$.
\end{enumerate}
\end{theorem}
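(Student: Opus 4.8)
The plan is to prove the two items separately, with item~1 a minor variation of \cref{thm:inf-branch-no-MR}(1) and item~2 the genuinely new part. For item~1 the key observation is that the good Maximizer strategies from the states $c_x$ supplied by \cref{thm:TB-BMI-zplus}(2) are \emph{index-based} — the probability with which ``action~$1$'' is played at $c_i$ depends only on~$i$, never on the step counter — so they are oblivious to the delays Minimizer inserts, and the delay gadgets do not change which $c_i$'s are visited nor the random outcomes at the $r$-states. Concretely, from any $c_x$ Maximizer plays the ``$N=1$'' strategy of \cref{thm:TB-BMI-zplus}(2), which reaches $c_0$ with probability $\ge 1/4$ against every Minimizer strategy; since at every $c_i$ ($1\le i\le x+1$) it visits, this strategy leaves probability $\ge 1/(x+1)^2>0$ of immediately hitting $c_0$ or $\lose$, the play almost surely reaches $c_0$ or returns to~$u$ via $\lose$. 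Maximizer's strategy from~$u$ is then: whenever Minimizer, after an arbitrary delay, enters some $c_x$, run this strategy; each ``round'' succeeds with probability $\ge 1/4$ and otherwise restarts at~$u$, so $c_0$ is reached almost surely. Since from every state of $\game$ this strategy leads either to $c_0$ or back to~$u$, all states are almost surely winning; in particular $\valueof{\game}{u}=1$, so an $\eps$-optimal Markov strategy for $\eps<1$ is excluded once we prove item~2.

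For item~2 I would first reduce to a single-round statement: for every Markov Maximizer strategy $\zstrat$, every $\delta>0$, and every starting time, Minimizer has a strategy under which, from~$u$, the play visits $c_0$ before revisiting~$u$ with probability at most~$\delta$. Given this, item~2 follows as in \cref{thm:inf-branch-no-MR-claim-1}: after the $m$-th visit to~$u$ (at some time $T_m$) the residual behaviour of $\zstrat$ is again a Markov strategy, so applying the single-round statement with $\delta=\eps 2^{-m}$ bounds the conditional probability of reaching $c_0$ during round~$m$; summing over $m$ gives $\probm_{\game,u,\zstrat,\ostrat}(\reach{\{c_0\}})\le\eps$, and letting $\eps\to 0$ yields $\inf_{\ostrat}\probm_{\game,u,\zstrat,\ostrat}(\reach{\{c_0\}})=0$.

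To prove the single-round statement, note that because the transitions $c_i\transition c_{i,1}\transition c_{i,2}\transition\cdots$ are deterministic and a Markov strategy's choices along them depend only on state and time, whenever the play is at $c_i$ at time~$t$ the strategy $\zstrat$ commits deterministically to some $d_{i,J(i,t)}$ (or loops in the $c_{i,\cdot}$-chain forever, which never reaches $c_0$ and may be treated as $J(i,t)=\infty$). Write $f(i,t):=1/J(i,t)\in[0,1]$ and set $\psi_k:=\limsup_{t\to\infty}f(k,t)$. The argument splits on $\sum_{k\ge 1}\psi_k$. If this sum is infinite, Minimizer \emph{descends}: she enters $c_i$ for $i$ so large that $\sum_{k=1}^i(\psi_k-2^{-k})\ge\ln(1/\delta)$, always plays action~$0$, and uses the delay gadgets so that each $c_k$ (for $k=i,i-1,\dots,1$, handled in this order) is entered at a time $\tau_k$ with $f(k,\tau_k)\ge\psi_k-2^{-k}$ — possible since such times are unbounded. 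Playing action~$0$ at $c_k$ either sends the play to $\lose\to u$ (probability $f(k,\tau_k)$) or on towards $c_{k-1}$, so $c_0$ is reached before~$u$ only if all $i$ steps are survived, with probability $\prod_{k=1}^i(1-f(k,\tau_k))\le\exp\bigl(-\sum_{k=1}^i f(k,\tau_k)\bigr)\le\delta$. If instead $\sum_k\psi_k<\infty$, then $\sup_{t\ge T}f(k,t)\to\psi_k$ as $T\to\infty$ for each~$k$, so there are thresholds $T_k$ with $\sup_{t\ge T_k}f(k,t)\le\psi_k+2^{-k}$; now Minimizer \emph{ascends}: she enters $c_i$ for $i$ with $\sum_{k\ge i}(\psi_k+2^{-k})\le\delta$, always plays action~$1$, and delays so that each $c_k$ ($k\ge i$) is entered at a time $\ge T_k$. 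Then the play never revisits~$u$, and it reaches $c_0$ only via some ``escape'' $r^1_{k,J}\transition\cdots\transition c_0$, of probability $f(k,\tau_k)\le\psi_k+2^{-k}$; the total is at most $\sum_{k\ge i}(\psi_k+2^{-k})\le\delta$.

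The crux is the single-round statement, and in particular the realization that the right dichotomy is on $\sum_k\limsup_t f(k,t)$ (which is moreover invariant under the time-shifts needed in the rounds argument): the delay gadgets are precisely the device that lets Minimizer postpone entering the current $c_k$ until a time at which $\zstrat$'s commitment there is close to its limsup (descending case) or has already dropped below it (ascending case), so that against a time-varying Markov strategy she can reproduce one of the two pure Minimizer responses that defeat positional strategies in the big match. This is also why the finitely branching game of \cref{def:turn-big-match}, where Minimizer cannot manufacture such delays, is insufficient against Markov strategies. A minor technical point is that Minimizer's strategy here is adaptive and infinitely branching — the delays she inserts depend on the current state and time along the realized history — which is permitted.
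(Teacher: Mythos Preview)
Your overall architecture matches the paper's proof: reduce to a single-round bound (as in \cref{thm:inf-branch-no-MR-claim-1}), pick an accumulation point of the time-dependent ``action-$1$ probability'' at each $c_k$, and split on whether the resulting series converges, with Minimizer ascending (action~$1$) in the convergent case and descending (action~$0$) in the divergent case, using the delay gadgets to synchronise the arrival time at each $c_k$ with a favourable value. The paper uses $f(i)\eqdef\liminf_t f(i,t)$ where you use $\psi_k\eqdef\limsup_t f(k,t)$; both work, since any accumulation point satisfies the two approximation properties needed (infinitely many $t$ close from above and from below). Your use of $\prod(1-x_k)\le e^{-\sum x_k}$ in place of the Weierstrass inequality is also fine.

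There is, however, a genuine error in your reduction step. You write that because the chain $c_i\transition c_{i,1}\transition c_{i,2}\transition\cdots$ is deterministic, a Markov strategy ``commits deterministically to some $d_{i,J(i,t)}$'' and you set $f(i,t)=1/J(i,t)$. This is false: a Markov strategy may \emph{randomize} at each $c_{i,j}$ (the definition only restricts the strategy to depend on state and time, not to be pure), so the exit index is a random variable and $f(i,t)$ is not well-defined as you wrote it. The fix is exactly what the paper does: let $p_{i,j,t}$ be the probability that $\zstrat$, starting from $c_i$ at time~$t$, eventually exits to $d_{i,j}$, and set $f(i,t)\eqdef\sum_j p_{i,j,t}/j$. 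With this definition your two cases go through verbatim: when Minimizer plays action~$0$ the aggregate probability of hitting $\lose$ from $c_k$ is $f(k,\tau_k)$, and when she plays action~$1$ the aggregate probability of jumping to $c_0$ is $f(k,\tau_k)$; the event ``stay in the $c_{i,\cdot}$-chain forever'' only helps Minimizer in both cases. So the gap is local and the repair does not disturb the rest of your argument.
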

\begin{proof}
Item~\ref{thm:inf-branch-no-Markov-1} follows from \cref{thm:inf-branch-no-MR}(\ref{thm:inf-branch-no-MR-1}), as the modification in \cref{def:inf-branch-no-Markov} only allows Minimizer to cause finite delays.

Towards item~\ref{thm:inf-branch-no-Markov-2}, the idea of the proof is that for every Markov Maximizer strategy~$\zstrat$, Minimizer can cause delays that make $\zstrat$ behave in the way it would after a long time.
This way, Minimizer turns~$\zstrat$ approximately to an MR-strategy, which is useless by \cref{thm:inf-branch-no-MR}(\ref{thm:inf-branch-no-MR-2}).

In more detail, fix any Markov Maximizer strategy~$\zstrat$.
As in the proof of \cref{thm:TB-BMI-zplus}(\ref{thm:TB-BMI-zplus-3}), we can assume that whenever the game is in~$c_i$, the strategy~$\zstrat$ almost surely moves eventually to some~$d_{i,j}$.
Let $p_{i,j,t}$ be the probability that strategy~$\zstrat$, when it is in~$c_i$ at time~$t$, moves to~$d_{i,j}$.
Thus, a corresponding Maximizer strategy in the (concurrent) Big Match, when it is in~$c_i$ at time~$t$, picks action~$1$ with probability $f(i,t) \eqdef \sum_j p_{i,j,t}(1/j)$; cf.\ the proof of \cref{thm:TB-BMI-zplus}(\ref{thm:TB-BMI-zplus-3}).
For each $i \in \nat$, let $f(i)$ be an accumulation point of $f(i,1), f(i,2), \ldots$; e.g., take $f(i) \eqdef \liminf_t f(i,t)$.
We have that
\verynew{
\begin{align}
\forall {i\in \nat}\ \forall {t_0 \in \nat}\ \forall {\eps > 0}\ \exists {t\ge t_0} : & \quad f(i,t) \le f(i)+\eps\label{eq:accu-1} \\
\forall {i\in \nat}\ \forall {t_0 \in \nat}\ \forall {\eps > 0}\ \exists {t\ge t_0} : & \quad f(i,t) \ge f(i)-\eps \label{eq:accu-2}
\end{align}
}
Similarly to the proof of \cref{thm:inf-branch-no-MR}(\ref{thm:inf-branch-no-MR-2}) (see \cref{thm:inf-branch-no-MR-claim-1} therein), it suffices to show that after each visit to~$u$, Minimizer can make the probability of visiting $c_0$ before seeing $u$ again arbitrarily small.
Let $\eps > 0$.
We show that Minimizer has a strategy~$\pi$ to make this probability at most~$\eps$.

Consider the first case where $\sum_{i \ge 1} f(i) < \infty$.
Then there is $i_0 \in \nat$ such that $\sum_{i \ge i_0} f(i) \le \eps/2$.
In~$u$, strategy~$\pi$ moves to $b_{i_0}$.
Whenever the game is in some~$b_i$, strategy~$\pi$ moves to some~$b_{i,j}$ so that the game will arrive in~$c_i$ at a time~$t$ that satisfies $f(i,t) \le f(i) + 2^{-i} \cdot \eps/2$; such $t$ exists due to \eqref{eq:accu-1}.
In~$c_i$ Maximizer (using~$\zstrat)$ moves (eventually) to some~$d_{i,j}$.
Then $\pi$ always chooses ``action~$1$''; i.e., $\pi$ moves to~$r_{i,j}^1$.
In this way, the play, restricted to states~$c_i$, is either of the form
$c_{i_0}, c_{i_0+1}, \ldots$ (Maximizer loses)
or of the form $c_{i_0}, c_{i_0+1}, \ldots, c_{i_0+k}, c_0$ (Maximizer wins).
The probability of the latter can be bounded similarly to the proof of \cref{thm:BMI-zplus}(\ref{thm:BMI-zplus-3}); i.e., we have
\begin{equation*}
\probm_{\game,u,\zstrat,\ostrat}(\reach{\{c_0\}})
~\le~
\sum_{i=i_0}^\infty f(i) + 2^{-i} \cdot \frac\eps2
~\le~
\frac\eps2 + \frac\eps2
~=~
\eps.
\end{equation*}

Now consider the second case where $\sum_{i \ge 1} f(i) = \infty$.
Then there is $i_0 \in \nat$ such that $\sum_{i=1}^{i_0} f(i) \ge \frac1\eps$.
In~$u$, strategy~$\pi$ moves to $b_{i_0}$.
Whenever the game is in some~$b_i$, strategy~$\pi$ moves to some~$b_{i,j}$ so that the game will arrive in~$c_i$ at a time~$t$ that satisfies $f(i,t) \ge f(i) - 2^{-i}$; such $t$ exists due to \eqref{eq:accu-2}.
In~$c_i$ Maximizer (using~$\zstrat)$ moves (eventually) to some~$d_{i,j}$.
Then $\pi$ always chooses ``action~$0$''; i.e., $\pi$ moves to~$r_{i,j}^0$.
In this way, the play, restricted to states~$c_i$, is either of the form
$c_{i_0}, c_{i_0-1}, \ldots, c_0$ (Maximizer wins) or of the form
$c_{i_0}, c_{i_0-1}, \ldots, c_{i_0-k}$ (for some $k < i_0$), followed by $\lose, u$ (Maximizer does not reach~$c_0$ before revisiting~$u$).
The probability of the former can be bounded similarly to the proof of
\cref{thm:BMI-zplus}(\ref{thm:BMI-zplus-3}); i.e., the probability that the
play reaches $c_0$ before~$u$ is
upper-bounded by
\begin{align*}
    \prod_{i=1}^{i_0}(1 - \max\{f(i) - 2^{-i}, 0\})
    \quad
    &\le \quad \frac{1}{1+\sum_{i=1}^{i_0} (f(i) - 2^{-i})} &\text{by \cref{prop:Weierstrass}}\\ 
    &\le \quad \frac{1}{\sum_{i=1}^{i_0} f(i)} \; \le \ \eps.
\end{align*}
\end{proof}

\section{Good Strategies for Reachability Require Infinite Memory}\label{sec:infinite}
\begin{figure}[t]
\includegraphics[width=\textwidth, angle=0]{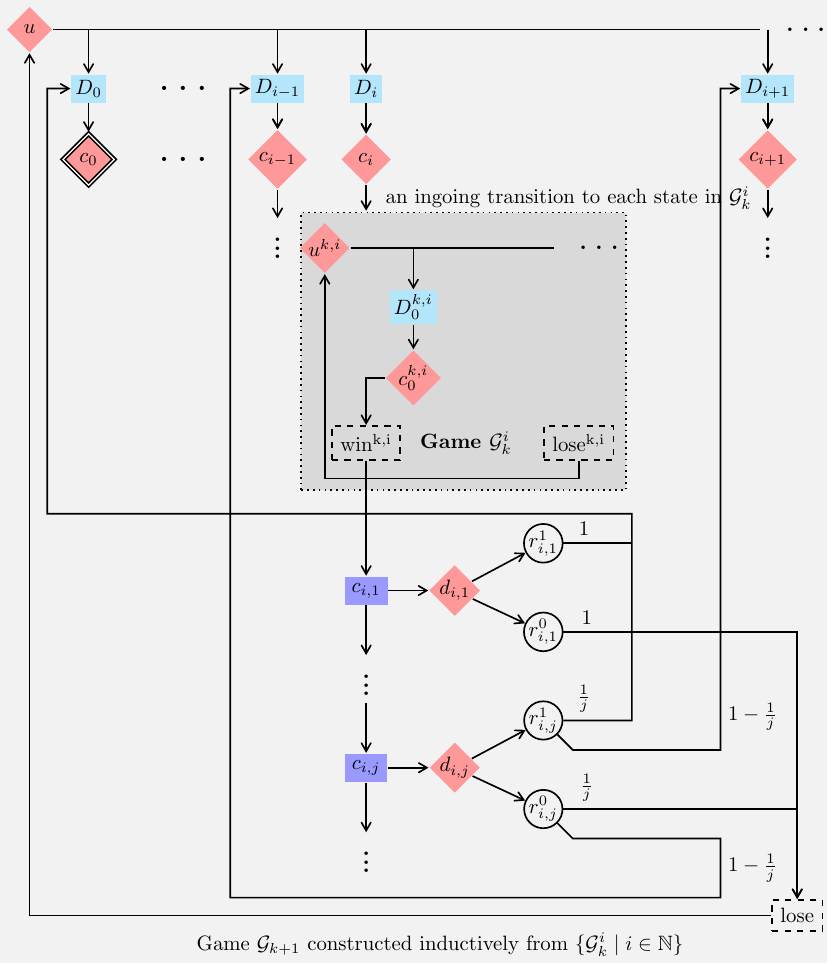}
\vspace{-.5cm}
   \caption{The scheme of the nested construction in~\Cref{def:nested-BMI}. }
 \label{fig:TheGame-inf}
\end{figure}

We show that even finite private memory, in addition to a step counter, is useless for
Maximizer in infinitely branching reachability games.
To this end, we define a nested version of the game
of \cref{def:inf-branch-no-Markov}, where the memory requirements increase
unboundedly with the nesting depth.

\begin{definition}\label{def:nested-BMI}
Let $\game$ be the game from \cref{def:inf-branch-no-Markov}.
We inductively define the \emph{$k$-nested} game $\game_k$ as follows
(see \cref{fig:TheGame-inf}).
For the base case, let $\game_1 \eqdef \game$.

For every $i\ge 1$ let $\game_k^i$ be a fresh copy of $\game_k$ and let
$u^{k,i}$ (resp.\ $c_0^{k,i}$) be the initial state $u$
(resp.\ the target state $c_0$) in $\game_k^i$.
For every $k \ge 1$ we construct $\game_{k+1}$ by modifying $\game$ as
follows.
The idea is that at every state $c_i$ Maximizer first needs to win the subgame $\game_k^i$
before continuing in the game $\game_{k+1}$, but Minimizer can choose at which
state $\state$ in $\game_k^i$ the subgame is entered.

We make the state $c_i$ Minimizer-controlled and
replace the previous Maximizer transition $c_i \transition c_{i,1}$
by Minimizer transitions $c_i \transition \state$ for every state $\state$ in
$\game_k^i$.
Moreover, we add the transitions 
$c_0^{k,i} \transition \win^{k,i} \transition c_{i,1}$.
(The new state $\win^{k,i}$ is not strictly needed. It just indicates that Maximizer
has won and exited the subgame $\game_k^i$.)
Note that also in $\game_{k+1}$ Minimizer can introduce arbitrary delays
between states $b_i$ and $c_i$.

The objective in $\game_{k+1}$ is still $\reach{\{c_0\}}$.
\end{definition}

\begin{restatable}{lemma}{lemBMInoSCplusF}\label{lem:BMI-no-SC-plus-F}
For any $k \ge 1$ let $\game_k$ be the infinitely branching turn-based
reachability game from \cref{def:nested-BMI}.
\begin{enumerate}
\item\label{lem:BMI-no-SC-plus-F-1}
All states in $\game_k$ are almost surely winning. I.e., for every state $\state$ there
exists a Maximizer strategy $\zstrat$ such that
\[
\inf_{\ostrat}\probm_{\game_k,\state,\zstrat,\ostrat}(\reach{\{c_0\}})=1.
\]
\item\label{lem:BMI-no-SC-plus-F-2}
For each Maximizer strategy~$\zstrat$ with a step counter plus a private
finite memory with $\le k$ modes
\[\inf_{\ostrat}\probm_{\game_k,u,\zstrat,\ostrat}(\reach{\{c_0\}})=0.\]
I.e., for any $\eps < 1$ there does not exist any $\eps$-optimal step counter plus
$k$ memory mode Maximizer strategy $\zstrat$ from state $u$ in $\game_k$. 
\end{enumerate}
\end{restatable}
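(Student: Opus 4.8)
The plan is to prove both items by induction on~$k$. The base case $k=1$ is immediate: $\game_1=\game$ is the game of \cref{def:inf-branch-no-Markov}, item~\ref{lem:BMI-no-SC-plus-F-1} is \cref{thm:inf-branch-no-Markov}(\ref{thm:inf-branch-no-Markov-1}), and item~\ref{lem:BMI-no-SC-plus-F-2} is \cref{thm:inf-branch-no-Markov}(\ref{thm:inf-branch-no-Markov-2}), since a strategy with a step counter and a single private memory mode is exactly a Markov strategy.

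For the inductive step I assume the lemma for $\game_k$. I would first dispatch item~\ref{lem:BMI-no-SC-plus-F-1} for $\game_{k+1}$ by composing strategies, along the lines of \cref{thm:inf-branch-no-MR}(\ref{thm:inf-branch-no-MR-1}): it suffices to win almost surely from~$u$, because from any state inside a subgame $\game_k^i$ Maximizer can, by the induction hypothesis, almost surely reach $c_0^{k,i}$ and hence $c_{i,1}$, and from $c_{i,1}$ he can play $c_{i,1}\transition d_{i,1}$ to force either a win or a return to~$u$ via~$\lose$. From~$u$, whenever Minimizer moves through some~$b_x$ (after an arbitrary delay) to~$c_x$, Maximizer first wins $\game_k^x$ almost surely, reaching $c_{x,1}$, and then runs the big-match strategy of \cref{thm:TB-BMI-zplus}(\ref{thm:TB-BMI-zplus-2}) with entry point~$x$, each top-level big-match move being preceded by almost surely winning the intervening subgame. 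This reaches $c_0$ within a bounded number of big-match moves with probability $\ge 1/4$ regardless of Minimizer; otherwise Maximizer plays aggressively to force a win or a return to~$u$. So every excursion from~$u$ reaches $c_0$ with probability $\ge 1/4$ or returns to~$u$, and the target is reached almost surely.

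The substance is item~\ref{lem:BMI-no-SC-plus-F-2} for $\game_{k+1}$. Here I would fix a Maximizer strategy $\zstrat$ with a step counter and $\le k+1$ private memory modes and an $\eps>0$, and, exactly as in \cref{thm:inf-branch-no-MR-claim-1}, reduce to showing that after any visit to~$u$ Minimizer can keep the probability of reaching $c_0$ before revisiting~$u$ below~$\eps$. Minimizer moves (after a delay fixed later) to some~$c_{i_0}$ with $i_0$ large, entering $\game_k^{i_0}$ at its local initial state $u^{k,i_0}$. Inside $\game_k^{i_0}$ she repeatedly forces the play back to $u^{k,i_0}$ (using $\lose^{k,i_0}\transition u^{k,i_0}$) while using the \emph{nested} delay gadgets of $\game_k^{i_0}$ to neutralise the step counter along the lines of \cref{thm:inf-branch-no-Markov}(\ref{thm:inf-branch-no-Markov-2}); since Maximizer has $\le k+1$ modes, a pigeonhole over the modes seen at these revisits yields a dichotomy. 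Either $\zstrat$'s behaviour in the subgame is, up to relabelling, that of a step counter plus $\le k$ modes, and then the induction hypothesis (item~\ref{lem:BMI-no-SC-plus-F-2} for $\game_k$) lets Minimizer keep the play inside $\game_k^{i_0}$ with probability $\ge 1-\eps$, so it never reaches $\win^{k,i_0}$ and hence never $c_0$. Or $\zstrat$ genuinely uses all $k+1$ modes in the subgame, and then the exit mode at $c_{i_0,1}$, together with the step-counter value there (pinned near an accumulation point), forces Maximizer's top-level big-match action into one of finitely many position-indexed values $f^{(0)}(i),\dots,f^{(k)}(i)$, one per exit mode; the same analysis applies at every later big-match state the play reaches. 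Since Minimizer can also influence which of these values is used at each~$c_i$ (by choosing when to release Maximizer from the subgame, after pumping), she can reduce Maximizer's play at the states~$c_i$ to a fixed sequence $g$ with $g(i)=\min_m f^{(m)}(i)$ and then finish as in \cref{thm:BMI-zplus}(\ref{thm:BMI-zplus-3}): if $\sum_i g(i)<\infty$ she plays ``always~$1$'' from a large~$i_0$, and otherwise ``always~$0$'' from an~$i_0$ with $\sum_{i\le i_0} g(i)\ge 1/\eps$, the Weierstrass-type estimate of that proof (\cref{prop:Weierstrass}) capping the probability of reaching $c_0$ before~$u$ by~$\eps$; letting $\eps\to 0$ gives the claim.

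The hard part will be the memory-mode bookkeeping in the nested subgame, i.e.\ making the above dichotomy and the ``release at the right time'' step precise. One must show that a $(k{+}1)$-mode step-counter strategy cannot simultaneously win $\game_k^{i_0}$ --- whose very design defeats $k$-mode strategies by the induction hypothesis --- and keep enough memory for a genuinely non-positional top-level big-match behaviour. This requires interleaving the pigeonhole over modes with the delay-based time-neutralisation carefully: the mode at a re-entered state can still vary after the visit times have been pinned near accumulation points, so the correct invariant tracks finitely many (mode, accumulation-class) pairs per nesting level; and one must check both that Minimizer can loop a mode-repeating play fragment inside the subgame consistently with Maximizer's step-counter-dependent choices, and that in the second case she can drive the exit mode to the one (among those occurring infinitely often) that minimises $f^{(m)}(i)$ while also hitting a time where $f^{(m)}(i,\cdot)$ is near its accumulation point. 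These compatibility arguments are where the real work lies.
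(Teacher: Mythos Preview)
Your plan for item~\ref{lem:BMI-no-SC-plus-F-1} matches the paper. For item~\ref{lem:BMI-no-SC-plus-F-2} the outer framing (reduce to bounding the probability of $c_0$ before~$u$, then a two-case big-match analysis) is right, but the inductive-step mechanism has two genuine gaps. First, the induction hypothesis must be strengthened: for every $\delta>0$ there should exist a \emph{single} Minimizer strategy in $\game_k$ that bounds Maximizer's attainment from~$u$ by~$\delta$ \emph{for every initial memory mode and every starting time}. Without this uniformity you cannot apply it inside a subgame $\game_k^i$, where the entry mode and time depend on the outer play and are hidden from Minimizer. Second, your ``release Maximizer from the subgame after pumping'' idea has no mechanism in this game: exit from $\game_k^i$ happens exactly when Maximizer reaches $c_0^{k,i}$, and Minimizer cannot choose when or whether this occurs, nor select Maximizer's exit mode. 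The second horn of your dichotomy is therefore unsupported.

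The paper replaces your per-subgame pigeonhole with a \emph{global} case split. Define $\alpha(i,\memconf,t,\ostrat')$ as the probability, from $c_i$ at time~$t$ in mode~$\memconf$ against~$\ostrat'$, that Maximizer either loses $\game_k^i$ or exits it and then plays outer action~``1''; set $\alpha(i,t)=\min_\memconf\sup_{\ostrat'}\alpha(i,\memconf,t,\ostrat')$ and $\alpha(i)=\limsup_t\alpha(i,t)$, then split on whether $\sum_i\alpha(i)$ diverges. If it diverges, Minimizer plays outer~``0'' and, inside each subgame, \emph{randomises uniformly} over the $k{+}1$ near-optimal responses (one per guessed Maximizer mode); the factor $1/(k{+}1)$ does not kill divergence. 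Your $g(i)=\min_m f^{(m)}(i)$ is essentially $\alpha(i)$, but Minimizer cannot force the minimising mode---she must hedge. If $\sum_i\alpha(i)$ converges, Minimizer plays outer~``1'', and the key idea is a \emph{forbidden mode} $\memconf(\state,t)$ for each state~$\state$ in the subgame and each time~$t$: the mode realising the analogous minimum from~$\state$. If Maximizer is ever in the forbidden mode for the current $(\state,t)$, then \emph{regardless of Minimizer's subsequent play} his post-exit ``1''-probability is at most $(k{+}1)\alpha(i,t{-}1)$, which is small. If he is never in a forbidden mode, his strategy is equivalent to a $k$-mode one (relabel: whenever $\zstrat$ would enter mode~$k$ at $(\state,t)$, use the slot $\memconf(\state,t)$ instead---well-defined precisely because that slot is vacated), and the strengthened hypothesis defeats him inside the subgame. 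The time-dependence of the forbidden mode is what makes this relabelling well-defined; your ``up to relabelling'' gesture points here but misses that the bijection must vary with $(\state,t)$.
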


\ignore{
\begin{proof}[Proof sketch.]
The full proof can be found in \Cref{app:infinite}. We prove both items simultaneously by induction on $k$. 
The base case of $k=1$ follows immediately from
\cref{thm:inf-branch-no-Markov}
and the induction step for \cref{lem:BMI-no-SC-plus-F-1} is easily shown by repeatedly
applying
\cref{thm:inf-branch-no-Markov-1}
of \cref{thm:inf-branch-no-Markov}.

The main obstacle in the induction step $k \to k+1$ for 
\cref{lem:BMI-no-SC-plus-F-2}
is that Maximizer has total freedom how
to use his memory. One cannot assume that he neatly organizes the memory
with some modes reserved to handle the recursive subgames $\game_k^i$ and others to handle
the outer game $\game_{k+1}$.
Moreover, both the steps and the memory updates can be
randomized and the memory modes are private.
Finally, Maximizer's choices can depend on the step counter.

In our construction, Minimizer can largely negate the benefit of Maximizer's step
counter by introducing arbitrary finite delays
(similar, but not identical, to \Cref{thm:inf-branch-no-Markov}).
However, these delays need to be chosen
carefully, in order to match up with the argument why Maximizer's $k+1$ private memory modes are
insufficient.

Since Maximizer's memory is private, Minimizer does not know its content
and thus sometimes needs to hedge her bets and randomize.
However, since this private memory is finite, the penalty for the hedging
is only a constant factor $1/(k+1)$, which eventually becomes irrelevant.  

We show that each Maximizer strategy falls into one of the following two
cases.
For each case, we construct a good strategy for Minimizer that limits
Maximizer's attainment to some arbitrarily small $\eps$.

In the first case, there exist sufficiently many subgames $\game_k^i$ in which Minimizer
can ensure the following property: When (and if) Maximizer wins the subgame,
then, in the outer game $\game_{k+1}$, Maximizer will play action `1' with a (somewhat) high
probability in the next round. Then Minimizer can play action `0' in the
outer game $\game_{k+1}$ and wins with that high probability. Since this happens in sufficiently
many subgames, Minimizer has a very high chance of winning.

In the remaining second case, Minimizer cannot enforce the first case, in most
of the subgames $\game_k^i$.
For every state in these subgames and every time $t$ there
exists a `forbidden' memory mode for Maximizer. If Maximizer sets his memory
to this forbidden mode (for the current state and time), then he will play
action `1' with a (somewhat) low probability in the outer game $\game_{k+1}$
after (and if) he wins the
subgame, regardless of Minimizer's choices inside the subgame.
Minimizer always plays action `1' in the outer game $\game_{k+1}$.
So if Maximizer enters a forbidden memory mode in (sufficiently many) subgames $\game_k^i$,
then Minimizer will win the outer game by going to states $c_i$ with higher
and higher index numbers $i$, i.e., more distant from the target $c_0$.
On the other hand, if Maximizer does not enter any forbidden memory mode in
the subgame $\game_k^i$, then he is effectively reducing the number of his available memory
modes from $k+1$ to $k$.
Inside the subgame $\game_k^i$, Minimizer plays a good strategy
that limits the attainment of any Maximizer strategy with only $k$ memory
modes. Such a Minimizer strategy exists by the induction hypothesis.
So, in either case, whether Maximizer enters a forbidden memory mode
in the subgames $\game_k^i$ or not, his chances of winning the
outer game $\game_{k+1}$ are low.
\end{proof}
}

\begin{proof}
We show \cref{lem:BMI-no-SC-plus-F-1} by induction on $k$.

In the base case of $k=1$ we have $\game_1 = \game$ from
\cref{def:inf-branch-no-Markov}, and thus the result holds by
\cref{thm:inf-branch-no-Markov}(\cref{thm:inf-branch-no-Markov-1}).

Induction step $k \transition k+1$.
For every state $\state$ in $\game_{k+1}$ outside of any subgame,
let $\zstrat'(\state)$ be the almost surely winning Maximizer strategy from $\state$ in the
non-nested game $\game_1$, obtained from above.
By the induction hypothesis, for any state $\state$ in a subgame $\game_k^i$
there exists a Maximizer strategy $\zstrat_k^i(\state)$
from $\state$ that almost surely wins this subgame $\game_k^i$.

We now construct a Maximizer strategy $\zstrat(\state)$ from any state
$\state$ in $\game_{k+1}$.
If $\state$ is not in any strict subgame then $\zstrat(\state)$ plays like
$\zstrat'(\state)$ outside of the subgames. Whenever a subgame $\game_k^i$ is
entered at some state $x$ then it plays like $\zstrat_k^i(x)$ until $\win^{k,i}$ is
reached (which happens eventually almost surely by the definition of $\zstrat_k^i(x)$)
and the play exits the subgame, and then it continues with the outer strategy $\zstrat'(\state)$.
Similarly, if the start state $\state$ is in some subgame $\game_k^i$ then it first
plays $\zstrat_k^i(\state)$ until $\win^{k,i}$ is reached
(which happens eventually almost surely by the definition of
$\zstrat_k^i(\state)$) and the play exits the subgame, 
and then it continues like the strategy $\zstrat'(c_{i,1})$ described above.
Then $\zstrat(\state)$ wins almost surely, since the strategies $\zstrat'$ and
$\zstrat_k^i$ win almost surely.

\smallskip
Towards \cref{lem:BMI-no-SC-plus-F-2}, we show, by induction on $k$,
\new{the following slightly stronger property}.
For each Maximizer strategy~$\zstrat$ in $\game_k$
with a step counter plus a private
finite memory with $\le k$ modes,
for every $\delta >0$ there exists a Minimizer strategy
$\ostrat$ that upper-bounds Maximizer's attainment by $\delta$
regardless of Maximizer's initial memory
mode and the starting time.
Formally,
\begin{equation}\label{eq:stronger-induction-claim}
\forall \delta >0\,\exists\ostrat\,
\forall \memconf\, \forall
t\ \probm_{\game_{k},u,\zstrat[\memconf](t),\ostrat}(\reach{\{c_0\}}) \le \delta
\end{equation}
For the base case $k=1$ we have 
$\game_1 = \game$ from \cref{def:inf-branch-no-Markov}.
Since $k=1$, Maximizer has only one memory mode.
Moreover, Minimizer's strategy from the proof of
\cref{thm:inf-branch-no-Markov}(\ref{thm:inf-branch-no-Markov-2})
works regardless of the starting time $t$ (since it just chooses
delays in the states $b_i$ to
satisfy \eqref{eq:accu-1} and \eqref{eq:accu-2}).
Thus we obtain \eqref{eq:stronger-induction-claim}.

Induction step $k \transition k+1$.
Consider the game $\game_{k+1}$ and a fixed Maximizer 
strategy $\zstrat$ with a step counter plus $(k+1)$ private memory modes
from state $u$.
Let $\{0,1,\dots,k\}$ denote the $k+1$ private memory modes of $\zstrat$.

From every state $c_i$ in $\game_{k+1}$ we enter the subgame $\game_k^i$, at
some state chosen by Minimizer.
When (and if) Maximizer wins this subgame then we are in state $\win^{k,i}$
and the game $\game_{k+1}$ continues with Maximizer's choice at $c_{i,1}$, etc.

Consider the state $c_i$, visited at some time $t$
with some Maximizer memory mode $\memconf$.
Let $\ostrat'$ be some Minimizer strategy.
Then let $\alpha(i,\memconf,t,\ostrat')$ be the probability
that Maximizer will play action
``1'' (w.r.t.\ the encoded concurrent game)
in the next round in $\game_{k+1}$ after winning the subgame $\game_k^i$ (i.e.,
after reaching $\win^{k,i}$), or loses the subgame (never reaches $\win^{k,i}$).
So $\alpha(i,\memconf,t,\ostrat')$ is
the probability of losing the subgame $\game_k^i$ plus
$\sum_j (1/j)\cdot p_j$, where $p_j$
is the probability of winning the subgame and then directly going to $d_{i,j}$
(i.e., in the same round, without seeing any other state $c_i$ before).
\new{To formally capture this notion of the ``same round'', we let
$C \eqdef \{c_i \mid i \in \N\}$ and define}
\begin{align*}
    \alpha(i,\memconf,t,\ostrat') ~\eqdef~& \probm_{\game_{k+1},c_i,\zstrat[\memconf](t),\ostrat'}(\neg\reach{\{\win^{k,i}\}})\\
&\quad + \sum_j (1/j)
\probm_{\game_{k+1},c_i,\zstrat[\memconf](t),\ostrat'}(d_{i,j}\ \before\ C)
\end{align*}
\new{
where $(d_{i,j}\ \before\ C)$ denotes the set of plays that visit state
$d_{i,j}$ before visiting any state in $C$ \emph{again}, i.e., any visit to $C$ at
the start state (here $c_i$) does not count.}
The probability $\alpha(i,\memconf,t,\ostrat')$ depends on $i$ (since we are looking at
state $c_i$), on Maximizer's private memory mode
$\memconf \in \{0,1,\dots,k\}$ at state $c_i$, at the time $t\in \N$ when we are at
$c_i$ and on Minimizer's strategy $\ostrat'$.
Let $\alpha(i,\memconf,t) \eqdef \sup_{\ostrat'} \alpha(i,\memconf,t,\ostrat')$ be the supremum
over all Minimizer strategies.
Let $\alpha(i,t) \eqdef \min_{\memconf \in \{0,\dots,k\}} \alpha(i,\memconf,t)$ the
minimum over all memory modes.
Intuitively, when entering the subgame $\game_k^i$ from $c_i$ at time $t$, for each
Maximizer memory mode $\memconf$, for each $\eps>0$,
Minimizer has a strategy to make the probability
of Maximizer playing action ``1'' after winning the subgame (or else losing
the subgame) at least $\alpha(i,t)-\eps$.
Let $\alpha(i)$ be the maximal accumulation point of the
infinite sequence $\alpha(i,1), \alpha(i,2), \dots$, i.e.,
$\alpha(i) \eqdef \limsup_t \alpha(i,t)$. 
We have:
\begin{align}
\forall\,i\in \nat \ \forall\,\eps > 0\ \forall\,t_0 \in \nat \ \exists\,t\ge t_0 : & \quad \alpha(i,t) \ge \alpha(i)-\eps \label{eq:accum-2}\\
\forall\,i\in \nat \ \forall\,\eps > 0\ \exists\,t_0 \in \nat
\ \forall\,t\ge t_0 : & \quad \alpha(i,t) \le \alpha(i)+\eps\label{eq:accum-1}
\end{align}
Now there are two cases.

In the first case, $\sum_i \alpha(i)$ diverges.
Intuitively, since we have $\alpha(i) = \limsup_t \alpha(i,t)$, Minimizer chooses the
delays at the states $b_i$ in order to make $\alpha(i,t)$ ``large'', i.e.,
close to $\alpha(i)$, by using \eqref{eq:accum-2}.

Analogously to \cref{thm:inf-branch-no-MR-claim-1}, it suffices, for every
$\eps >0$ to construct a Minimizer strategy $\ostrat$ in $\game_{k+1}$ from state $u$
that makes the probability of visiting $c_0$ before revisiting~$u$
(denoted as the event ``$\czerobeforeu$'') at most~$\eps$.

We construct a Minimizer strategy $\ostrat$ that plays as follows.
First $\ostrat$ picks the transition $u \transition b_{i_0}$ for a sufficiently high
$i_0 \in \nat$, to be determined.
At every state $d_{i,j}$, outside of the subgames, $\ostrat$ always plays
action ``0'', i.e., $d_{i,j} \transition r_{i,j}^0$.
This implies that the state $c_i$ is not visited again, unless the state $u$
is re-visited first.
(If Maximizer plays action ``1'' then he loses this round
and the game goes back to $u$.
If Maximizer plays ``0'' then the game goes down to $c_{i-1}$.)

At every state $b_i$ at each time $t'$ the strategy $\ostrat$ picks a delay such that the game arrives
at $c_i$ at a time $t$ such that
\begin{equation}\label{eq:ait-ai}
\alpha(i,t) \ge \alpha(i)- \frac{1}{4} 2^{-i}.
\end{equation}
This is possible by \cref{eq:accum-2}. Let $T_i$ be the set of the times $t$ that
satisfy \eqref{eq:ait-ai}. Minimizer's strategy ensures that the states $c_i$
are only reached at times $t \in T_i$.

From state $c_i$, at each time $t$, for each memory mode $\memconf$ there exists a
Minimizer strategy $\ostrat(i,\memconf,t)$ such that
we have
$\alpha(i,\memconf,t,\ostrat(i,\memconf,t)) \ge \alpha(i,\memconf,t) - \frac{1}{4} 2^{-i}$,
by the definition of
$\alpha(i,\memconf,t) \eqdef \sup_{\ostrat'} \alpha(i,\memconf,t,\ostrat')$.

Since the Maximizer memory mode $\memconf$ is private, Minimizer does not know it.
So our Minimizer strategy $\ostrat$
hedges her bets over all possible $\memconf \in \{0,\dots,k\}$ and plays
each strategy $\ostrat(i,\memconf,t)$ with equal probability $\frac{1}{k+1}$.
It follows that for every $i$
and time $t \in T_i$ chosen according to \cref{eq:ait-ai},
after (and if) winning the subgame $\game_k^i$, Maximizer
plays action ``1'' with probability
\begin{equation}\label{eq:case-one-alpha-large}
    \begin{aligned}
        \alpha(i,\memconf,t,\ostrat) &~\ge~ \frac{1}{k+1} \left(\alpha(i,t) - \frac{1}{4} 2^{-i}\right)
\\
                                     &~\ge~
\frac{1}{k+1} \left(\alpha(i) - \frac{1}{4} 2^{-i} - \frac{1}{4} 2^{-i}\right)
= \frac{1}{k+1} \left(\alpha(i) - \frac{1}{2} 2^{-i}\right).
    \end{aligned}
\end{equation}
Since, outside of subgames, Minimizer always plays action ``0'' and
each state $c_i$ is visited at most once
(unless state $u$ is re-visited),
we get for every starting memory mode
$\memconf'$ and starting time $t' \in T_{i_0}$ that
the probability of visiting $c_0$ before revisiting $u$ is
upper-bounded, i.e.,
\begin{align*}
\probm_{\game_{k+1},c_{i_0},\zstrat[\memconf'](t'),\ostrat}(\czerobeforeu) 
~\le~
\prod_{i=1}^{i_0} (1 - \min_{\memconf}\inf_{t \in T_i} \alpha(i,\memconf,t,\ostrat)).
\end{align*}
Since $\sum_i \alpha(i)$ diverges, it follows
from \eqref{eq:case-one-alpha-large} that
\[
\sum_{i=1}^\infty \min_{\memconf}\inf_{t \in T_i} \alpha(i,\memconf,t,\ostrat)
\ge \left(\frac{1}{k+1}\sum_i \alpha(i)\right) - \frac{1}{2k+2}
\]
also diverges.
From \cref{prop:product-sum} we obtain
\[
\prod_{i=1}^\infty (1 - \min_{\memconf}\inf_{t \in T_i} \alpha(i,\memconf,t,\ostrat)) = 0
\]
and hence
\[
\lim_{i_0 \to \infty} \prod_{i=1}^{i_0} (1 - \min_{\memconf}\inf_{t \in T_i}
\alpha(i,\memconf,t,\ostrat))= 0.
\]
Thus, for every $\eps >0$,
there exists a sufficiently large $i_0$ such that
for all $t' \in T_{i_0}$ and all \new{memory modes} $\memconf'$
\[
\probm_{\game_{k+1},c_{i_0},\zstrat[\memconf'](t'),\ostrat}(\czerobeforeu)
\le \eps.
\]
\new{
Recall that at state $u$ the Minimizer strategy
$\ostrat$ picks the transition $u \transition b_{i_0}$,
i.e., the number $i_0$ is Minimizer's choice.
Moreover, the delay gadget $D_{i_0}$ ensures that state $c_{i_0}$
is entered at some time $t' \in T_{i_0}$, regardless of the time $t$ at state $u$.
}
Hence, for all $t \in \N$ and all \new{memory modes} $\memconf$ we have
\[
\probm_{\game_{k+1},u,\zstrat[\memconf](t),\ostrat}(\czerobeforeu)
\le \eps.
\]
Analogously to \cref{thm:inf-branch-no-MR-claim-1},
we can pick $i_0^l$ and $\eps^l = \delta \cdot 2^{-l}$ after the $l$-th visit to $u$
such that for all $t \in \N$ and all $\memconf$ we have
\[
\probm_{\game_{k+1},u,\zstrat[\memconf](t),\ostrat}(\reach{\{c_0\}}) \le
\sum_{l=1}^\infty \eps_l = \delta
\]
as required.

\bigskip
Now we consider the second case where $\sum_i \alpha(i)$ converges.
We construct a Minimizer strategy $\ostrat$ in $\game_{k+1}$ that plays as follows.
First $\ostrat$ picks the transition $u \transition b_{i_0}$ for a sufficiently high
$i_0 \in \nat$, to be determined.
At every state $d_{i,j}$, outside of the subgames, $\ostrat$ always plays
action ``1'', i.e., $d_{i,j} \transition r_{i,j}^1$.
This implies that each state $c_i$ with $i \ge i_0$ is visited at most once,
and states $c_i$ with $0< i < i_0$ are never visited.
(If Maximizer plays action ``1'' then he immediately wins at $c_0$ and if he plays ``0'' then the
games goes up to $c_{i+1}$.)

By \cref{eq:accum-1}, for every $i \ge 1$ there exists a time $t_i \in \nat$ such that 
\begin{equation}\label{eq:ait-ai-2}
\forall t \ge t_i.\ \alpha(i,t) \le \alpha(i) + \frac{1}{4} 2^{-i}.
\end{equation}
Let $T_i = \{t \in \nat \mid t \ge t_i\}$ be the set of the times $t$ that
satisfy \eqref{eq:ait-ai-2}.
At each state $b_i$ Minimizer's strategy $\ostrat$ delays sufficiently long
such that $c_i$ is reached at a time $t \in T_i$. This is possible for every
arrival time $t'$ at $b_i$.

Consider a state $\state$ in the subgame $\game_k^i$
that is reached at some time $t$ when Maximizer's strategy is in some
memory mode $\memconf$, and let $\ostrat'$ be a Minimizer strategy.
Let $\beta(\state,i,\memconf,t,\ostrat')$ be the probability 
that Maximizer will play action
``1'' (w.r.t.\ the encoded concurrent game)
in the next round in $\game_{k+1}$ after winning the subgame
$\game_k^i$ (i.e., after reaching $\win^{k,i}$), or loses the subgame (never reaches $\win^{k,i}$).
So $\beta(\state,i,\memconf,t,\ostrat')$ is
the probability, from state $\state$, of losing the subgame $\game_k^i$ plus
$\sum_j (1/j)\cdot p_j$, where $p_j$
is the probability of winning the subgame and then directly going to $d_{i,j}$ 
(without visiting any other state $c_i$ in between).
\new{Recall that $C = \{c_i \mid i \in \N\}$.} We let
\begin{align*}
    \beta(\state,i,\memconf,t,\ostrat') ~\eqdef~& \probm_{\game_{k+1},\state,\zstrat[\memconf](t),\ostrat'}(\neg\reach{\{\win^{k,i}\}})\\
&\quad + \sum_j (1/j)
\probm_{\game_{k+1},\state,\zstrat[\memconf](t),\ostrat'}(d_{i,j}\ \before\ C)
\end{align*}
Let $\beta(\state,i,\memconf,t) \eqdef \sup_{\ostrat'} \beta(\state,i,\memconf,t,\ostrat')$ be the supremum
over all Minimizer strategies.
Let
\[
\beta(\state,i,t) \eqdef \min_{\memconf \in \{0,\dots,k\}} \beta(\state,i,\memconf,t)
\]
be the minimum over all memory modes.

\begin{claim}\label{lem:BMI-no-SC-plus-F-claim-1}
For all states $\state$ in $\game_k^i$ and times $t$ we have $\beta(\state,i,t+1) \le (k+1)\alpha(i,t)$.
\end{claim}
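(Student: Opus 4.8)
The plan is to exploit that, by \cref{def:nested-BMI}, the state $c_i$ is Minimizer-controlled and carries an edge $c_i\transition\state$ to \emph{every} state $\state$ of the subgame $\game_k^i$. Hence $\alpha(i,t)$ (a quantity anchored at $c_i$ at time $t$) and the family $\beta(\state,i,\cdot,t+1)$ (anchored one step later inside $\game_k^i$) are separated by a single transition, and the event both of them measure — never reaching $\win^{k,i}$, or reaching it and then moving directly to some $d_{i,j}$, weighted by $1/j$ — depends only on the play from that successor state onwards (leaving $\game_k^i$ forces a visit to $c_0^{k,i}$ and then $\win^{k,i}$, and the chain $c_{i,1},c_{i,2},\dots$ never meets $C_{-i}$, so nothing relevant can happen before time $t+1$). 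The only reason any factor appears is Maximizer's \emph{private} memory update across the edge $c_i\transition\state$: the single memory mode present at $c_i$ is redistributed, by $\zstrat$'s (possibly randomised) update rule, over the $k+1$ modes $\{0,\dots,k\}$, so at least one mode receives probability $\ge\frac1{k+1}$; Minimizer then continues against precisely that mode.

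Concretely, I would fix $\state,i,t$ and pick $\memconf_0$ attaining $\alpha(i,t)=\min_{\memconf}\alpha(i,\memconf,t)$. Let $\mu\in\dist(\{0,\dots,k\})$ be the distribution over Maximizer memory modes that $\zstrat$ yields at $c_i$, visited at time $t$ in mode $\memconf_0$, upon observing the transition $c_i\transition\state$; let $\memconf^*\in\argmax_{\memconf'}\mu(\memconf')$, so $\mu(\memconf^*)\ge\frac1{k+1}$. Given $\delta>0$, choose a Minimizer strategy $\ostrat''$ with $\beta(\state,i,\memconf^*,t+1,\ostrat'')\ge\beta(\state,i,\memconf^*,t+1)-\delta$, and let $\ostrat'$ be the Minimizer strategy that first takes $c_i\transition\state$ and thereafter plays $\ostrat''$. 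Conditioning on the (private) memory mode present when the play is at $\state$ at time $t+1$ gives
\begin{align*}
\alpha(i,t)\ \ge\ \alpha(i,\memconf_0,t,\ostrat')\ &=\ \sum_{\memconf'=0}^{k}\mu(\memconf')\,\beta(\state,i,\memconf',t+1,\ostrat'')\\
&\ge\ \mu(\memconf^*)\bigl(\beta(\state,i,\memconf^*,t+1)-\delta\bigr)\,.
\end{align*}
Letting $\delta\to 0$ and using $\mu(\memconf^*)\ge\frac1{k+1}$ together with $\beta(\state,i,\memconf^*,t+1)\ge\beta(\state,i,t+1)$ yields $\alpha(i,t)\ge\frac1{k+1}\beta(\state,i,t+1)$, which is the claim.

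The steps that need care are the two one-sided choices — taking $\memconf_0$ to witness the minimum defining $\alpha(i,t)$, and $\ostrat''$ to (nearly) witness the supremum defining $\beta(\state,i,\memconf^*,t+1)$ — since everything else is monotone in the ``right'' direction; and the localisation argument, i.e.\ that no probability mass of the measured event is ``spent'' before time $t+1$. I expect the only genuinely delicate point to be the memory bookkeeping at the Minimizer-controlled state $c_i$: one must use that a step-counter-plus-private-finite-memory Maximizer strategy still updates its memory upon observing a Minimizer move, so that the constant is exactly $\frac1{k+1}=1/|\{0,\dots,k\}|$, and that — Maximizer's memory being private — the identity of $\memconf^*$ is never needed by Minimizer, which is why this crude hedging estimate already suffices for the claim.
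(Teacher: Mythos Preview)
Your argument is correct and the localisation step (nothing relevant to the event can happen on the single edge $c_i\transition\state$) is exactly right. The route, however, differs from the paper's in where the factor $\frac{1}{k+1}$ comes from. The paper builds a \emph{single} Minimizer strategy $\hat\ostrat$ that first plays $c_i\transition\state$ and then, at $\state$, randomises \emph{uniformly} over $k+1$ continuations, one $\eps$-optimal for each possible memory mode $\memconf'\in\{0,\dots,k\}$; the $\frac{1}{k+1}$ is the probability that the randomisation ``guesses right''. This $\hat\ostrat$ works simultaneously against every initial mode $\memconf$, so the chain $\alpha(i,t)\ge\min_\memconf\alpha(i,\memconf,t,\hat\ostrat)\ge\frac{1}{k+1}(\beta(\state,i,t+1)-\eps)$ goes through without ever fixing $\memconf_0$. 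You instead fix the minimising mode $\memconf_0$ up front, which pins down the post-transition memory distribution $\mu$, and then have Minimizer play a \emph{deterministic} continuation tailored to the majority mode $\memconf^*$; the $\frac{1}{k+1}$ now comes from pigeonhole on $\mu$. Your approach is slightly more elementary (no randomisation by Minimizer is needed for this claim), while the paper's has the advantage that $\hat\ostrat$ is mode-independent, which is more in the spirit of the surrounding argument where Minimizer genuinely cannot see Maximizer's memory. For the claim as stated, either suffices. One small remark: your closing sentence that ``the identity of $\memconf^*$ is never needed by Minimizer'' is slightly misleading---your $\ostrat''$ is chosen \emph{based on} $\memconf^*$; what is true (and what you presumably mean) is that $\memconf^*$ is computable offline from $\zstrat$, $\memconf_0$, $\state$, $t$ and need not be observed during play.
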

\begin{proof}
Consider the situation where we are at state $c_i$ at time $t$ and
the Maximizer strategy $\zstrat$ is in some memory mode $\memconf$ (unknown to
Minimizer). Then some particular Minimizer strategy $\hat{\ostrat}$ could play
$c_i \transition \state$ to arrive at $\state$ in one
step at time $t+1$. 
Meanwhile, $\zstrat$ can update its memory to some other mode $\memconf'$ (or
a distribution over memory modes), still unknown to Minimizer.
Then $\hat{\ostrat}$ hedges her bets by guessing Maximizer's memory mode
$\memconf'$.
For each of the $k+1$ possible modes $\memconf'$, the Minimizer
strategy $\hat{\ostrat}$ plays, with probability
$\frac{1}{k+1}$, an $\eps$-optimal strategy to maximize the probability that
Maximizer plays action ``1'' after winning $\game_k^i$ (or loses the subgame).
Thus 
\begin{align*}\alpha(i,t) & = \min_\memconf\sup_{\ostrat'} \alpha(i,\memconf,t,\ostrat')\\
&\ge
\min_\memconf \alpha(i,\memconf,t,\hat{\ostrat})\\
&\ge
\min_{\memconf'} \frac{1}{k+1} \beta(\state,i,\memconf',t+1,\hat{\ostrat})\\
&\ge 
\frac{1}{k+1} (\min_{\memconf'} \sup_{\ostrat'} \beta(\state,i,\memconf',t+1,\ostrat') - \eps)\\
&=
\frac{1}{k+1} (\beta(\state,i,t+1) - \eps).
\end{align*}
Since this holds for every $\eps >0$, the claim follows.
\end{proof}

By \cref{lem:BMI-no-SC-plus-F-claim-1}, for every $\state$
in $\game_k^i$ and time $t+1$ there exists at least
one memory mode $\memconf(\state,t+1)$
(a mode $\memconf$ where the minimum $\beta(\state,i,t+1) = \min_{\memconf \in \{0,\dots,k\}} \beta(\state,i,\memconf,t+1)$
is realized) such that
if $\zstrat$ enters $\state$ at time $t+1$ in mode $\memconf(\state,t+1)$
then after winning $\game_k^i$ (if at all) Maximizer plays action ``1''
with a ``small'' probability $\le (k+1)\alpha(i,t)$.
Crucially, this property holds for the $\sup$ over the Minimizer strategies and thus for
\emph{every} Minimizer strategy inside the subgame $\game_k^i$.
In particular it holds for the
Minimizer strategy $\ostrat$ that we will construct.
We call $\memconf(\state,t+1)$ the \emph{forbidden} memory mode
for state $\state$ at time $t+1$.

Above we have defined our Minimizer strategy $\ostrat$ so that it adds sufficient
delays in the states $b_i$ such that $c_i$ is only visited at times $t \ge t_i$.
This implies that states $\state$ in $\game_k^i$ are only visited at times
$t+1$ where $t \ge t_i$.
Since for these times \cref{eq:ait-ai-2} is satisfied, we obtain
\begin{equation}\label{eq:b_is_small}
\forall t \ge t_i.\ \beta(\state,i,\memconf(\state,t+1),t+1) \le (k+1)(\alpha(i) + \frac{1}{4}2^{-i}).
\end{equation}

Let $\zstrat'$ be a restriction of $\zstrat$ that, inside the subgame
$\game_k^i$, is never in the forbidden memory mode $\memconf(\state,t+1)$ at state $\state$
at time $t+1$, or else concedes defeat.

\begin{claim}\label{claim:avoiding-forbidden-less-memory}
Consider a step counter plus $(k+1)$ mode Maximizer strategy $\zstrat'$ in $\game_k^i$
that is never in the forbidden memory mode $\memconf(\state,t+1)$ at state $\state$
at time $t+1$.
Then there exists a step counter plus $k$ mode Maximizer strategy $\zstrat''$ in
$\game_k^i$ that performs equally well as $\zstrat'$ against any Minimizer
strategy.
\end{claim}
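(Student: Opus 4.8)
\end{claim}
\begin{proof}[Proof of the claim]
The plan is to \emph{relabel} the memory modes of $\zstrat'$, where the relabeling depends on the current state and on the value of the step counter. This works because, at each state/time pair, at most $k$ of the available $k+1$ modes of $\zstrat'$ are ever used: the forbidden one is excluded.

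Recall that to every state $\state$ in $\game_k^i$ and every time $t$ a forbidden mode $\memconf(\state,t)$ is associated (for the finitely many pairs not already covered by the preceding argument, just designate one arbitrarily), and that, by hypothesis, along every play consistent with $\zstrat'$ and any Minimizer strategy, whenever the play is at $\state$ at time $t$ the current memory mode of $\zstrat'$ differs from $\memconf(\state,t)$. For each pair $(\state,t)$ fix a bijection
\[
\phi_{\state,t}\colon \{0,1,\dots,k\}\setminus\{\memconf(\state,t)\} \;\longrightarrow\; \{1,\dots,k\},
\]
which exists since the left-hand set has exactly $k$ elements.

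Next I would define the step counter plus $k$-mode strategy $\zstrat''$ as a simulation of $\zstrat'$ through these relabelings. If $\zstrat'$ starts at state $\state$ at time $t$ in mode $\memconf$ (necessarily $\memconf\ne\memconf(\state,t)$), then $\zstrat''$ starts in mode $\phi_{\state,t}(\memconf)$. At state $\state$ at time $t$ in $\zstrat''$-mode $m\in\{1,\dots,k\}$, the strategy $\zstrat''$ plays exactly the Maximizer action that $\zstrat'$ plays at $\state$ at time $t$ in mode $\phi_{\state,t}^{-1}(m)$. For the memory update: after observing the chosen actions and the successor state $\state_{t+1}$ at time $t$, where $\zstrat'$ would update its memory by some distribution $\mu$ over $\{0,\dots,k\}$, the strategy $\zstrat''$ updates its memory by the image distribution $\phi_{\state_{t+1},t+1}(\mu)$ over $\{1,\dots,k\}$.

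It then remains to check, by a routine induction on $t$, that for every Minimizer strategy $\ostrat$ and every starting state, starting time, and non-forbidden starting mode, the map sending a finite history together with a $\zstrat''$-memory $m$ to the same history together with $\phi_{\state_t,t}^{-1}(m)$ transports the distribution induced by $(\zstrat'',\ostrat)$ to the one induced by $(\zstrat',\ostrat)$. In particular $(\zstrat'',\ostrat)$ and $(\zstrat',\ostrat)$ induce the same probability measure on plays, so $\zstrat''$ attains the same probability of any event, hence the same attainment, against every Minimizer strategy. The only load-bearing point, and the only place the hypothesis is used, is that $\phi_{\state_{t+1},t+1}(\mu)$ is well defined, i.e.\ $\supp(\mu)\subseteq\{0,\dots,k\}\setminus\{\memconf(\state_{t+1},t+1)\}$; this must be argued as a property of the update \emph{function} of $\zstrat'$ rather than along a single play, which follows since every configuration $(\state_{t+1},t+1,\memconf')$ with $\memconf'\in\supp(\mu)$ is reachable under $\zstrat'$ against a suitable Minimizer strategy and $\zstrat'$ never enters a forbidden mode. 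That $\zstrat''$ may consult the step counter to decide which bijection $\phi_{\state,t}$ to apply is unproblematic, and there is no computability concern, as strategies need not be computable. This reduction to $k$ memory modes is precisely what will let the induction hypothesis for~$\game_k$ be invoked inside the subgame $\game_k^i$.
\end{proof}
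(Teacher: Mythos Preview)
Your proof is correct and takes essentially the same approach as the paper: relabel the $k$ non-forbidden modes at each state/time pair into a $k$-element memory set, using the step counter to know which relabeling to apply. The paper's version is simply terser---it uses the specific bijection that swaps mode~$k$ with the forbidden mode $\memconf(\state,t{+}1)$ and fixes the others, rather than an arbitrary $\phi_{\state,t}$---but the content is identical, and your more explicit treatment of why the memory-update distribution lands in the allowed set is a welcome addition.
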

\begin{proof}
\new{
The strategy $\zstrat'$ has $k+1$ memory modes $\{0,\dots,k\}$, plus the step counter.
We will construct the strategy $\zstrat''$ to only have $k$ 
memory modes $\{0,\dots,k-1\}$, plus the step counter.
The strategy $\zstrat''$ can directly imitate the behavior of $\zstrat'$ as
follows.
Suppose that $\zstrat'$ enters memory mode $k$ at some state $\state$ and
time $t+1$. From our assumption that $\zstrat'$ never enters the forbidden memory
mode it follows that $k \neq \memconf(\state,t+1)$.
In this situation $\zstrat''$ enters memory mode $\memconf(\state,t+1)$ instead.
Whenever $\zstrat''$ is in memory mode $\memconf(\state,t+1)$ at some state $\state$
and time $t+1$ then it plays like $\zstrat'$ at state $\state$ in memory mode
$k$. By the condition on the behavior of $\zstrat'$ there is no confusion,
$\zstrat''$ just uses the memory modes $\{0,\dots,k-1\}$ and it still imitates the
behavior of $\zstrat$.}
\end{proof}

By \cref{claim:avoiding-forbidden-less-memory}, the Maximizer strategy
$\zstrat'$ is equivalent to a strategy with just a step counter and $k$ memory
modes.
By the induction hypothesis \eqref{eq:stronger-induction-claim},
for this restricted Maximizer strategy $\zstrat'$
there exists a Minimizer strategy $\ostrat_i$ in $\game_k^i$ such that
\begin{equation}\label{eq:k-mode-strat}
\forall \memconf\ \forall t\ \probm_{\game_k^i,u^{k,i},\zstrat'[\memconf](t),\ostrat_i}(\reach{\{c_0^{k,i}\}}) \le \delta \cdot 2^{-(i+1)}.
\end{equation}
We are now ready to construct Minimizer's strategy $\ostrat$ in $\game_{k+1}$.
At every state $d_{i,j}$, outside of the subgames, $\ostrat$ always plays
action ``1'', i.e., $d_{i,j} \transition r_{i,j}^1$.
This implies that each state $c_i$ is visited at most once.
At the states $b_i$, Minimizer chooses the delays
such that $c_i$ is reached at a time $t \ge t_i$, as described above.
From each $c_i$ Minimizer goes to state $u^{k,i}$ of the subgame $\game_k^i$.
Inside each subgame $\game_k^i$ Minimizer plays like $\ostrat_i$.
By \eqref{eq:k-mode-strat}, $\ostrat_i$
performs well in $\game_k^i$ (regardless of the initial memory mode and time)
if Maximizer limits himself to $\zstrat'$.

Now we show that $\ostrat$ performs well in $\game_{k+1}$.
Since $\ostrat$ first picks the transition $u \transition b_{i_0}$
and then always plays action ``1'' outside of the subgames,
it follows that each subgame $\game_k^i$ with $i \ge i_0$ is played
at most once, and subgames $\game_k^i$ with $i < i_0$ are never played.
For each subgame $\game_k^i$, let $\mathrm{Forb}_i$ be the set of plays where Maximizer
enters a forbidden memory mode (for the current state and time) at least once.
\footnote{Strictly speaking, $\mathrm{Forb}_i$ is not an event in $\game_k^i$,
since it refers to the memory mode of Maximizer's strategy $\zstrat$.
However, since we fix $\zstrat$ first,
we can consider the MDP that is induced by fixing $\zstrat$ in
$\game_k^i$. Then Maximizer's memory mode $\memconf$ and the step counter $t$
are encoded into the states, which are of the form $(\state,\memconf,t)$.
In this MDP, $\mathrm{Forb}_i$ is a measurable event, actually an open set.
However, since Maximizer's memory is private, Minimizer has only partial
observation in this MDP, i.e., she cannot distinguish between states $(\state,\memconf,t)$ and
$(\state,\memconf',t)$.
Indeed the Minimizer strategy that we construct does not assume any knowledge
of the memory mode, but instead hedges her bets.
}

From Equation~\eqref{eq:k-mode-strat} we obtain that Maximizer loses $\game_k^i$ (and
thus $\game_{k+1}$) with high probability if he \emph{never} enters a forbidden
memory mode.
\begin{equation}\label{eq:max-good}
\begin{aligned}
\max_{\memconf}\sup_{t} &\; \probm_{\game_{k+1},c_i,\zstrat[\memconf](t),\ostrat}(\reach{\{c_0\}}  \cap \overline{\mathrm{Forb}_i})\\
&\le 
\max_{\memconf}\sup_{t} \probm_{\game_k^i,u^{k,i},\zstrat'[\memconf](t),\ostrat_i}(\reach{\{c_0^{k,i}\}})\\
& \le
\delta \cdot 2^{-(i+1)}.
\end{aligned}
\end{equation}
On the other hand, we can show that if Maximizer does enter a forbidden memory mode
(for the current state and time) in
$\game_k^i$ then his chance of playing action ``1'' (and thus winning
$\game_{k+1}$ in that round) after (and if) winning the subgame $\game_k^i$ is
small.
This holds for \emph{every} Minimizer's strategy inside $\game_k^i$ and thus
in particular this holds for our chosen Minimizer strategy $\ostrat_i$.

Recall that $\ostrat$ ensures that states in $\game_k^i$ are only reached at
times $t+1$ where $t \ge t_i$, and thus \cref{eq:b_is_small} applies.
Hence, at $c_i$, Maximizer's chance of satisfying $\mathrm{Forb}_i$ and still winning the game
in this round (without going to $c_{i+1}$ and the next subgame $G_k^{i+1}$)
is upper bounded by $(k+1)(\alpha(i) + \frac{1}{4}2^{-i})$.
For all $\memconf$ and all $t \ge t_i$ we have
\begin{equation}\label{eq:use-bad}
\begin{aligned}
\probm_{\game_{k+1},c_i,\zstrat[\memconf](t),\ostrat}(\reach{\{c_0\}} \cap \mathrm{Forb}_i
\cap \neg\reach{\{c_{i+1}\}})\\
 \le (k+1)(\alpha(i) + \frac{1}{4}2^{-i})
\end{aligned}
\end{equation}
Since in our current case $\sum_i \alpha(i)$ converges, it follows that
$\sum_i (k+1)(\alpha(i) + \frac{1}{4}2^{-i})$ also converges, and thus
there exists a sufficiently large
$i_0 \in \nat$ such that
\begin{equation}\label{eq:bound-sum-ai}
\sum_{i \ge i_0} (k+1)(\alpha(i) + \frac{1}{4}2^{-i}) \le \delta/2
\end{equation}
Let
\[
{\it nFo}(i,\memconf',t') ~\eqdef~ \probm_{\game_{k+1},c_i,\zstrat[\memconf'](t'),\ostrat}(\reach{\{c_0\}} \cap \overline{\mathrm{Forb}_i})
\]
and
\[
{\it Fo}(i,\memconf',t') ~\eqdef~
\probm_{\game_{k+1},c_i,\zstrat[\memconf'](t'),\ostrat}(\reach{\{c_0\}} \cap \mathrm{Forb}_i \cap \neg\reach{\{c_{i+1}\}})
\]
Then from \eqref{eq:max-good}, \eqref{eq:use-bad} and \eqref{eq:bound-sum-ai}
we obtain that for every initial memory mode $\memconf$ and time $t$,
\begin{align*}
\probm_{\game_{k+1},u,\zstrat[\memconf](t),\ostrat}(\reach{\{c_0\}}
&\le
\sum_{i \ge i_0}
\max_{\memconf'}\sup_{t'\ge t_i} {\it nFo}(i,\memconf',t')
+
 \sum_{i \ge i_0}
 \max_{\memconf'}\sup_{t' \ge t_i} {\it Fo}(i,\memconf',t')
\\
&\le
\sum_{i \ge i_0} \delta \cdot 2^{-(i+1)} +
\sum_{i \ge i_0} (k+1)(\alpha(i) + \frac{1}{4}2^{-i})\\
&\le
\delta/2 
+ \delta/2
=
\delta
\
\end{align*}
\end{proof}

In order to show that Maximizer needs infinite memory, in addition to a step
counter, we combine all the nested games $\game_k$ into a single game.

\begin{definition}\label{def:combine-nested}
For all $k \ge 1$ consider the nested games $\game_k$ from \cref{def:nested-BMI}
with initial state $u^k$ and target state $c_0^k$, respectively.
We construct a game $\game$ with initial state $\state_0$,
target state $f$, 
Minimizer-controlled transitions $\state_0 \transition u_k$ for all $k$,
and Maximizer controlled transitions $c_0^k \transition f$ for all $k$.
The objective in $\game$ is $\reach{\{f\}}$.
\end{definition}

\new{The following theorem is the formal version of \cref{thm:story-main} from the introduction.}

\begin{theorem}\label{thm:no-sc-plus-finite}
Let $\game$ be the infinitely branching turn-based reachability game from \cref{def:combine-nested}.
\begin{enumerate}
\item\label{thm:no-sc-plus-finite-1}
All states in $\game$ are almost surely winning. I.e., for every state $\state$ there
exists a Maximizer strategy $\zstrat$ such that
$\inf_{\ostrat}\probm_{\game,\state,\zstrat,\ostrat}(\reach{\{f\}})=1$.
\item\label{thm:no-sc-plus-finite-2}
For each Maximizer strategy~$\zstrat$ with a step counter plus a private
finite memory we have
\[
\inf_{\ostrat}\probm_{\game,\state_0,\zstrat,\ostrat}(\reach{\{f\}})=0.
\]
I.e., for any $\eps < 1$ there does not exist any $\eps$-optimal step counter plus
finite private memory Maximizer strategy $\zstrat$ from state $\state_0$ in $\game$. 
\end{enumerate}
\end{theorem}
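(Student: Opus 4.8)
The plan is to reduce both claims to \cref{lem:BMI-no-SC-plus-F}, using that $\game$ is essentially the disjoint union of all the nested games $\game_k$, glued through a single Minimizer-controlled entry state $\state_0$ (from which Minimizer alone decides which copy is entered, via $\state_0\transition u^k$) and a single target $f$ that is reachable only through the copies' targets $c_0^k$. The structural fact I would record first is that, once the play moves from $\state_0$ into some copy $\game_k$, it stays inside that copy until --- and only if --- it reaches $c_0^k$, since the unique edge leaving $\game_k$ is the Maximizer move $c_0^k\transition f$; hence from $u^k$ the event ``reach $f$'' coincides with ``reach $c_0^k$ inside $\game_k$''. I would also note that the global step counter, restricted to a play inside $\game_k$, is just that of $\game_k$ shifted by a fixed offset, so every Maximizer strategy of $\game$ that uses a step counter plus finite private memory, restricted to $\game_k$, is again such a strategy of $\game_k$, with no more memory modes.

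For the first item, take any state $\state$. If $\state=f$, Maximizer has already won. If $\state$ lies in a copy $\game_k$ (in particular $\state=u^k$ or $\state=c_0^k$), Maximizer follows the almost-surely winning strategy for target $c_0^k$ inside $\game_k$ guaranteed by \cref{lem:BMI-no-SC-plus-F}(1) and, upon reaching $c_0^k$, takes the edge to $f$; by the structural fact this reaches $f$ with probability $1$ against every Minimizer strategy. If $\state=\state_0$, Maximizer waits for Minimizer's choice $\state_0\transition u^k$ and then proceeds as in the previous case. This yields an almost-surely winning Maximizer strategy from every state.

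For the second item, fix a Maximizer strategy $\zstrat$ that uses a step counter plus a private finite memory, and let $m$ be its number of memory modes. Given $\delta>0$, Minimizer plays $\state_0\transition u^{m}$, so the play enters the copy $\game_{m}$ at its initial state, inside which $\zstrat$ uses a step counter plus at most $m$ private memory modes; thus \cref{lem:BMI-no-SC-plus-F}(2) with $k=m$ applies. I would then invoke the strengthened statement \eqref{eq:stronger-induction-claim} established in its proof: it provides, for $\game_m$, a single Minimizer strategy $\ostrat$ such that, starting from the initial state of $\game_m$, for \emph{every} Maximizer memory mode $\memconf$ and \emph{every} starting time $t$, the probability of reaching $c_0^m$ is at most $\delta$. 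Minimizer's strategy in $\game$ is ``take $\state_0\transition u^m$, then play $\ostrat$ inside $\game_m$''. Since reaching $f$ from $u^m$ forces reaching $c_0^m$ first, and since the mode in which $\zstrat$ enters $\game_m$ (however its --- possibly randomized --- memory update fixed it) together with the induced time offset are covered by the ``$\forall\memconf\,\forall t$'' quantifiers, we obtain $\probm_{\game,\state_0,\zstrat,\ostrat}(\reach{\{f\}})\le\delta$. Letting $\delta\to 0$ gives $\inf_\ostrat\probm_{\game,\state_0,\zstrat,\ostrat}(\reach{\{f\}})=0$, so no $\eps$-optimal such strategy exists for any $\eps<1$.

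The mathematical substance sits entirely in \cref{lem:BMI-no-SC-plus-F}; the only point that demands care here --- the ``main obstacle'', modest as it is --- is citing the right form of \cref{lem:BMI-no-SC-plus-F}(2). One must use the version quantified over all initial memory modes and all starting times, because embedding $\game_m$ into $\game$ shifts the step counter by one step and because Maximizer's possibly randomized memory update along $\state_0\transition u^m$ leaves Minimizer uncertain about the mode in which the subgame is entered; the top-level statement of \cref{lem:BMI-no-SC-plus-F}(2), which fixes the initial mode and time, would not literally suffice, whereas its proof establishes exactly the stronger claim \eqref{eq:stronger-induction-claim} needed here.
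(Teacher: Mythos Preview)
Your proposal is correct and takes essentially the same approach as the paper: both items reduce directly to \cref{lem:BMI-no-SC-plus-F}, with Minimizer choosing $\state_0\transition u^m$ where $m$ is the number of Maximizer's memory modes. Your argument is actually more careful than the paper's, which simply invokes \cref{lem:BMI-no-SC-plus-F}(\ref{lem:BMI-no-SC-plus-F-2}) without commenting on the step-counter offset or the possibly randomized memory mode upon entering $\game_m$; you correctly observe that the strengthened form \eqref{eq:stronger-induction-claim} (quantifying over all initial modes and starting times) is what is literally needed, and that this is exactly what the proof of the lemma establishes.
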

\begin{proof}
Towards \cref{thm:no-sc-plus-finite-1}, every state in $\game_k$ is almost
surely winning by \cref{lem:BMI-no-SC-plus-F}(\ref{lem:BMI-no-SC-plus-F-1}).
Thus, after the first step $\state_0 \transition u^k$ into some game $\game_k$, Maximizer just needs to play the
respective almost surely winning strategy in $\game_k$.

Towards \cref{thm:no-sc-plus-finite-2}, consider a Maximizer strategy
$\zstrat$ with a step counter and a finite memory with some number of
modes $k \in \nat$.
Then, by \cref{lem:BMI-no-SC-plus-F}(\ref{lem:BMI-no-SC-plus-F-2}),
for every $\delta >0$, Minimizer can choose a first step
$\state_0 \transition u^k$ into $\game_k$ 
and a strategy in $\game_k$ that upper-bounds Maximizer's attainment to $\le \delta$.
\end{proof}

\begin{remark}\label{rem:liminf}
  \rm
\cref{thm:no-sc-plus-finite} has implications
even for games with finite action sets (resp.\ finitely branching turn-based
games).

Consider a finitely branching game where the states are labeled with rewards in $\{-1,1\}$.
The objective of Maximizer is to ensure that the $\liminf$ of the seen rewards
is $\ge 0$. (Equivalently that states with reward $-1$ are visited only
finitely often. This is also called a co-B\"uchi objective in \cite{KMSW2017}).
The infinitely branching reachability game of \cref{thm:no-sc-plus-finite}
can be encoded into a finitely branching $\liminf$ game, and thus the lower
bound of \cref{thm:no-sc-plus-finite} carries over.
One just replaces every infinite Minimizer branching $\state \to \state_i$ for
$i \in \N$ by a Minimizer-controlled gadget $\state \to \state'_1 \to
\state'_2 \dots$ and $\state'_i \to \state_i$ with new states $\state'_i$ that
have reward $1$.
Minimizer cannot stay in states $\state'_i$ forever, since
their rewards of $1$ makes this winning for Maximizer. Thus the finitely
branching gadgets faithfully encode Minimizer's original infinitely branching
choice.
Finally, the target state $c_0$ is given reward $1$ and a self-loop,
and all other states are given reward $-1$.
Thus the $\liminf \ge 0$ objective in the new game corresponds to the reachability
objective to reach state $c_0$ in the original game.
The only problem with this construction is that the new gadgets incur extra
steps, i.e., the step counters in the two games do not coincide.
Thus the property that a step counter does not help Maximizer does not
follow immediately from \cref{thm:no-sc-plus-finite} if taken as a black box.
However, the delay gadgets $D_i$ in \cref{def:nested-BMI} (and their
finitely branching encoding in the new game)
still ensure that the step counter does not help Maximizer.
I.e., the proof of the lower bound for the finitely branching $\liminf \ge 0$ game
is nearly identical to the proof of \cref{thm:no-sc-plus-finite}.

In the new finitely branching game above, the objective to reach $c_0$ 
also coincides with the objective to attain a high \emph{expected}
$\liminf$ of the rewards. Thus $\eps$-optimal (resp.\ optimal) strategies to maximize
the expected $\liminf$ also require infinite memory.

Finally, if one flips the signs of the rewards of all transitions in the
game above, then the objective to reach $c_0$ coincides with the
objective to minimize the expected $\limsup$ of the rewards.
Thus $\eps$-optimal (resp.\ optimal) strategies to minimize
the expected $\limsup$ also require infinite memory.
This solves the open question in Section 5 of \cite{Secchi:1998}.
\end{remark}
 
\section{Infinitely Branching but only Finitely Often}\label{sec:restrictedmin}
\new{
In \cref{thm:no-sc-plus-finite}
we showed that
already for turn-based reachability games,
$\eps$-optimal strategies for Maximizer require infinite memory.
The lower bound construction crucially uses that 
Minimizer's action set is infinite.
If Minimizer has only finite action sets then Maximizer's $\eps$-optimal strategies
can be simpler, namely MR for concurrent games and MD for turn-based games (cf.~\cref{maxtable}).
}

However, the connection between Minimizer's infinite action sets and
Maximizer's need for infinite memory is not as direct as it might seem.
In this section we consider the restricted setting 
of concurrent games where \emph{Maximizer has finite action sets
and Minimizer can use an infinite action set at most finitely often in any play}
(see also \cref{def:inf-branch-finite} below).

We show in \cref{thm:infinity-bound-upper} that in this case, Maximizer still has uniformly $\eps$-optimal 1-bit strategies.
Moreover, we show that this upper bound is tight in the sense that
even if Minimizer can use an infinite action set \emph{only once},
$\eps$-optimal Maximizer strategies still require $1$ bit of memory.

We start with the lower bound.
The following theorem shows that, even in turn-based reachability games,
if Minimizer can use infinite branching just once,
MR strategies cannot be $\eps$-optimal
for Maximizer for any $\eps < 1/2$.

\begin{definition}\label{def:def:turn-big-match-u}
Consider the
\new{
Turn-based Big Match on $\N$
(\cref{def:turn-big-match} on page~\pageref{def:turn-big-match}),
}
and add a new
infinitely branching Minimizer-controlled initial state $u$ and Minimizer-transitions
$u \transition c_x$ for all $x \in \N$.
\end{definition}

\begin{theorem}\label{thm:infinity-bound-lower}
There exists a turn-based game $\game$ as in \cref{def:def:turn-big-match-u}
with initial state $u$
and reachability objective $\reach{\{c_0\}}$
such that
\begin{enumerate}
\item
  All states except $u$ are finitely branching,
  and all plays from $u$ use infinite branching exactly once.
\item
$\valueof{\game}{u} \ge 1/2$.
\item
$\inf_{\ostrat}\probm_{\game,u,\zstrat,\ostrat}(\reach{\{c_0\}})=0$
holds for every Maximizer MR strategy~$\zstrat$\new{.}
\end{enumerate}
\end{theorem}
\begin{proof}
Item 1 holds by the construction in \cref{def:def:turn-big-match-u}, since $u$
is the only state with infinite branching and plays cannot return to $u$.
  
Towards Item 2., \cref{thm:TB-BMI-zplus}(Item~\ref{thm:TB-BMI-zplus-2})
yields $\valueof{\game}{c_x} \ge 1/2$ for every $x \in \N$, and thus $\valueof{\game}{u} \ge 1/2$.

Towards Item 3., by
\cref{thm:TB-BMI-zplus}(Item~\ref{thm:TB-BMI-zplus-3}),
for every MR Maximizer strategy $\zstrat$
we have
$\limsup_{x \to \infty}\inf_{\ostrat}\probm_{\game,c_x,\zstrat,\ostrat}(\reach{\{c_0\}})=0$.
Since in our game the Minimizer strategy $\ostrat$ gets to pick the transition
$u \to c_x$ for an arbitrary $x \in \N$, 
$\inf_{\ostrat}\probm_{\game,u,\zstrat,\ostrat}(\reach{\{c_0\}})=0$.
\end{proof}

Towards the upper bound, we first define the case where Minimizer can use
infinite action sets only finitely often in any play.

\begin{definition}\label{def:inf-branch-finite}
Let $\game$ be a concurrent game on a countable set of states
$\states$ and
$\states^\infty \eqdef \{\state \in \states \mid \card{B(\state)}=\infty\}$
the states with an infinite Minimizer action set.
Let $\states' \subseteq \states$ be the subset of states
$\state$ such that every play from $\state$ (under any strategies)
visits $\states^\infty$ only finitely often.
\end{definition}

Different plays from the same start state can have
different numbers of visits to $\states^\infty$.
Even if this number is finite for every play, there is no uniform finite upper bound.
Thus the condition of \cref{def:inf-branch-finite}
on plays does not imply a finite bound for the start state.
However, we show that an ordinal bound exists.

\new{
We introduce a
ranking function $I: \states' \to \ord$
so that $I(x)$ is an upper bound on the number of possible visits
to $\states^\infty$, including the current state $x$.
This is based on a classic result on well-founded relations.
Recall that a binary relation $E\subseteq \states\x\states$ is \emph{well-founded} if every non-empty subset $X\subseteq\states$ has a minimal element w.r.t.~$E$.
}
\begin{theorem}[{\cite{Jech:2002} Theorem 2.27}]\label{thm:Jech}
  If $E\subseteq \states\x\states$ is well-founded then there exists
  a unique function $\rho: V \to \ord$ such that for all $x \in V$
  \[
    \rho(x) = \sup\{\rho(y)+1 \mid y E x\}.
  \]
  In particular, $y E x$ implies $\rho(y) < \rho(x)$.
  Moreover, if $V$ is countable then $\sup \rho(V)$ is a countable ordinal.
\end{theorem}

\begin{definition}[Ranking function $I$]\label{def:ordinal-index}
\new{
Let $\game$ be a concurrent reachability game on a countable set of states
$\states$, and let $\states', \states^\infty \subseteq \states$ be as in \cref{def:inf-branch-finite}.
Let $\mathord{\to} \subseteq \states \times \states$ be the induced game graph,
i.e., $x \mathord{\to} y \iff \exists a,b.\, y \in \support(p(x,a,b))$.
}

\new{
Let $V \eqdef \states' \cap \states^\infty$ and
$E \subseteq V \times V$ be the reversal of the closure of $\to$ over states in
$\states' \setminus \states^\infty$, i.e.,
$(y,x) \in E \iff \exists k\ge 0, z_1,\dots,z_k \in \states'
\setminus \states^\infty\ : x \to z_1 \to \dots \to z_k \to y$.
}

\new{
From the definition of $\states'$ we obtain that $E$ is a well-founded
relation on $V$.
By \cref{thm:Jech}, there exists a unique function
$\rho: V \to \ord$ such that for all $x \in V$
\begin{equation}\label{eq:rho}
\rho(x) = \sup\{\rho(y)+1 \mid y E x\}.
\end{equation}
By convention, $\sup \emptyset =0$.
We first define our ranking function $I: V \to \ord$
only on the set $V$ by $I(x) \eqdef \rho(x)+1$.
Intuitively, $I(x)$ is the upper bound on the number of visits
to $\states^\infty$, including the current state $x$. 
We then extend the function $I$ from $V$ to $\states'$ as follows.
For every $x \in \states' \setminus V$ let
\[
I(x) \eqdef \sup\{\rho(y) \mid y \in V \wedge x \to^+ y\},
\]
where $\to^+$ is the transitive closure of $\to$.
Since we assume $\sup \emptyset =0$, 
the states $x$ that cannot reach
$\states^\infty$ satisfy $I(x)=0$.
}
\end{definition}

\begin{lemma}\label{lem:ordinal-index}
\new{
The ranking function $I:\states'\to \ord$ satisfies the following properties.
}
\begin{equation}\label{eq:index-noninc}
x \mathord{\to} y \ \mbox{implies}\ I(x) \ge I(y)
\end{equation}  
\begin{equation}\label{eq:index-dec}
  x \in \states^\infty\cap \states' \ \wedge\ x \mathord{\to} y \ \mbox{implies}\ I(x) > I(y)
\end{equation}
and $\gamma(\game) \eqdef \sup I(\states')$ is a countable ordinal.
\end{lemma}
\begin{proof}
\new{
\cref{eq:index-noninc} and \cref{eq:index-dec} follow 
directly from the definition of function $I$ in \Cref{def:ordinal-index}.
}
Since $\states$ and $\states'$ are countable, $\sup I(\states')$ is a
countable ordinal by \cref{thm:Jech}.
\end{proof}

It follows from \cref{lem:ordinal-index} that
states in $\states'$ can be part of cycles,
but not part of any cycle that contains a state from $\states^\infty$.
E.g., in the game in \cref{fig:TheGame-inf} we have $\states' = \{c_0\}$,
i.e., none of the states are in $\states'$ except for the target.

Now we show that 1 bit of public memory is sufficient for Maximizer,
provided that Minimizer can use infinite action sets only finitely often in
any play.
I.e., for every $\eps >0$,
Maximizer has a public 1-bit strategy for reachability that is uniformly
$\eps$-optimal from $\states'$.

First we need a slight generalization of the reachability objective.

\begin{definition}[Weighted reachability]\label{def:weighted-reach}
Let $\game$ be a concurrent game on $\states$ and $\reachset \subseteq \states$.
Let $f: \reachset \to [0,1]$ be a reward function. We lift $f$ to plays
$f: Z^\omega \to [0,1]$ as follows. If a play $h \in Z^\omega$
never visits $\reachset$ then $f(h) \eqdef 0$. Otherwise,
let $f(h) \eqdef f(\state)$ where $\state$ is the first state in $\reachset$
that is visited by $h$.
Let $\weighted{f}$ denote the weighted reachability objective, i.e., to
maximize the expected payoff w.r.t.\ function $f$.
\end{definition}

Weighted reachability generalizes reachability (just let $f(\state)=1$ for all
$\state \in \reachset$).
Now we generalize \cref{thm:conc-reach-uniform} to weighted reachability.

\begin{restatable}{theorem}{weightedconcreachuniform}\label{thm:weighted-conc-reach-uniform}
For any concurrent game with finite action sets and weighted reachability objective, for any $\eps>0$,
Maximizer has a uniformly $\eps$-optimal public 1-bit strategy.
If the game is turn-based and finitely branching, Maximizer has a deterministic such strategy.
\end{restatable}
\begin{proof}
We can encode weighted reachability into ordinary reachability.
Given a concurrent game $\game$ with target set $\reachset$ and weighted
reachability objective $\weighted{f}$, we construct a modified game $\game'$
with target set $\{t\}$, where $t$ is a new state, as follows.
From every state $\state \in \reachset$, regardless of the chosen actions,
the game goes to $t$ with probability $f(\state)$ and to a special
new sink state $\bot$ with probability $1-f(\state)$.
Then $\weighted{f}$ in $\game$ coincides with $\reach{\{t\}}$ in
$\game'$, i.e.,
$
\expectval_{\game,\state,\zstrat,\ostrat}(f)=
\probm_{\game',\state,\zstrat,\ostrat}(\reach{\{t\}})$.
The result follows from \cref{thm:conc-reach-uniform}, since the
1-bit strategy can be carried from $\game'$ to $\game$.
\end{proof}

\begin{theorem}\label{thm:infinity-bound-upper}
Let $\game$ be a concurrent game \new{with finite Maximizer action sets}
on a countable set of states $\states$ 
with reachability objective $\reach{\reachset}$
and $\states' \subseteq \states$ as in \cref{def:inf-branch-finite}.

For every $\eps >0$, Maximizer has a public 1-bit strategy that is uniformly
$\eps$-optimal from every state in $\states'$.
If the game is turn-based then Maximizer has a deterministic such strategy.
\end{theorem}
\begin{proof}
  Let $I: \states' \to \ord$ be the \new{ranking} function from
  \new{\cref{def:ordinal-index}}.
For every ordinal $\alpha \in \ord$ let
$\states_\alpha \eqdef \{\state \in \states' \mid I(\state)=\alpha\}$.
We have $\states' = \bigcup_{\alpha \le \gamma(\game)} \states_\alpha$
for the countable ordinal $\gamma(\game)$ by \cref{lem:ordinal-index}.
Let
$\states_{< \alpha} \eqdef \bigcup_{\beta < \alpha} \states_\beta$
and
$\states_{\le \alpha} \eqdef \bigcup_{\beta \le \alpha} \states_\beta$.
We can assume without restriction that
the states in $\reachset$ are absorbing and thus $\reachset \subseteq \states_0$.

Since $\gamma(\game)$ is a countable ordinal, 
the set $\{\alpha \in \ord \mid \alpha \le \gamma(\game)\}$ is countable
and thus we can pick an injection
$g: \{\alpha \in \ord \mid \alpha \le \gamma(\game)\} \to \N$.
Let $\eps_\alpha \eqdef \eps \cdot 2^{-g(\alpha)}$ for every $\alpha \le \gamma(\game)$.

For every ordinal $\alpha \le \gamma(\game)$ we consider a restricted subgame
$\game_\alpha$ of $\game$ that is played on the subspace $\states_{\le \alpha}$.
\new{
The objective of $\game_\alpha$ is a weighted reachability objective, defined
relative to a reward function $f_\alpha$ like in \cref{def:weighted-reach}.
Let $\reachset_\alpha \eqdef \states_{< \alpha} \cup \reachset$ be a target
set. We consider the weighted reachability objective
$\weighted{f_\alpha}$ where $f_\alpha: \reachset_\alpha \to [0,1]$
with $f_\alpha(\state) \eqdef \valueof{\game,\reach{\reachset}}{\state}$.
}

For every $\alpha \le \gamma(\game)$ and 
$\state \in \states_\alpha$ we show that
\begin{equation}\label{eq:val-reach-equal-val-weight}
\valueof{\game_\alpha,\weighted{f_\alpha}}{\state} = \valueof{\game,\reach{\reachset}}{\state}
\end{equation}
If $\alpha=0$ then the equality \eqref{eq:val-reach-equal-val-weight} holds trivially, since $\reachset_0 = \reachset$
and $f_0(\state)=1$ for every $\state \in \reachset$.

Now we consider the case of ${\alpha>0}$.
For the $\le$ inequality of \eqref{eq:val-reach-equal-val-weight}, first assume towards a contradiction that
$\valueof{\game_\alpha,\weighted{f_\alpha}}{\state} >
\valueof{\game,\reach{\reachset}}{\state}$
for some state $\state \in \states_\alpha$.
Let \[\eps' \eqdef (\valueof{\game_\alpha,\weighted{f_\alpha}}{\state} -
\valueof{\game,\reach{\reachset}}{\state})/3 > 0\]
and $\zstrat$ an $\eps'$-optimal Maximizer strategy from $\state$ for $\weighted{f_\alpha}$
in $\game_\alpha$. We construct a Maximizer strategy $\zstrat'$ in $\game$
from $\state$ as follows. 
Initially, $\zstrat'$ plays like $\zstrat$.
Then upon reaching some state $\state'$ in \new{$\reachset_\alpha$} it
switches to an $\eps'$-optimal strategy for $\reach{\reachset}$ from $\state'$.
Hence, we get that $\probm_{\game,\state,\zstrat',\ostrat}(\reach{\reachset}) \ge
{\valueof{\game_\alpha,\weighted{f_\alpha}}{\state} - 2\eps'}
> \valueof{\game,\reach{\reachset}}{\state}$, a contradiction.
Therefore $\valueof{\game_\alpha,\weighted{f_\alpha}}{\state} \le \valueof{\game,\reach{\reachset}}{\state}$.

Towards the $\ge$ inequality of \eqref{eq:val-reach-equal-val-weight}, consider an
$\eps'$-optimal strategy $\zstrat$ from $\state$ for
$\reach{\reachset}$ in $\game$ and apply it in $\game_\alpha$.
For any Minimizer strategy $\ostrat$ from $\state$,
\new{let $\playset^{\state,\zstrat,\ostrat}$ be the set of plays from $\state$
consistent with $\zstrat,\ostrat$. 
Let 
$\playset^{\state,\zstrat,\ostrat}_{\state'} \subseteq \playset^{\state,\zstrat,\ostrat}$
be the subset of plays
where $\state'$ is the first visited state with $I(\state') < \alpha$.
Since $\state \in \states_\alpha$ and
$\alpha >0$ but $\reachset \subseteq \states_0$,
every play from $\state$ that reaches $\reachset$ must first visit some state
$\state' \in \states_{< \alpha}$.
Thus these subsets $\playset^{\state,\zstrat,\ostrat}_{\state'}$
are a disjoint partition of $\playset^{\state,\zstrat,\ostrat}$, i.e.,
\begin{equation}\label{eq:infinity-bound-upper-partition}
\playset^{\state,\zstrat,\ostrat} = \biguplus_{\state' \in \states_{< \alpha}} \playset^{\state,\zstrat,\ostrat}_{\state'}
\end{equation}
}
Then
\new{
\begin{align*}
& \valueof{\game_\alpha,\weighted{f_\alpha}}{\state}
\\  
& \ge
       \inf_\ostrat \expectval_{\game_\alpha,\state,\zstrat,\ostrat}(f_\alpha) &
\mbox{def.\ of value} 
\\
& =
\inf_\ostrat \sum_{\state' \in \states_{< \alpha}}
\probm_{\game,\state,\zstrat,\ostrat}(\playset^{\state,\zstrat,\ostrat}_{\state'})\cdot f_\alpha(\state')
& \mbox{by \eqref{eq:infinity-bound-upper-partition}}
\\
& =
\inf_\ostrat \sum_{\state' \in \states_{< \alpha}}
\probm_{\game,\state,\zstrat,\ostrat}(\playset^{\state,\zstrat,\ostrat}_{\state'})\cdot \valueof{\game,\reach{\reachset}}{\state'}
& \mbox{def.\ of $f_\alpha$}
\\
  &\ge
\inf_\ostrat \sum_{\state' \in \states_{< \alpha}}
\probm_{\game,\state,\zstrat,\ostrat}(\playset^{\state,\zstrat,\ostrat}_{\state'} \cap \reach{\reachset})
& \mbox{$\ostrat$ can restart at $\state'$}
  \\
&=
\inf_\ostrat \probm_{\game,\state,\zstrat,\ostrat}(\reach{\reachset})
& \mbox{by \eqref{eq:infinity-bound-upper-partition}} 
  \\
&\ge
\valueof{\game,\reach{\reachset}}{\state} - \eps'.
& \mbox{def.\ of $\zstrat$}       
\end{align*}
}
Since the above holds for every $\eps'>0$, it follows that
$\valueof{\game_\alpha,\weighted{f_\alpha}}{\state} \ge \valueof{\game,\reach{\reachset}}{\state}$
and we obtain \eqref{eq:val-reach-equal-val-weight}.

\smallskip
We now define Maximizer's public 1-bit strategy $\zstrat$ on $\states'$ in $\game$.
\new{
It uses two memory modes $\{0,1\}$ and $\zstrat[\memconf]$ denotes
$\zstrat$ with current memory mode $\memconf$.
\verynew{The strategy} $\zstrat$ starts in memory mode $0$, i.e., $\zstrat = \zstrat[0]$
(cf. ``Memory-based Strategies'' in \Cref{sec:prelim}).
}

\new{
First we consider a slightly modified weighted reachability 
objective
$\weighted{f'_\alpha}$ on $\game_\alpha$ where 
$f'_\alpha : \reachset_\alpha \cup \states^\infty \to [0,1]$
and $f'_\alpha(\state) \eqdef \valueof{\game,\reach{\reachset}}{\state}$.
This game effectively ends
when a state in $\states^\infty$ (with an infinite Minimizer action set)
is visited, unlike for the
$\weighted{f_\alpha}$ objective where the game only ends in the following step when it
inevitably (by definition of the ranking function) visits a state in $\states_{< \alpha}$.
Thus effectively the game $\game_\alpha$ with objective $\weighted{f'_\alpha}$
has only finite action sets, since it stops before infinite action sets can be
used.
Therefore, by
\cref{thm:weighted-conc-reach-uniform},
there exists a uniformly
$(\epsilon_\alpha/2)$-optimal public 1-bit strategy $\zstrat'_\alpha$ for Maximizer on
$\game_\alpha$ with objective $\weighted{f'_\alpha}$.
We now extend $\zstrat'_\alpha$ to a uniformly
$\epsilon_\alpha$-optimal public 1-bit strategy $\zstrat_\alpha$ for Maximizer on
$\game_\alpha$ with objective $\weighted{f_\alpha}$.
It suffices for Maximizer to play $(\epsilon_\alpha/2)$-optimal
in all states $\states_\alpha \cap \states^\infty$
w.r.t.\ the one-shot game with reward function $f_\alpha$,
regardless of the current memory mode.
(These one-shot games with infinite Minimizer action sets and finite Maximizer
action sets have a value by \cite[Theorem 3]{Flesch-Predtetchinski-Sudderth:2020},
and thus Maximizer can play $\epsilon_\alpha/2$-optimally.)
After this one-shot game,
the index of any successor state will always be $<\alpha$, by definition
of the ranking function.
}

\new{
In the special case of turn-based games, $\zstrat'_\alpha$ can be chosen as deterministic
by \cref{thm:weighted-conc-reach-uniform}.  
Moreover, Maximizer is then passive in the one-shot games from states in
$\states^\infty$, since these states belong to Minimizer who has an infinitely branching 
choice there. Thus, in turn-based games, $\zstrat_\alpha$ is deterministic as well.
}

The Maximizer strategy $\zstrat =\zstrat[0]$ starts with memory mode $0$.
In every state $\state$ with $I(\state) = \alpha$
the strategy $\zstrat$ plays like $\zstrat_\alpha$. Whenever we make a step
$\state \to \state'$ with $I(\state') < I(\state)$ then its sets the memory mode
to $0$ again.
(It is impossible that $I(\state') > I(\state)$ by the definition of $I$.)
Since all the $\zstrat_\alpha$ are public 1-bit strategies, so is $\zstrat$.
In the special case of turn-based games, the $\zstrat_\alpha$ are 
deterministic and thus also $\zstrat$ is deterministic.

We now show by induction on $\alpha$
(for every $\alpha \le \gamma(\game)$)
that $\zstrat$
is uniformly $\eps'_\alpha$-optimal in $\game$ for objective $\reach{\reachset}$
from every state $\state \in \states_{\le \alpha}$,
where $\eps'_\alpha \eqdef \sum_{\beta \le \alpha} \eps_\beta$.

In the base case of $\alpha=0$ we have $\game=\game_0$ on $\states_0$,
$\eps'_0 = \eps_0$ and $\zstrat = \zstrat_0$.
Since $\reachset_0 = \reachset \subseteq \states_0$, the objectives
$\weighted{f_0}$ and $\reach{\reachset}$ coincide.
\new{
Formally, for any $\zstrat',\ostrat'$, we have
\begin{equation}\label{eq:restrictedmin-basecase}
\expectval_{\game_0,\state,\zstrat',\ostrat'}(f_0)=
\probm_{\game_0,\state,\zstrat',\ostrat'}(\reach{\reachset})
\end{equation}
By our construction above, $\zstrat_0$ is
a uniformly $\epsilon_0$-optimal public 1-bit strategy for Maximizer on
$\game_0$ with objective $\weighted{f_0}$.
Thus, for every $\state \in \states_0$ we have
\begin{align*}
& \inf_\ostrat \probm_{\game,\state,\zstrat,\ostrat}(\reach{\reachset})
\\
  & = \inf_\ostrat \probm_{\game_0,\state,\zstrat_0,\ostrat}(\reach{\reachset})
  & \mbox{$\game=\game_0$ and $\zstrat=\zstrat_0$}  
\\
  & = \inf_\ostrat \expectval_{\game_0,\state,\zstrat_0,\ostrat}(f_0)
  & \mbox{by \eqref{eq:restrictedmin-basecase}}
\\
  & \ge \valueof{\game_0,\weighted{f_0}}{\state} - \eps_0
  & \mbox{$\eps_0$-optimality of $\zstrat_0$}
\\
  & = \valueof{\game,\reach{\reachset}}{\state} - \eps'_0
  & \mbox{by \eqref{eq:val-reach-equal-val-weight} and $\eps'_0 = \eps_0$} 
\end{align*}
}
For the induction step let $\alpha >0$.
If $\state \in \states_{< \alpha}$ then the claim holds by induction
hypothesis.
Now let $\state \in \states_\alpha$
and $\ostrat$ be an arbitrary Minimizer strategy.
Let $\playset$ be the set of induced plays from $\state$ under $\zstrat$
and $\ostrat$ and
$\playset_{\state'} \subseteq \playset$ be the subset of plays
where $\state'$ is the first visited state with $I(\state') < \alpha$.
\new{
Recall that $\zstrat$ is a 1-bit strategy with two memory modes $\{0,1\}$.
For $\memconf \in \{0,1\}$, $\zstrat[\memconf]$ denotes the strategy $\zstrat$
with current memory mode $\memconf$. Therefore, $\zstrat[\memconf]$
can be applied to start at any state, since it does not depend on the history.
The initial memory mode is $0$, i.e., $\zstrat=\zstrat[0]$.
}
\begin{align*}
& \probm_{\game,\state,\zstrat,\ostrat}(\reach{\reachset})\\
& \new{= \probm_{\game,\state,\zstrat[0],\ostrat}(\reach{\reachset})}\\  
& \ge
\sum_{\state' \in \states_{< \alpha}}
\probm_{\game,\state,\zstrat[0],\ostrat}(\playset_{\state'})\cdot \inf_{\ostrat'}\probm_{\game,\state',\new{\zstrat[0]},\ostrat'}(\reach{\reachset}) \\
& \new{\mbox{(the memory mode of $\zstrat$ is set to $0$ at $\state'$,
since $I(\state')<\alpha$)}}
\\
& \ge
\sum_{\state' \in \states_{< \alpha}}
\probm_{\game,\state,\zstrat[0],\ostrat}(\playset_{\state'})\cdot 
(\valueof{\game,\reach{\reachset}}{\state'} - \eps'_{I(\state')}) & \mbox{ind.~hyp.}
\\
&  =
\sum_{\state' \in \states_{< \alpha}}
\probm_{\game,\state,\zstrat[0],\ostrat}(\playset_{\state'})\cdot f_{\alpha}(\state')
-
\sum_{\state' \in \states_{< \alpha}}
\probm_{\game,\state,\zstrat[0],\ostrat}(\playset_{\state'})\cdot \eps'_{I(\state')}  
& \mbox{def.~$f_\alpha$}
\\
&  \ge
\expectval_{\game_\alpha,\state,\zstrat_\alpha,\ostrat}(f_\alpha)
-
\sup_{\state' \in \states_{< \alpha}} \eps'_{I(\state')}
\\
&  \ge
\valueof{\game_\alpha,\weighted{f_\alpha}}{\state} - \eps_\alpha
-
\sup_{\state' \in \states_{< \alpha}} \eps'_{I(\state')} & \mbox{$\zstrat_\alpha$ is $\eps_\alpha$-optimal}
\\
&  =
\valueof{\game,\reach{\reachset}}{\state} - \eps_\alpha
-
\sup_{\state' \in \states_{< \alpha}} \eps'_{I(\state')} & \mbox{by \eqref{eq:val-reach-equal-val-weight}}
\\
&  =
\valueof{\game,\reach{\reachset}}{\state}
-
(\eps_\alpha + \sup_{\state' \in \states_{< \alpha}} \eps'_{I(\state')})
\\
&  \ge
\valueof{\game,\reach{\reachset}}{\state}
-
\eps'_\alpha
\end{align*}
Therefore, for every $\state \in \states' = \states_{\le \gamma(\game)}$ our
strategy $\zstrat$ is $\eps'_{\gamma(\game)}$-optimal.
Moreover, 
$\eps'_{\gamma(\game)} = \sum_{\beta \le \gamma(\game)} \eps_\beta =
\sum_{\beta \le \gamma(\game)} \eps \cdot 2^{-g(\beta)} \le \eps$,
since $g: \{\alpha \in \ord \mid \alpha \le \gamma(\game)\} \to \N$
is injective.
Thus $\zstrat$ is uniformly $\eps$-optimal from $\states'$ in $\game$.
\end{proof}
 
\section{Optimal Maximizer Strategies}\label{sec:optimalmax}
\new{
In finite turn-based reachability games,
there always exist optimal Maximizer strategies, and even optimal memoryless deterministic
ones \cite{CONDON1992203}, \cite[Proposition 5.6.c, Proposition 5.7.c]{kucera_2011}.
This does not carry over to finite concurrent reachability games.
E.g., in the \emph{snowball game} (aka \emph{Hide-or-Run} game)
described in 
\cite[Example 1]{Everett1957} and
\cite{KumarShiau,AlfaroHK98},
Maximizer does not have any optimal strategy.
However, it was recently shown by
\cite{BordaisB022} that, in finite concurrent games with finite action sets,
optimal Maximizer strategies, if they exist, can be chosen as memoryless randomized.
}

In countably infinite reachability games,
optimal strategies for Maximizer need not exist in general
even if the game is turn-based, in fact not even in countably
infinite MDPs that are finitely branching \cite{Ornstein:AMS1969,KMSW2017}.

In this section we study the memory requirements of optimal Maximizer
strategies under the condition that such an optimal strategy exists.

If we allow infinite action sets for Minimizer (resp.\ infinite Minimizer
branching in turn-based games) then optimal (and even almost surely winning)
Maximizer strategies require infinite memory by \cref{thm:no-sc-plus-finite}.
Thus, in the rest of this section, we consider games with finite action sets
(resp.\ turn-based games where the players are finitely branching).

\subsection{Turn-Based Games}\label{subsec:optmax-turn}

\new{
Here we consider turn-based reachability games where the players have only
finitely many choices at each controlled state (i.e., finite action sets).
It turns out that the memory requirements of optimal Maximizer strategies, if
they exist, also depend on whether random states are infinitely branching or
finitely branching, i.e., on whether these distributions have finite support.
}

\new{
If we allow infinite branching at random states, then optimal Maximizer
strategies require infinite memory, even with a step counter, by the following example.
(A weaker result, without considering the step counter, was shown in 
\cite[Prop.~5.7.b]{kucera_2011}.)
}

\new{
\begin{definition}\label{def:conc-optmax}
Let $\game$ be the following turn-based reachability game 
depicted in \cref{fig:optimalmax-0},
where Maximizer and Minimizer have only finite branching (i.e., finite action sets),
with initial state $\state_0$ and target state $t$.
State~$\state_0$ is a random state and
the distribution $p(\state_0)$ over its infinitely many successor
states is defined as $p(\state_0)(\state_i') = \frac{1}{2^i}$ for all $i \ge 1$.
Further, for every $i\ge 1$ there is a Minimizer-controlled state $\state_i'$ and a  
Maximizer-controlled state $\state_i''$.
In $\state_i'$ Minimizer chooses between moving to state $\state_1''$
or (via a random state) to the target with probability $1 - \frac{1}{2^i}$ and to a losing sink with probability $\frac{1}{2^i}$.
At $\state_i''$ Maximizer chooses between moving to state $\state_{i+1}''$
or (via a random state) to target $t$ with probability $1 - \frac{1}{2^i}$ and
to a losing sink with probability $\frac{1}{2^i}$.
\end{definition}
}

\begin{figure}
\begin{center}
\includegraphics[width=0.75\textwidth, angle=0]{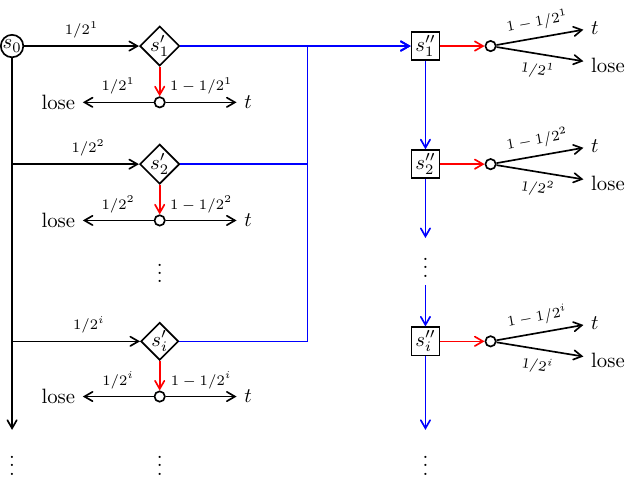}
\end{center}
\caption{The game $\game$ from \cref{def:conc-optmax}. Choices depicted in red immediately end the game after one round, their alternative choices are drawn in blue.}
\label{fig:optimalmax-0}
\end{figure}

\begin{proposition}\label{prop:conc-optmax}
\new{There exists a turn-based reachability game $\game$ where Maximizer and Minimizer
have only finite branching (i.e., finite action sets)
with initial state $\state_0$ and objective $\reach{\{t\}}$
as in \Cref{def:conc-optmax}, such that}
\begin{enumerate}
\item\label{prop:conc-optmax-1}
Maximizer has an optimal strategy from $\state_0$.
\item\label{prop:conc-optmax-2}
Every randomized Maximizer strategy from $\state_0$ that uses only a step
counter and finite private memory is not optimal.
\end{enumerate}
\end{proposition}
\begin{proof}
We have $\valueof{\game}{\state_i''} = 1$ for all~$i$, and so $\valueof{\game}{\state_i'} = 1 - \frac{1}{2^i}$ for all~$i$, and so $\valueof{\game}{\state_0} = \sum_{i=1}^\infty \frac{1}{2^i} \cdot (1 - \frac{1}{2^i})$.
(The latter series equals~$\frac23$, but that will not be needed.)
It follows that the only optimal Minimizer strategy 
is the one that chooses the red option
at any state $\state_i'$ (where $i \ge 1$).
Note that Maximizer does not make any choices if Minimizer plays her optimal strategy.

Towards \Cref{prop:conc-optmax-1}, Maximizer's optimal strategy $\zstrat$ from $\state_0$ is defined as follows.
In plays where the state $\state_1''$ is not reached, Maximizer does not make any decisions.
If $\state_1''$ is reached, Maximizer considers the history of this play:
If Minimizer chose the move from $\state_i'$ to $\state_1''$ for some $i\ge 1$, then Maximizer chooses moves (via states $\state_2'', \ldots, \state_{i-1}''$) to state~$\state_i''$ for the same~$i$, and at state~$\state_i''$ he chooses the red option (end the game and win with probability $1-2^{-i}$. 
Note that in this way Maximizer takes the action that Minimizer refused to take (although it would have been optimal for her) at~$\state_i'$.
With this Maximizer strategy~$\zstrat$, for every Minimizer strategy
$\ostrat$,
the probability to reach~$t$ equals 
\verynew{
\[
    \probm_{\game,\state_0,\zstrat,\ostrat}(\reach{\{t\}})
    =
    \sum_{i=1}^\infty \frac{1}{2^i} \cdot \left(1 - \frac{1}{2^i}\right)
    = \valueof{\game}{\state_0}
  \]
}
meaning that $\zstrat$ is optimal.

Towards \Cref{prop:conc-optmax-2}, we note that the step counter from $\state_0$ is implicit in the states of $\game$ (except in the target~$t$ and the losing sink state), and thus superfluous for Maximizer strategies.
Hence it suffices to prove the property for Maximizer strategies with finite memory.
Let $\zstrat$ be an FR Maximizer strategy with finitely many memory modes $\{1,\dots,k\}$.
At state $\state_1''$ this strategy $\zstrat$ can base its decision only on the current memory mode $\memconf \in \{1,\dots,k\}$.
Let
\verynew{$X(\memconf) \eqdef \inf_\ostrat\probm_{\game,\state_1'',\zstrat[\memconf],\ostrat}(\reach{\{t\}})$}
be the probability of reaching the target if $\zstrat$ is in mode~$\memconf$ at state~$\state_1''$.
(From state~$\state_1''$ only Maximizer plays, thus Minimizer has no influence.)
Since $X(\memconf) < 1$ and the memory is finite, we have $Y \eqdef \max_{\memconf \in \{1,\dots,k\}} X(\memconf) < 1$.
There exists a number $i$ sufficiently large such that $Y < 1 - \frac{1}{2^i}$.
Let $\ostrat$ be a Minimizer strategy from $\state_0$ that takes the blue option from $\state_i'$ to~$\state_1''$, but chooses the red option in all states $\state_j'$ with $j \ne i$.
Then we have
\verynew{
\begin{align*}
\probm_{\game,\state_0,\zstrat,\ostrat}(\reach{\{t\}})
\ &\le \ \frac{1}{2^i} Y + \sum_{j \ne i} \frac{1}{2^j} \cdot \left(1 - \frac{1}{2^j}\right) \\
\ &< \ \sum_{j=1}^\infty \frac{1}{2^j} \cdot \left(1 - \frac{1}{2^j}\right)
\ = \ \valueof{\game}{\state_0}
\end{align*}
}
and thus $\zstrat$ is not optimal.
\end{proof}

\new{
Note that the counterexample in \Cref{def:conc-optmax,prop:conc-optmax}
has some particular properties. Even though the players have finite action
sets, the random state $\state_0$ is infinitely branching.
Moreover, while $\state_0$ admits an optimal Maximizer strategy,
the same does \emph{not} hold for all states in the game, e.g., the states $\state_i''$ have
value $1$, but do not admit any optimal Maximizer strategy.
}

\new{
The following theorem shows that if we impose any such extra condition on the game
(i.e., even all random states are finitely branching, or all states admit an
optimal Maximizer strategy) then the memory requirements of optimal Maximizer
strategies are somewhat lower.  
In these cases, just a step counter and 1 bit of public memory are sufficient.
}

\begin{theorem}\label{thm:turn-fb-optmax-upper}
Let $\game$ be a turn-based reachability game \new{with finite action sets}
with initial state~$s_0$ and objective $\reach{\{t\}}$
\new{such that at least one of the following two conditions is satisfied:
\begin{description}
\item[(A)]
$\game$ is finitely branching (at all states, including the random states), or
\item[(B)]
  Every state in $\game$ admits an optimal Maximizer strategy.
\end{description}
}
Suppose that Maximizer has an optimal strategy~$\sigma$, i.e.,
$\probm_{\game,s_0,\sigma,\pi}(\reach{\{t\}}) \ge \valueof{\game}{s_0}$ holds
for all Minimizer strategies~$\pi$.
Then Maximizer also has a deterministic such strategy that uses 1 bit of public memory and a step counter.
\end{theorem}

\new{
In the proof of \Cref{thm:turn-fb-optmax-upper} we will use the following version of the optional stopping theorem.
\begin{theorem}[Optional Stopping Theorem] \label{thm:optional-stopping}
Suppose $X_0, X_1, \ldots$ is a submartingale adapted to a filtration $\mathcal{F}_0, \mathcal{F}_1, \ldots$; i.e., $X_n \le \expectation(X_{n+1} \mid \mathcal{F}_n)$ for all~$n$.
Suppose further that there is $c \in \mathbb{R}$ with $\abs{X_n} \le c$ almost surely for all~$n$.
Then the limit $X_\infty \eqdef \lim_{n \to \infty} X_n$ exists almost surely.
Let $\tau_1, \tau_2$ be stopping times with $\tau_1 \le \tau_2$ almost surely (where $\tau_1 = \infty$ and $\tau_2 = \infty$ may have positive probability).
Then we have $X_{\tau_1} \le \expectation(X_{\tau_2} \mid \mathcal{F}_{\tau_1})$ almost surely.
\end{theorem}
\begin{proof}
The proof is immediate from \cite[Proposition~IV-5-24, Corollary~IV-2-25]{Neveu:1975}.
\end{proof}
Notice that if, in addition to the other preconditions of
\cref{thm:optional-stopping}, the submartingale $X_0, X_1, \ldots$ is a martingale, i.e., $X_n = \expectation(X_{n+1} \mid \mathcal{F}_n)$ for all~$n$, then it follows, by considering $Y_n \eqdef -X_n$ for all~$n$, that we have $X_{\tau_1} = \expectation(X_{\tau_2} \mid \mathcal{F}_{\tau_1})$ almost surely.
}

For the proof of \cref{thm:turn-fb-optmax-upper} we also
use \cite{KieferMSW17a} Theorem~5(2), slightly generalized as the following
lemma.

\begin{lemma}\label{lem:no-val-inc}
Let $\game$ be a turn-based reachability game,
\new{such that Minimizer has finite action sets and}
Minimizer does not have any value-increasing transitions;
i.e., for all transitions $s \transition{} s'$ with $s \in \ostates$ we have $\valueof{\game}{s} = \valueof{\game}{s'}$.

Then there exists some MD Maximizer strategy that is optimal from every state that admits an optimal strategy.
\end{lemma}
\begin{proof}
\new{First we consider the special case where $\game$ is finitely branching (i.e.,
at every state, not just at the Minimizer-controlled states).}
The statement then
follows from \cite[Thm.~5(2)]{KieferMSW17a}, but there it is stated only for a single initial state that admits an optimal strategy.
Therefore, denote by $\statesopt \subseteq S$ the set of states that admit an optimal strategy.
Add a fresh random state, say $s_0$, such that the support of $\probp(s_0)$ equals~$\statesopt$.
This might require infinite branching, but one can easily encode infinite
branching of random states into finite branching in the case of reachability
objectives, using a ``ladder'' gadget of fresh intermediate finitely branching random states.
Since every state in~$\statesopt$ admits an optimal strategy, the new state~$s_0$ admits an optimal strategy.
The mentioned result \cite[Thm.~5(2)]{KieferMSW17a} applied to~$s_0$ gives an
MD Maximizer strategy~$\sigma$ that is optimal starting from~$s_0$.
But then $\sigma$ must be optimal from every state in~$\statesopt$.

\new{
The above result can be generalized to allow infinitely branching random
states by the same encoding as above, using a ``ladder'' of fresh intermediate
finitely branching random states.
Similarly, infinitely branching Maximizer states can also be encoded into
a ``ladder'' of fresh intermediate finitely branching Maximizer states.
This encoding gives Maximizer the additional option to remain on the ladder forever,
but this is not a problem. Since the target is not on the ladder, staying on
the ladder forever would be losing for Maximizer.
Finally, since we are dealing with MD strategies, the strategies can be
carried back
\verynew{from the finitely branching game that uses the ``ladder'' gadget encoding
to the infinitely branching original game}.
(The same would not hold for Markov strategies in general, since the encoding
does not preserve path lengths. Also it is not possible to encode infinite
Minimizer branching in this way, because Minimizer could spuriously win by
staying on the ladder gadget forever.)}
\end{proof}

\begin{proof}[Proof of \cref{thm:turn-fb-optmax-upper}.]
Denote by $\bar\game$ the game obtained from~$\game$ by deleting Minimizer's value-increasing transitions, i.e., those transitions $s \transition{} s'$ with $s \in \ostates$ and $\valueof{\game}{s} < \valueof{\game}{s'}$.
As $\game$ \new{has finite Minimizer action sets}, each Minimizer state still
has at least one outgoing transition in $\bar\game$, and all states have the same value in $\game$ and~$\bar\game$, and all states that admit an optimal Maximizer strategy in~$\game$ admit an optimal Maximizer strategy in~$\bar\game$ and vice versa.
Denote by $\statesopt \subseteq \states$ the set of states that admit an
optimal Maximizer strategy.
By \cref{lem:no-val-inc}, there exists an MD Maximizer strategy $\bar\sigma$ in~$\bar\game$ that is optimal from every state in $\statesopt$.
Thus, for any $s \in \zstates \cap \statesopt$, the transition $s \transition{} s'$ that $\bar\sigma$ prescribes preserves the value, i.e., $\valueof{}{s} = \valueof{}{s'}$, and $s' \in \statesopt$.
By assumption, $s_0 \in \statesopt$.
Thus, $\bar\sigma$ is optimal from~$s_0$ in~$\bar\game$.
Hence, $\bar\sigma$ is optimal from~$s_0$ in~$\game$ if Minimizer never chooses a value-increasing transition $s \transition{} s'$ with $s \in \ostates$ and $\valueof{}{s} < \valueof{}{s'}$.
We view such a transition as a \emph{gift} from Minimizer of size
$\valueof{}{s'} - \valueof{}{s} >0$.

Let us now sketch a first draft of a Maximizer strategy that is optimal
from~$s_0$ in $\game$.
\begin{itemize}
\item
Play the strategy~$\bar\sigma$ until Minimizer gives a gift of, say, $\eps >0$. 
Use the 1 bit of public memory to record the fact that a gift has been given.
\item
Then play an $\eps$-optimal MD strategy, which exists by \cref{lem:conc-reach-non-uniform}.
\end{itemize}
The problem with this draft strategy is that storing the \emph{size} of
Minimizer's gift $\eps$ appears to require infinite memory, not just 1
bit, because $\eps$ could be arbitrarily small.
Moreover, the different $\eps$-optimal MD strategies might prescribe
different choices for different $\eps$.

Therefore, we use the step counter to deduce a lower bound on any
nonzero gift that Minimizer may have given up to that point in time.

\new{
Let $R(i)$ be the set of states that could be reached from $s_0$
with nonzero probability under any pair of strategies within $\le i$ steps.
}

\new{Under condition (A), $R(i)$ is \emph{finite} for every $i \in \N$,
because $\game$ is finitely branching.
Here we just define the finite set $\states(i) \eqdef R(i)$.
(Under condition (B), $\states(i)$ will be a subset instead.)
}

\new{
Under condition (B), even though both players have finite action sets,
random states are still allowed to be infinitely branching.
Thus $R(i)$ could be infinite.
However, since the players have finite action sets, for every time $i \ge 0$
and $\delta >0$, there exists a \emph{finite} subset of states
$\states(i,\delta) \subseteq R(i)$
such that 
under any pair of strategies $\zstrat,\ostrat$ from $s_0$,
the probability of ever being outside $\states(i,\delta)$
at any time $t \le i$ 
is upper-bounded by $\delta$.
I.e.,
\begin{equation}\label{eq:unlikely-outside-si}
  \forall\zstrat,\ostrat 
  \ \probm_{\game,\state_0,\zstrat,\ostrat}(\reachn{i}{\states\setminus\states(i,\delta)}) \le \delta.
\end{equation}
Since only random states can be infinitely branching,
$\states(i,\delta)$ can easily be defined by cutting infinite tails off
distributions, e.g., losing $\le \delta \cdot 2^{-(t+1)}$ in the $t$-th round.
Additionally, we can define these sets such that they are monotone
increasing in $i$. That is, $\states(i,\delta) \subseteq \states(i+1,\delta)$
for all $i \in \N$.
We then define $\states(i) \eqdef \states(i,2^{-i})$, and these sets are
also monotone increasing in $i$.
}

\new{
Let $\eps_i >0$ denote the size of the smallest nonzero gift that
Minimizer can give from any state inside $\states(i)$,
i.e.,
\[
  \eps_i \eqdef \min\{(\valueof{}{\state'} - \valueof{}{\state}) > 0 \mid
  \state \to \state'\ \wedge\ \state \in \states(i) \cap \ostates\}.
\]
We have $\eps_i >0$, because $\states(i)$ is finite and
Minimizer has finite action sets.
Moreover, the $\eps_i$ are monotone decreasing in $i$, because the sets
$\states(i)$ are monotone increasing.
}

\new{
Under condition (A), $\eps_i$ is
a lower bound on \emph{all} possible Minimizer gifts until time $i$,
while under condition (B) it is only a lower bound on
\emph{most} of Minimizer's possible
gifts until time $i$ (namely on those originating from a state in the
subset $\states(i)$).
}

\new{
However, we will show that, under condition (B), it is safe for Maximizer to
ignore gifts from Minimizer if gifts are given only finitely often.
(This does not hold under condition (A).)
So our Maximizer strategy will ignore gifts from Minimizer at time $i$ if
the gift originates from a state \emph{outside} $\states(i)$.
Indeed, except for a nullset of plays, Minimizer cannot give a gift at infinitely many
times $i$ at states outside of $\states(i)$, because it is so unlikely to be
outside $\states(i)$ at time $i$.
Consider an arbitrary pair of strategies $\zstrat, \ostrat$ and
let $\playset \subseteq \state_0\states^\omega$ be the set of plays
$s_0 s_1 \cdots s_i s_{i+1} \dots$ from $\state_0$ where
$\state_i \notin \states(i)$ for infinitely many $i \in \N$.
}

\new{
\begin{claim}\label{claim:only-fin-non-ignored-gifts}
$\forall\zstrat,\ostrat\ \probm_{\game,\state_0,\zstrat,\ostrat}(\playset) = 0$.
\end{claim}
\begin{proof}
Consider a number $k \in \N$. For every play $s_0 s_1 \cdots s_i s_{i+1} \dots$
in $\playset$ there exists a number $k' > k$ such that $\state_{k'} \notin \states(k')$.
Let $\playset_{k'} \subseteq \playset$ be the subset of plays where $k'$
is the smallest number $> k$ where $\state_{k'} \notin \states(k')$.
Then $\playset$ can be partitioned as
$\playset = \uplus_{k' > k} \playset_{k'}$.
However, by $\states(k') = \states(k',2^{-k'})$ and
\eqref{eq:unlikely-outside-si}, we have
$\probm_{\game,\state_0,\zstrat,\ostrat}(\playset_{k'}) \le 2^{-k'}$
and thus
$\probm_{\game,\state_0,\zstrat,\ostrat}(\playset) \le \sum_{k' >k} 2^{-k'}
\le 2^{-k}$.
Since this holds for every $k \in \N$, the result follows.
\end{proof}
}

Another problem with the draft strategy is that the $\varepsilon$-optimal MD strategy from \cref{lem:conc-reach-non-uniform} is $\varepsilon$-optimal only from a finite set of initial states (uniformly $\varepsilon$-optimal memoryless strategies do not always exist; see \cref{thm:TB-BMI-zplus}).

Therefore, we partition time into infinitely many finite \emph{phases} $\Phi_1 = \{1, \ldots, t_1\}, \Phi_2 = \{t_1+1, \ldots, t_2\}, \Phi_3 = \{t_2+1,\ldots, t_3\}$, etc., and refer to $\Phi_1, \Phi_3, \ldots$ as \emph{odd} and to $\Phi_2, \Phi_4, \ldots$ as \emph{even} phases.
The length of the phases is determined inductively; see below.
\new{
Let $S_i$ with $S(t_{i-1}+1) \subseteq S_i \subseteq R(t_{i-1}+1)$
be a sufficiently large finite subset of the states that could be reached by
the beginning of phase $i$ such that the following condition holds:
Under any pair of strategies, for
any $s \in S(t_{i-1})$, conditioned under the event that $s$ has been visited
at some time $t \le t_{i-1}$, the probability of being inside $S_i$
at time $t_{i-1}+1$ is $\ge 1-(\eps_{t_{i-1}}/2)$.
Under condition (A), we can simply take $S_i \eqdef R(t_{i-1}+1)$, since that
is finite.
Under condition (B), since $S(t_{i-1})$ is finite and the players have finite
action sets, we obtain a suitable $S_i$ by cutting suitably small tails off
distributions.
Note that $S_i$ does not depend on the pair of strategies.
}  
The lengths of the phases $\Phi_1, \Phi_2, \ldots$ are determined as follows.
\begin{enumerate}
\item[(LO)]
  Each odd phase~$\Phi_i$ is long enough
  (i.e., $t_i$ is chosen large enough)
  so that we have $\inf_\pi \probm_{\bar\game,s,\bar\sigma,\pi}(\reachn{\Phi_i}{\{t\}}) \ge \frac{\valueof{}{s}}{2}$ for all $s \in S_i \cap \statesopt$, where we write $\reachn{\Phi_i}{\{t\}}$ for the event that $t$ is reached within $t_i - t_{i-1}$ steps, i.e., the length of phase~$\Phi_i$.
That is, if $\Phi_i$~begins at a state $s \in S_i \cap \statesopt$ and if Minimizer does not give a gift during~$\Phi_i$, the Maximizer strategy~$\bar\sigma$ realizes at least half of the value of~$s$ already within~$\Phi_i$.
\new{The length of $\Phi_i$ can be chosen finite, because $S_i$ is finite and Minimizer
has finite action sets.}
\item[(LE)]
\new{
For each even phase~$\Phi_i$, by \cref{lem:conc-reach-non-uniform}, there is
an MD Maximizer strategy~$\sigma_i$ so that $\Phi_i$~can be made long enough
so that we have $\inf_\pi \probm_{\game,s,\sigma_i,\pi}(\reachn{\Phi_i}{\{t\}}) \ge \valueof{}{s} -
(\varepsilon_{t_{i-1}}/2)$ for all $s \in S_i$.
Again the length of $\Phi_i$ can be chosen finite, because $S_i$ is finite and Minimizer
has finite action sets.
If Minimizer has given a gift
from a state $s \in \states(t_{i-1})$
in the previous phase, then this gift will be $\ge \varepsilon_{t_{i-1}}$.
Moreover, by the definition of $S_i$,
we will then be in a state in $S_i$ at the beginning of phase $\Phi_i$ with 
very high conditional probability $\ge 1-(\eps_{t_{i-1}}/2)$ (or even surely under
condition (A)).
Thus, in the phase $\Phi_i$,
the Maximizer strategy~$\sigma_i$ can undercut
Minimizer's gift and realizes most of the value of~$s$ already within~$\Phi_i$.}
\end{enumerate}
We now define a deterministic Maximizer strategy~$\sigma$ from $\state_0$
that uses a step counter and 1 bit of public memory.
Later we show that $\sigma$ is optimal from $\state_0$.
Strategy~$\sigma$ uses two memory modes, $\memconf_0$ and $\memconf_1$, where $\memconf_0$~is the initial mode.
Strategy~$\sigma$ \emph{updates} the mode as follows.
\begin{enumerate}
\item[(U1)] While in~$\memconf_0$ and in an odd phase $\Phi_i$:
  if Minimizer gives a gift
  \new{from a state $s \in \states(t_i)$}
  switch to~$\memconf_1$.
  I.e., Maximizer uses the bit to remember that Minimizer has given a gift
  and will undercut it in the next even phase.
  \new{The size of the gift is lower-bounded by $\eps_{t_i} >0$.
  Note that Maximizer ignores all Minimizer gifts from states outside $\states(t_i)$
  (which can only happen under condition (B)).}
\item[(U2)] While in~$\memconf_0$ and upon entering an odd phase: if the new state does not admit an optimal strategy, switch to~$\memconf_1$.
    This can only happen if Minimizer has
    given a gift in some previous \emph{even} phase
    \new{(and not at all under condition (B)).}
    If Minimizer had given a gift in some previous odd phase then the memory
    mode would already be $\memconf_1$. If Minimizer has never given a gift then the
    current state would still admit an optimal strategy, since Maximizer never voluntarily
    leaves $\statesopt$.
 \end{enumerate}
Note that once the mode has been switched to~$\memconf_1$ it is never switched back to~$\memconf_0$.
Strategy~$\sigma$ \emph{plays} as follows.
\begin{enumerate}
\item[(P1)]
While in~$\memconf_0$ and in~$\statesopt$: play~$\bar\sigma$.
This keeps the game in~$\statesopt$, at least until possibly Minimizer gives a
gift.
\new{(Under condition (A), the play might leave $\statesopt$ after a Minimizer gift.
Under condition (B), all plays stay inside $\statesopt$.)} 
\item[(P2)]
While in~$\memconf_0$ and in a state $s \in \zstates \setminus \statesopt$: choose a value-preserving transition, i.e., $s \transition{} s'$ with $\valueof{}{s} = \valueof{}{s'}$.
Such a transition must exist, due to the finite Maximizer branching in~$\game$.
\item[(P3)]
While in~$\memconf_1$ during an odd phase: choose a value-preserving transition, i.e., $s \transition{} s'$ with $\valueof{}{s} = \valueof{}{s'}$.
Such a transition must exist, due to the finite Maximizer branching in~$\game$.
\new{Intuitively, Maximizer has recorded the fact that Minimizer has given a
  non-ignored gift, but waits until the next even phase to capitalize on it.}
\item[(P4)]
While in~$\memconf_1$ during an even phase~$\Phi_i$: play the MD strategy
$\sigma_i$ from the
definition (LE) of the even phase~$\Phi_i$.
It follows from (U1) and~(U2) that $\sigma_i$~has been played from the
beginning of~$\Phi_i$.
(Here Maximizer undercuts Minimizer's previous gift.)
\end{enumerate}
Note that not all possible gifts by Minimizer are detected, i.e., result in a
switch to memory mode $\memconf_1$.
\new{First, gifts in phase $\Phi_i$ from states outside $\states(t_i)$ are ignored.}
Moreover, Minimizer could give a gift in an even phase while staying in $\statesopt$,
or the game might just temporarily leave $\statesopt$ but move back to
$\statesopt$ before the next odd phase,
thus avoiding rule~(U2).
However, this is not \new{a} problem for Maximizer:
Since the game returns to $\statesopt$ before the next odd phase, Maximizer
is fine to just continue playing $\bar\sigma$ by~(P1), because he will realize at
least half of the value
\new{(at least of most states; those in $S_i$)}
during the next odd phase.

To show that $\sigma$~is optimal from~$s_0$, fix an arbitrary Minimizer strategy~$\pi$ for the rest of this proof, and assume that the target~$t$ is a sink.
Let us write $\probm$ for $\probm_{\game,\state_0,\zstrat,\ostrat}$ and $\expectation$ for the associated expectation.
We need to show that $\valueof{}{s_0} \le \probm(\reach{\{t\}})$.

For a play $s_0 s_1 \cdots \in \{s_0\} S^\omega$, define a random variable
$\tau_1$, taking values in $\N \cup \{\infty\}$, such that $\tau_1 = \infty$
if Minimizer never gives a gift
\new{or all Minimizer gifts are ignored},
and $\tau_1 = j < \infty$ if $s_j \transition{} s_{j+1}$ is the first
\new{non-ignored} Minimizer gift.
Also define a random variable $\tau_2$, taking values in $\N \cup \{\infty\}$, such that $\tau_2 = \infty$ if no mode switch from $\memconf_0$ to~$\memconf_1$ ever occurs, and $\tau_2 = k < \infty$ if $k$~is the beginning of the even phase following a mode switch from $\memconf_0$ to~$\memconf_1$.
The random variables $\tau_1+1, \tau_2$ are both stopping times.
As long as Minimizer does not give any \new{non-ignored}
gift, Maximizer plays~$\bar\sigma$ and, by~(P1), keeps the game in~$\statesopt$, and thus, by (U1) and~(U2), the mode remains~$\memconf_0$.
Hence, $\tau_1 < \tau_2 \le \infty$ or $\tau_1 = \infty = \tau_2$.

Further, define random variables $V_0, V_1, \ldots$ and $W_0, W_1, \ldots$, taking values in~$[0,1]$, by 
$V_i \eqdef \valueof{}{s_i}$ for all $i \le \tau_1$, and $V_i = V_{\tau_1}$ for all $i \ge \tau_1$, and
$W_i \eqdef \valueof{}{s_i}$ for all $i \le \tau_2$, and $W_i = W_{\tau_2}$ for all $i \ge \tau_2$.
By (P1), (P2) and~(P3), Maximizer preserves the value in each of his transitions, at least until~$\tau_2$.
Thus, $W_0, W_1, \ldots$ is a submartingale.
\new{
Minimizer cannot decrease the value, but might increase it when giving a gift.
Under condition (B), Minimizer might give ignored gifts before $\tau_1$.
Thus, $V_0, V_1, \ldots$ is a submartingale.
Under condition (A), Minimizer gifts are never ignored, and
thus $V_0, V_1, \ldots$ is even a martingale.
}
By \new{\cref{thm:optional-stopping}}, $V_0, V_1, \ldots$ and $W_0, W_1, \ldots$ converge almost surely to random variables, which we may call, without risk of confusion, $V_{\tau_1}$ (which equals $V_{\tau_1+1}$) and~$W_{\tau_2}$, respectively.
\new{Again by \cref{thm:optional-stopping}}, we have
\begin{equation} \label{eq:optimalmax-1}
 \valueof{}{s_0} \ = \ \expectation V_0 \ \new{\le} \ \expectation V_{\tau_1+1} \ = \ \expectation V_{\tau_1}\,. 
\end{equation}

Now consider the event~$\tau_2=\infty$. By (U1) and~(U2), Minimizer does not give
\new{any non-ignored} gift in any odd phase and the state at the beginning of every odd phase admits an optimal strategy for Maximizer.
\new{
This property is ensured by each of the conditions (A) and (B).
Under condition (A), Minimizer gifts are never ignored.
Condition (B) ensures that the game is always in a state that admits
an optimal strategy for Maximizer, and thus in particular at the beginning of
every odd phase.
}

\new{
Under condition (A), Minimizer will not give any gift at all in any odd phase,
and thus by (P1) Maximizer plays the strategy $\bar\sigma$ undisturbed in every odd
phase. Since (A) implies that $S_i \eqdef R(t_{i-1}+1)$, definition (LO) ensures
that in every odd phase Maximizer realizes at least half of the value of the
state at the beginning of the phase.
}

\new{
Under condition (B), Minimizer might still give ignored gifts in odd phases,
which could disrupt the attainment of Maximizer's strategy $\bar\sigma$.
However, by \cref{claim:only-fin-non-ignored-gifts}, except in a
nullset of plays, there are only finitely many ignored Minimizer gifts in a play.
I.e., almost every play is eventually undisturbed by ignored Minimizer gifts
in odd phases.
Moreover, no part of the value is lost, since $V_0, V_1, \ldots$ is a
submartingale, and every state admits an optimal strategy for Maximizer.
Hence, except in a nullset, by (LO), Maximizer \emph{eventually} realizes at least half
of the value of each state $s \in \states_i$ that it is in at the beginning of every odd phase $\Phi_i$.
Finally, since $S(t_{i-1}+1) \subseteq S_i$, the probability of being in a
state $s \in \states_i$ at the beginning of an odd phase $\Phi_i$ converges to $1$
as $i \to \infty$.
}

Therefore, \new{under either condition (A) or (B)}, we have that
\begin{equation} \label{eq:optimalmax-realize}
 \probm(\{\tau_2 = \infty\} \cap \{W_\infty > 0\} \setminus \reach{\{t\}}) \ =\ 0\,.
\end{equation}
Thus,
\[
\begin{aligned}
\expectation(V_\infty \mid \tau_1 = \infty) \ & \le \ \probm(V_\infty > 0 \mid \tau_1 = \infty) \\
 & = \ \probm(W_\infty > 0 \mid \tau_1 = \infty) \ \le \ \probm(\reach{\{t\}} \mid \tau_1 = \infty)\,.
\end{aligned}
\]
Hence, continuing~\eqref{eq:optimalmax-1},
\begin{equation} \label{eq:optimalmax-2}
\begin{aligned}
 \valueof{}{s_0} \ & \new{\le} \ \probm(\tau_1 = \infty) \cdot \expectation(V_\infty
 \mid \tau_1 = \infty) + \sum_{0 \le j < \infty} \probm(\tau_1 = j) \cdot \expectation(V_j \mid \tau_1 = j) \\
                   &\le \ \probm(\reach{\{t\}}, \tau_1 = \infty) + \sum_{0 \le j < \infty} \probm(\tau_1 = j) \cdot \expectation(V_j \mid \tau_1 = j)\,,
\end{aligned}
\end{equation}
\new{where here and henceforth, to avoid clutter, we may write ``,'' for the intersection of events.}
Let $j \in \N$.
It follows from the definitions of $\tau_1$ and~$\eps_j$ that on $\tau_1=j$ we have $W_{j+1} \ge W_j + \eps_j$.
Thus,
\begin{equation} \label{eq:optimalmax-3}
\begin{aligned}
 &\expectation(V_j \mid \tau_1=j) \\
 &=\  \expectation(W_j \mid \tau_1=j) && \text{by def.\ of $V_j, W_j$} \\
 &\le\ -\eps_j + \expectation(W_{j+1} \mid \tau_1 = j) && \text{as explained above}\\
 &\le\ -\eps_j + \expectation(W_{\tau_2} \mid \tau_1 = j) && \text{\new{\cref{thm:optional-stopping}}} \\
 &=\   -\eps_j + \probm(\tau_2=\infty \mid \tau_1=j) \cdot \expectation(W_\infty \mid \tau_1=j, \tau_2=\infty) + \mbox{} \\
 & \quad \mbox{}+ \! \sum_{j+1 \le k < \infty} \probm(\tau_2=k \mid \tau_1=j) \cdot \expectation(W_k \mid \tau_1=j, \tau_2=k)\,.
\end{aligned}
\end{equation}
Concerning the first expectation, we have
\begin{equation} \label{eq:optimalmax-4}
\begin{aligned}
\expectation(W_\infty \mid \tau_1=j, \tau_2=\infty) \ &\le \ \probm(W_\infty>0 \mid \tau_1=j, \tau_2=\infty) \\
                                                    \ &\le \ \probm(\reach{\{t\}} \mid \tau_1=j, \tau_2=\infty) &&\text{by~\eqref{eq:optimalmax-realize}.}
\end{aligned}
\end{equation}
\new{Concerning expectations under the sum,}
let $k > j$, and denote by $H(j,k)$ the set of histories $s_0 \cdots s_k \in \{s_0\} S^k$ such that for some (hence, all) extension(s) $r = s_0 \cdots s_k s_{k+1} \cdots$ we have $\tau_1(r) = j$ and $\tau_2(r) = k$.
Then we have
\begin{align*}
& \probm(\tau_1=j, \tau_2=k) \cdot \expectation(W_k \mid \tau_1=j, \tau_2=k) \\
                 &\quad =\   \sum_{h = s_0 \cdots s_k \in H(j,k)} \probm(\{h\} S^\omega) \cdot \valueof{}{s_k} && \text{by the defs.} \\
&\quad\le\ \sum_{h \in H(j,k)} \probm(\{h\} S^\omega) \cdot \big(\probm(\reach{\{t\}} \mid \{h\}S^\omega)+ \new{\eps_{k-1}/2 + \eps_{k-1}/2}\big)
                 && \text{by~(P4),\new{(LE)}} \\
& \new{\text{Minimizer's gift at time $j$ happens in some odd phase $\Phi_i$ and is $\ge \eps_{t_i}$.}}\\
  & \new{\text{The next even phase begins at time $k=t_i +1$.}}\\
  & \new{\text{Each of the two errors in this even phase are $\le \eps_{t_i}/2=\eps_{k-1}/2$.}}\\ 
&\quad\le\ \sum_{h \in H(j,k)} \probm(\{h\} S^\omega) \cdot
                                                   \big(\probm(\reach{\{t\}}
                                                   \mid \{h\} S^\omega) +
                                                   \eps_j\big) &&
                                                                  \text{$\eps_{k-1}
                                                                  \le \eps_j$
                                                                  } \\
&\quad=\ \probm(\tau_1=j, \tau_2=k) \cdot \big( \probm(\reach{\{t\}} \mid \tau_1=j, \tau_2=k) + \eps_j \big)\,.
\end{align*}
Thus,
\begin{equation*}
\begin{aligned}
    \sum_{j+1 \le k < \infty}& \probm(\tau_2=k \mid \tau_1=j) \cdot \expectation(W_k \mid \tau_1=j, \tau_2=k) \\
& \le\ \sum_{j+1 \le k < \infty} \probm(\tau_2=k \mid \tau_1=j) \cdot \big( \probm(\reach{\{t\}} \mid \tau_1=j, \tau_2=k) + \eps_j \big) \\
& \le\ \probm(\reach{\{t\}}\mid\tau_2 < \infty,\tau_1=j) + \eps_j\,.
\end{aligned}
\end{equation*}
Combined with \cref{eq:optimalmax-3,eq:optimalmax-4} this gives
\begin{equation*}
\begin{aligned}
\expectation(V_j \mid \tau_1=j)
&~\le~\probm(\reach{\{t\}} \mid\tau_2=\infty, \tau_1=j)  + \probm(\reach{\{t\}}\mid \tau_2 < \infty, \tau_1=j) \\
& \le \ \probm(\reach{\{t\}} \mid \tau_1=j)\,.
\end{aligned}
\end{equation*}
Combined with~\eqref{eq:optimalmax-2}, we obtain
\[
\valueof{}{s_0} \ \le \ \probm(\reach{\{t\}}, \tau_1=\infty) + \probm(\reach{\{t\}}, \tau_1<\infty) \ = \ \probm(\reach{\{t\}})\,,
\]
as required.
\end{proof}

\new{
The following example shows a corresponding lower bound to
\Cref{thm:turn-fb-optmax-upper},
i.e., even if \emph{both} conditions (A) and (B) hold, just a step counter does \emph{not} suffice for
optimal Maximizer strategies.
}

\begin{figure}
\begin{center}
\includegraphics[width=\textwidth, angle=0]{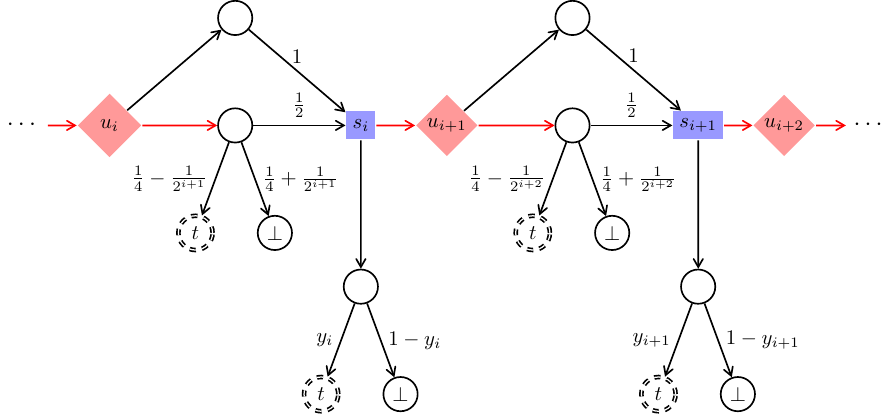}
\end{center}
\caption{Finitely branching turn-based reachability game $\game$, where optimal Maximizer strategies cannot be Markov. For clarity, we have drawn several copies of the target state $t$.
The number $y_i$ is defined to be $\frac12 - \frac1{2^{i+1}}$; see \cref{prop:turn-fb-optmax-lower}.}
\label{fig:turn-fb-optmax-lower}
\end{figure}

\begin{proposition}\label{prop:turn-fb-optmax-lower}
There exists a finitely branching turn-based reachability game $\game$ \new{with} initial state $u_1$ and objective $\reach{\{t\}}$,
as shown in \cref{fig:turn-fb-optmax-lower}, such that
\begin{enumerate}
\item\label{prop:turn-fb-optmax-lower-1}
\new{From every state in $\game$, Maximizer has an optimal strategy.} 
\item\label{prop:turn-fb-optmax-lower-2}
Every randomized Maximizer strategy from $u_1$ that uses only a step counter and no memory is not optimal.
\end{enumerate}
\end{proposition}
\begin{proof}
Consider first a version of~$\game$, say $\game'$, in which the only outgoing transition from the states $u_i$ is the horizontal one, shown in red in \cref{fig:turn-fb-optmax-lower}.
I.e., in $\game'$ Minimizer does not have any choice and thus $\game'$ can be
regarded as a maximizing MDP.
Let $\zstrat$ be the MD Maximizer strategy that chooses at all states $s_i$ the \emph{horizontal} outgoing transition, shown in red in \cref{fig:turn-fb-optmax-lower}.
Then we have
\[
\probm_{\game',u_i,\zstrat}(\reach{\{t\}}) \ = \ \sum_{j=0}^\infty \frac{1}{2^j} \cdot \left(\frac14 - \frac{1}{2^{i+j+1}}\right) \ = \ \frac12 - \frac13 \cdot \frac{1}{2^{i-1}}\,.
\]
In~$\game'$, strategy~$\sigma$ is optimal for Maximizer everywhere.
\new{
Indeed the only alternative is to take the \emph{vertical} outgoing
transition at some state $s_i$, which is suboptimal by the following.
Consider a strategy~$\sigma'$ that chooses at state~$s_i$ the \emph{vertical}
outgoing transition.
Then we have
\begin{equation} \label{eq:turn-fb-optmax-lower}
\begin{split}
\probm_{\game',s_i,\sigma'}(\reach{\{t\}}) & =  \frac12 - \frac{1}{2^{i+1}}\\
                                           & <  \frac12 - \frac13 \cdot \frac{1}{2^i} \\
                                           & =  \probm_{\game',u_{i+1},\sigma}(\reach{\{t\}})\\
                                           & =  \probm_{\game',s_{i},\sigma}(\reach{\{t\}})\\
                                           & \le  \valueof{\game'}{s_i} \,.
\end{split}
\end{equation}
}
Consider now the original game~$\game$ as shown in \cref{fig:turn-fb-optmax-lower}.
Since Minimizer has additional options, the value at each state is not larger than at the corresponding state in~$\game'$.

\new{
However, we show that, in $\game$, Maximizer still has an optimal strategy 
$\hat\sigma$ from every state $\state$.
It suffices to show this property for states $\state = u_k$ for any $k \ge 1$.
At states $\state = s_k$, the optimal move is always to go right to $u_{k+1}$,
because the vertical transition is suboptimal by
\eqref{eq:turn-fb-optmax-lower}, and at random states no decision can be made
until the next step (or ever).
}

\new{
We show that, starting from~$u_k$, strategy $\hat\sigma$ attains the same
value ($\frac12 - \frac13 \cdot \frac{1}{2^{k-1}}$) in $\game$ as in~$\game'$.
}
Namely, define $\hat\sigma$ so that as long as Minimizer chooses the
horizontal (red) outgoing transitions at~$u_i$, Maximizer chooses the horizontal (red) outgoing transition at~$s_i$; once Minimizer deviates and chooses the non-horizontal outgoing transition at, say, $u_i$, then Maximizer responds by choosing the vertical outgoing transition at~$s_i$.
(The strategy~$\hat\sigma$ is a deterministic public 1-bit strategy, but we do not need that here.)

Intuitively, for Minimizer a ``deviation'', i.e., choosing a non-horizontal outgoing transition, is value-increasing and thus suboptimal.
But she may try to lay a trap for Maximizer and trick him into visiting all states $u_i, s_i$.
To stop this from happening, Maximizer, using~$\hat\sigma$, responds to a Minimizer deviation by also deviating, i.e., by choosing a vertical outgoing transition.
Such a deviation is suboptimal for him, but the game is constructed so that a Maximizer deviation decreases the value less than Minimizer has previously increased it by her deviation.
In effect, with $\hat\sigma$, Maximizer attains as much as in~$\game'$ if Minimizer never deviates; if Minimizer deviates, Maximizer attains slightly more than in~$\game'$.
Thus, $\hat\sigma$ is optimal.

Formally, let $\pi$ be any Minimizer strategy.
Denote by $D_i$ the event that Minimizer deviates at~$u_i$ \new{(for some $i \ge k$)},
i.e., chooses the non-horizontal outgoing transition at~$u_i$.
Since the Maximizer strategy~$\hat\sigma$ responds by choosing the vertical outgoing transition at~$s_i$, we have
\[
 \probm_{\game,\new{u_k},\hat\sigma,\pi}(\reach{\{t\}} \mid D_i) \ = \ \frac12 - \frac{1}{2^{i+1}} \ > \ \frac12 - \frac{1}{3} \cdot \frac{1}{2^{i-1}} \ = \ \probm_{\game',u_i,\zstrat}(\reach{\{t\}}) \,,
\]
i.e., by deviating at~$u_i$,
Minimizer increases the probability of reaching~$t$ compared to her not deviating at~$u_i$ or thereafter (which corresponds to playing in~$\game'$).
We have already argued that $\probm_{\game',u_i,\zstrat}(\reach{\{t\}}) = \valueof{\game'}{u_i} \ge \valueof{\game}{u_i}$. It follows that $\hat\sigma$ is optimal,
which concludes the proof of \cref{prop:turn-fb-optmax-lower-1}.

Towards \cref{prop:turn-fb-optmax-lower-2}, note that in~$\game$
the step counter from $u_1$ is implicit in the current state.
In particular, starting from~$u_1$, if a state~$s_i$ is visited then it is visited as the $3i$-th state.
It follows that a step counter is not useful for Maximizer strategies.
Thus, it suffices to show that no memoryless strategy for Maximizer is optimal.
Let $\sigma$ be any memoryless Maximizer strategy.
If $\sigma$ chooses at every~$s_i$ the horizontal outgoing transition, the probability of reaching~$t$ is zero if Minimizer never chooses the horizontal outgoing transition at any~$u_i$; thus, $\sigma$ is not optimal.
Hence, we can assume that there is a state~$s_i$ at which $\sigma$ chooses with a positive probability the vertical outgoing transition.
Denote by $E_i$ the event that Maximizer chooses the vertical outgoing transition at~$s_i$.
Let $\pi$ be the Minimizer strategy that at all~$u_j$ chooses the horizontal outgoing transition.
Recall that $\pi$ is optimal for Minimizer everywhere.
Similarly to~\eqref{eq:turn-fb-optmax-lower} above, we have
\begin{equation*}
\probm_{\game,u_1,\sigma,\pi}(\reach{\{t\}} \mid E_i) \ = \ \frac12 - \frac{1}{2^{i+1}} \ < \ \frac12 - \frac13 \cdot \frac{1}{2^i} \ = \ \valueof{\game}{u_{i+1}} \ = \ \valueof{\game}{s_i}\,.
\end{equation*}
Thus, $\sigma$ is not optimal.
As $\sigma$ was chosen arbitrarily, Maximizer does not have an optimal memoryless strategy.
This proves \cref{prop:turn-fb-optmax-lower-2}.
\end{proof}

\new{
In the example in \cref{fig:turn-fb-optmax-lower},
subgame-perfect Maximizer strategies cannot guarantee any positive probability
of reaching the target state, because they would always choose the step
$\state_i \to u_{i+1}$ for all $i \in \N$.
Thus an optimal Maximizer strategy may need to take steps that are locally
sub-optimal in subgames.
}

\new{
However, in those turn-based reachability games with finite Minimizer action sets
where optimal subgame-perfect
Maximizer strategies do exist, there also exist such strategies that are
memoryless and deterministic by \cite[Theorem 5]{KieferMSW17a}.
}

\subsection{Concurrent Games}\label{subsec:optmax-concurrent}

\new{
The lower bounds for turn-based games from \Cref{subsec:optmax-turn}
immediately carry over to concurrent games.
It is an open question whether the upper bounds carry over.
We conjecture that a suitably adapted version of \Cref{thm:turn-fb-optmax-upper}
might hold for concurrent games (e.g., condition (A) might be generalized by
requiring that all probability distributions have finite support).
However, such a generalization faces several obstacles.
In concurrent games, it is more difficult to define what it means for
Minimizer to ``give a gift'', and how to define a restricted version of the game
where such gift-giving is forbidden.
Also one would need a suitably generalized version of \Cref{lem:no-val-inc}.
}

\new{
A special case of optimal Maximizer strategies are those that
win almost surely.
Here no memory is needed at all, and these strategies can even be made uniform.
The following upper bound for concurrent games trivially carries over to
turn-based games (with finite action sets).
}

\begin{restatable}{theorem}{concreachas}\label{thm:conc-reach-as}
Given a concurrent game with finite action sets and a
reachability objective,
there exists some randomized memoryless Maximizer strategy
that is almost surely winning from every state that admits an almost
surely winning strategy \verynew{(i.e., the same strategy works from
  all these states)}.
\end{restatable}
\begin{proof}
Let $\game$ be a concurrent game with state space $\states$,
and let $\reach{\reachset}$ be a reachability objective. 

Without restriction, we can assume that all states in $\states$ admit
an almost surely winning strategy.
Otherwise, we consider the subgame $\game'$ obtained by restricting
$\game$ to $\states'$, \new{ie the game on the subgraph induced by $S'$,}
where $\states' \subseteq \states$ is the subset of states that admit an almost
surely winning strategy in $\game$.
Then all states in $\states'$ admit an almost surely winning strategy in
$\game'$. (Note that this construction of $\game'$ would not work if we
replaced the ``almost surely winning'' condition by the weaker condition of ``having value $1$''.)

In order to construct a memoryless Maximizer
strategy~$\hat{\zstrat}$ that wins almost surely from every state,
we inductively define a sequence of modified games $\game_i$ in which the strategy
of Maximizer is already fixed on a finite subset of the state space,
and where all states in $\game_i$ still admit an
almost surely winning strategy.
Fix an enumeration $\state_1, \state_2, \ldots$ of $\states$
in which very state $\state$ appears \emph{infinitely often}.

For the base case we have $\game_0 \eqdef \game$ and the property holds by
our assumption on $\game$.

Given~$\game_i$, we construct~$\game_{i+1}$ as follows.
We use \cref{lem:conc-reach-non-uniform} to get a memoryless strategy
$\zstrat_i$ and a finite subset of states $R_i$
s.t.\ $\inf_\ostrat\probm_{\game_i,\state_i,\zstrat_i,\ostrat}(\reachn{R_i}{\reachset}) \ge
\valueof{\game_i}{s_i} - 2^{-i} = 1 - 2^{-i}$.

Let $\game_i'$ be the subgame of $\game_i$ that is restricted to $R_i$
and further let 
\[
    R_i' \eqdef \{\state \in R_i \mid \inf_\ostrat\probm_{\game_i',\state,\zstrat_i,\ostrat}(\reachn{R_i}{\reachset}) >0\}
\]
be the subset of states in $R_i$ where $\zstrat_i$ has strictly positive
attainment in $\game_i'$.
In particular, we have $\state_i \in R_i'$ for all $i \ge 1$.
Since $R_i'$ is finite,
we have
$$\lambda_i \eqdef \min_{\state \in R_i'}
\inf_\ostrat\probm_{\game_i',\state,\zstrat_i,\ostrat}(\reachn{R_i}{\reachset})> 0.$$

We now construct $\game_{i+1}$ by modifying $\game_i$ as follows.
For every state $\state \in R_i'$ we fix Maximizer's (randomized) action
according to $\zstrat_i$.
Then
$\inf_\ostrat\probm_{\game_{i+1},\state_i,\zstrat,\ostrat}(\reach{\reachset}) \ge 1 - 2^{-i}$
and
$\inf_\ostrat\probm_{\game_{i+1},\state,\zstrat,\ostrat}(\reachn{R_i'}{\reachset})
\ge \lambda_i$ for all
$\state \in R_i'$ and all $\zstrat \in \zstratset_{\game_{i+1}}$
(and thus in particular for the strategy $\hat{\zstrat}$ that we will construct).

Now we show that in $\game_{i+1}$ all states $\state$ 
still have an almost surely winning strategy.

Let $\zstrat$ be an a.s.\ winning Maximizer strategy from
$\state$ in $\game_i$, which exists by the induction hypothesis.
We now define an a.s.\ winning Maximizer strategy $\zstrat'$ from
$\state$ in $\game_{i+1}$.

If the game does not enter $R_i'$
then $\zstrat'$ plays exactly as $\zstrat$ (which is possible since outside
$R_i'$ no Maximizer actions have been fixed).
If the game enters $R_i'$ then it will reach the target 
within $R_i'$
\new{(i.e., before exiting $R_i'$, if ever)}
with probability $\ge \lambda_i >0$.
Plays that do not stay inside $R_i'$ then exit $R_i'$
at some state $\state' \notin R_i'$.
Then, from $\state'$, $\zstrat'$ plays
an a.s.\ winning strategy w.r.t.\ $\game_i$
(which exists by the induction hypothesis).

Now we show that $\zstrat'$ wins almost surely from $\state$ in $\game_{i+1}$.
\new{The plays from $\state$ can be partitioned into the following three subsets.
The first set of plays visit $R_i'$ only finitely often and
eventually forever follow an a.s.\ winning strategy outside of $R_i'$
and thus (except for a nullset) eventually reach the target.
The second set of plays enter $R_i'$ infinitely often
and the third set of plays eventually forever remain in $R_i'$.
For plays in both the second and third sets, the probability of reaching the
target from the current state does not converge to zero, since $\lambda_i >0$.
Hence, by L\'{e}vy's 0-1 law, the probability of reaching the target must
converge to $1$, and thus (except for a nullset) the plays in the second and
third set also reach the target.
Therefore $\zstrat'$ almost surely wins from $\state$ in $\game_{i+1}$.
}

Finally, we can construct the memoryless Maximizer strategy $\hat{\zstrat}$.
\new{
Since our enumeration of the states $\state_1, \state_2, \dots$ contains
every state $\state \in \states$ infinitely often,
in particular it contains every state in $\states$.
Moreover, $\state_i \in R_i'$ for every $i \ge 1$.}
Thus, in the limit of the games
$\game_\infty$, all Maximizer choices are fixed.
The memoryless Maximizer strategy $\hat{\zstrat}$ plays according to these
fixed choices, i.e., it plays like $\zstrat_i$ at state $\state_i$ for all
$i \in \N$.
\new{Note that if $\state_i \in R_i'$ then, for all $j > i$,
the mixed action of $\zstrat_j$ at $\state_i$ coincides with the mixed action
of $\zstrat_i$ at $\state_i$, because $\zstrat_j$ is defined in a game where
Maximizer's mixed action in $\state_i$ is already fixed.}

\new{Since $\hat{\zstrat}$ plays like $\zstrat_i$ inside $R_i'$, we obtain
$\inf_\ostrat\probm_{\game,\state_i,\hat{\zstrat},\ostrat}(\reach{\reachset})
\ge
\inf_\ostrat\probm_{\game_i,\state_i,\zstrat_i,\ostrat}(\reachn{R_i}{\reachset})
\ge
1 - 2^{-i}$ for all $i \in \N$.}
Let $s \in \states$.
Since \new{our enumeration of the states contains
every state infinitely often},
$s=s_i$ holds for infinitely many~$i$, and thus we obtain
$\inf_\ostrat\probm_{\game,\state,\hat{\zstrat},\ostrat}(\reach{\reachset})
= 1$ as required. 
\end{proof}

\section{Minimizer Strategies}\label{sec:minimizer}
In the previous sections we have considered the strategy complexity
of Maximizer's strategies.
\new{
In this section we complete the picture of the strategy complexity of
Minimizer.
In reachability games, Minimizer strategies are generally simpler than
Maximizer strategies, because they do not need to make progress towards the target.
By~\cite[Thm.~1]{Raghavan-Nowak:1991}, we already know that 
Minimizer always has optimal (and thus $\eps$-optimal) MR strategies in 
concurrent reachability games with finite action sets. 
In~\cite[Thm.~3.1]{BBKO:IC2011},  this result is strengthened in the context of 
 finitely branching \verynew{turn-based} games, where it  is shown 
 that Minimizer always has  MD such strategies.
In the sequel, as depicted in~\cref{mintable},
we close the remaining gaps in the theory by studying the
strategy complexity of Minimizer in infinitely branching \verynew{turn-based} reachability games.
We prove that $\eps$-optimal
Minimizer strategies in 
infinitely branching \verynew{turn-based} reachability games can be chosen as 
deterministic and Markov (\Cref{thm:min-eps-optimal}).
In contrast, \emph{optimal} Minimizer strategies need not always exist in
infinitely branching \verynew{turn-based} reachability games.
However, even if optimal Minimizer strategies do exist,
a step counter plus finite private memory is \emph{not} sufficient in general
(\Cref{prop:infbranch-optmin}).  
}

We begin by considering games on acyclic graphs.
Memoryless strategies in acyclic games yield Markov strategies in general
games, since an encoded step counter makes the graph acyclic.
In fact, the following result about acyclic games is slightly more general,
since not all acyclic graphs yield an implicit step counter, i.e., the same
state might be reached via paths of different lengths.

\begin{restatable}{lemma}{thmEpsOptSafety}
\label{thm:acyclic-min-eps-optimal}
\new{
For every acyclic turn-based reachability game 
$\game=\gametuple$, reachability target
$\reachset \subseteq \states$
and every $0<\eps<1$
there exists an MD Minimizer strategy $\ostrat$ which
satisfies, for every state $\state_0\in\states$
and every Maximizer strategy $\zstrat$, that
$
\probm_{\game,\state_0,\zstrat,\ostrat}(\reach{\reachset})
\le \valueof{\game,\reach{\reachset}}{\state_0} (1+ \eps)
$.
Hence, acyclic turn-based reachability games admit uniformly $\eps$-optimal MD strategies for Minimizer.
}
\end{restatable}
\begin{proof}
Let us shortly write $\valueof{}{s}=\valueof{\game,\reach{\reachset}}{s}$ for the value of a state $\state$
and let $\iota:\states\to\N\setminus\{0\}$ be an enumeration of the state space
starting at 1. Define $\ostrat$ as the MD \new{Minimizer} strategy that, at any state $\state\in\ostates$, picks a
successor $\state'$ such that
$$\valueof{}{\state'}\quad\le\quad \valueof{}{\state} (1+ \ln(1+\eps) 2^{-\iota(\state)}).$$

To show that this strategy $\ostrat$ satisfies the claim
we (over)estimate the error by
$\ferr{s} \eqdef\prod_{s'\in\poststar{s}}(1+\ln(1+\eps)2^{-\iota(s')})$
where $\poststar{s}\subseteq \states$ is the set of states reachable from state $s\in\states$ (under any pair of strategies).
Notice that this guarantees that \begin{align}
    \nonumber
    1 < \ferr{s} &\le \prod_{i>0}\left(1+\ln(1+\eps)2^{-i}\right)\\
                 &\le \exp\left(\sum_{i>0}\ln(1+\eps)2^{-i}\right) \label{eq:error-ferr}\\
                 &\le \exp(\ln(1+\eps)) = 1+\eps\nonumber
\end{align}
where the third inequality uses that $1+x \le \exp(x)$.

Let $\zstrat$ be an arbitrary Maximizer strategy. 
For this pair $\zstrat,\ostrat$ of strategies
let's consider plays $(\NthState{i})_{i\ge 0}$ that start in $\state_0\in\states$
and proceed according to $\zstrat,\ostrat$, and let $\expectation[\game,\state_0,\zstrat,\ostrat]$
be the expectation with respect to $\probm_{\game,\state_0,\zstrat,\ostrat}$.

An induction on $n$ using our choice of strategy gives, for every initial state $\state_0\in S$, that
\begin{equation}
    \label{eq:exi-1}
    \expectation[\game,\state_0,\zstrat,\ostrat](\valueof{}{\NthState{n}})
    \le \valueof{}{s_0}\ferr{s_0}.
\end{equation}
Indeed, this trivially holds for $n=0$ as
$\expectation[\game,\state_0,\zstrat,\ostrat](\valueof{}{\NthState{0}})=\valueof{}{\state_0}$
and $\ferr{s_0} > 1$.
For the induction step there are three cases.

\smallskip
\emph{Case 1: $s_0\in \ostates$ and $\ostrat(\state_0)=s$.} 
Let $\zstrat{[s_0\to s]}$ denote the Maximizer strategy from $s$ that behaves just like
$\zstrat$ does after observing the first step, i.e., satisfies $\zstrat{[s_0\to s]}(sh) = \zstrat(s_0sh)$ for all suffix histories $h\in\states^*$.
Then

\begin{align*}
    \expectation[\game,\state_0,\zstrat,\ostrat](\valueof{}{\NthState{n+1}})
    &=
    \expectation[\game,\state,\zstrat{[s_0\to s]},\ostrat](\valueof{}{\NthState{n}})\\
&\le
    \valueof{}{s}\ferr{s}
    &\text{ind. hyp.}\\
    &\le
    \valueof{}{\state_0}
    \left(1+\ln(1+\eps)2^{-\iota(\state_0)}\right)
    \ferr{s}
    &\text{def. of $\ostrat$}\\
    &\le
    \valueof{}{s_0}\ferr{s_0}
    &\text{acyclicity; def.~of } \ferr{s_0}.
\end{align*}

\smallskip
\emph{Case 2: $s_0\in \zstates$.} 
Again, for any state $s$ let $\zstrat{[s_0\to s]}$ denote the suffix strategy consistent with $\zstrat$
after the first step.
Then
\begin{align*}
    \expectation[\game,\state_0,\zstrat,\ostrat](\valueof{}{\NthState{n+1}})
    &=
    \sum_{s \in \states}\zstrat(s_0)(s) \cdot 
    \expectation[\game,\state,\zstrat{[s_0\to s]},\ostrat](\valueof{}{\NthState{n}})
    \\&\le
    \sum_{s \in \states}\zstrat(s_0)(s) \cdot \valueof{}{\state}\ferr{s}
    \\
&\le
    \sum_{s \in \states}\zstrat(s_0)(s) \cdot \valueof{}{\state}
    \left(1+\ln(1+\eps)2^{-\iota(\state_0)}\right)
    \ferr{s}
    \\
&\le
    \sum_{s\in\states}\zstrat(s_0)(s) \cdot \valueof{}{s}
    \ferr{\state_0}
    \\
&\le
    \valueof{}{s_0}\ferr{s_0},
\end{align*}

where the first inequality holds by induction hypothesis, the second holds because
$1<{(1+\ln(1+\eps)2^{-\iota(\state_0)})}$,
and the third is acyclicity and the definition of 
$\ferr{\state_0}$.

\smallskip
\emph{Case 3: $s_0\in \rstates$} is analogous to case 2,
with the only difference that the initial successor distribution is $\probp(s_0)$,
the one fixed by the game, instead of $\sigma(s_0)$ and the last inequality becomes an equality.

\medskip
\noindent
Together with the observation (\cref{eq:error-ferr}) that $\ferr{s_0}\le (1+\eps)$ for every $\state_0$,
we derive that
    \begin{equation}
        \label{eq:exi-lim}
        \liminf_{n\to\infty} \expectation[\game,\state_0,\zstrat, \ostrat](\valueof{}{\NthState{n}})
        \le
        \valueof{}{\state_0}(1+\eps).
    \end{equation}

Finally, to show the claim, 
let $\RVInT{n} : \states^\omega \to \{0,1\}$ be the random variable that indicates that the $n$th state is in $\reachset$.
Note that $\RVInT{n} \le \valueof{}{\NthState{n}}$ because target states have value $1$.
\new{
Recall that $\reachn{n}{\reachset}$ denotes the objective of visiting~$\reachset$
within at most $n$ rounds of the game.
}
We conclude that
\begin{align*}
    \probm_{\game,\state_0,\zstrat, \ostrat}(\reach{\reachset})
    &=\quad\probm_{\game,\state_0,\zstrat, \ostrat}\left(\bigcup_{i=0}^\infty{\reachn{i}{T}}\right)
    && \text{semantics of~$\reach{\reachset}$} \\
& =\quad\smashoperator{\lim_{n\to\infty}} \probm_{\game,\state_0,\zstrat, \ostrat}\left(\bigcup_{i=0}^n\reachn{i}{T} \right)
    && \text{continuity of measures} \\ & =\quad \smashoperator{\lim_{n\to\infty}} \probm_{\game,\state_0,\zstrat, \ostrat}\left(\reachn{n}{T}\right)
 && \text{$\reachset$ is a sink} \\
 & =\quad\smashoperator{\lim_{n\to\infty}} \expectation[\game,\state_0,\zstrat, \ostrat](\RVInT{n})
 && \text{definition of $\RVInT{n}$} \\
 & \le\quad \liminf_{n\to\infty} \expectation[\game,\state_0,\zstrat, \ostrat](\valueof{}{\NthState{n}})
 && \text{as $\RVInT{n} \le \valueof{}{\NthState{n}}$}\\
 & \le\quad \valueof{}{\state_0}(1+\eps)
 && \text{by \cref{eq:exi-lim}.}
 \
\end{align*}
\end{proof}

\begin{theorem}
\label{thm:min-eps-optimal}
Turn-based games, even infinitely branching ones, with reachability objective admit uniformly $\eps$-optimal
strategies for Minimizer that are deterministic and Markov.
\end{theorem}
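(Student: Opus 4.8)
The plan is to reduce to the acyclic case, \cref{thm:acyclic-min-eps-optimal}, by unfolding the game along a step counter. Given $\game=\gametuple$ and target $\reachset$, first make every target state absorbing (replace the outgoing edges of each $f\in\reachset$ by a self-loop $f\transition f$); this changes neither the value nor the probability of $\reach{\reachset}$ from any state, since a play is winning as soon as it first visits $\reachset$. Then define the time-unfolded game $\game'$ with state space $\states\times\N$, where $(\state,t)$ belongs to the same player as $\state$, with transitions $(\state,t)\transition(\state',t+1)$ whenever $\state\transition\state'$, with probabilities $\probp'(\state,t)(\state',t+1)\eqdef\probp(\state)(\state')$ at random states, and with target $\reachset'\eqdef\reachset\times\N$. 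Since the second coordinate strictly increases along every edge, $\game'$ is acyclic; it is still a countable turn-based game, possibly infinitely branching, which is allowed by \cref{thm:acyclic-min-eps-optimal}.

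Next I would record the two translations between $\game$ and $\game'$. Histories of $\game'$ starting in $(\state_0,0)$ are in bijection with histories of $\game$ starting in $\state_0$ (the time coordinates are forced to be $0,1,2,\dots$), and this bijection preserves ownership and one-step transition probabilities; hence it induces a bijection between strategies for each player that preserves, for every pair of strategies, the induced distribution on plays, and in particular maps $\reach{\reachset'}$ to $\reach{\reachset}$. A strategy in $\game'$ is \emph{MD} exactly when the corresponding strategy in $\game$ depends only on the current state and the current step count, i.e.\ is deterministic and Markov. The same correspondence, shifted by $t$ steps, shows $\valueof{\game',\reach{\reachset'}}{(\state,t)}=\valueof{\game,\reach{\reachset}}{\state}$ for all $\state\in\states$ and $t\in\N$, because the reachability event is insensitive to the absolute time.

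Now apply \cref{thm:acyclic-min-eps-optimal} to $\game'$ with the same $\eps$: it yields an MD Minimizer strategy $\ostrat'$ such that for every $(\state,t)$ and every Maximizer strategy $\zstrat'$ in $\game'$,
\[
\probm_{\game',(\state,t),\zstrat',\ostrat'}(\reach{\reachset'})\ \le\ \valueof{\game',\reach{\reachset'}}{(\state,t)}\,(1+\eps)\ =\ \valueof{\game,\reach{\reachset}}{\state}\,(1+\eps)\ \le\ \valueof{\game,\reach{\reachset}}{\state}+\eps,
\]
the last step since $\valueof{\game,\reach{\reachset}}{\state}\le 1$. Let $\ostrat$ be the deterministic Markov Minimizer strategy in $\game$ corresponding to $\ostrat'$ under the translation above (started with step count $0$). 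Then for every $\state_0$ and every Maximizer strategy $\zstrat$, the strategy correspondence gives $\probm_{\game,\state_0,\zstrat,\ostrat}(\reach{\reachset})=\probm_{\game',(\state_0,0),\zstrat',\ostrat'}(\reach{\reachset'})\le\valueof{\game,\reach{\reachset}}{\state_0}+\eps$, so $\ostrat$ is uniformly $\eps$-optimal for Minimizer.

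The construction is routine; the only points needing care are the value-preservation identity $\valueof{\game',\reach{\reachset'}}{(\state,t)}=\valueof{\game,\reach{\reachset}}{\state}$, which relies on reachability being a time-shift-invariant objective, and the observation that the multiplicative guarantee of \cref{thm:acyclic-min-eps-optimal} still yields an additively $\eps$-optimal strategy because values are at most $1$. I do not anticipate a genuinely hard step; the main thing to get right is that ``MD in the time-unfolded game'' is literally the same object as ``deterministic Markov in the original game'', so no memory beyond the step counter is introduced.
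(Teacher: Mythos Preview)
Your proposal is correct and follows essentially the same route as the paper: unfold the game into an acyclic game on $\states\times\N$ and invoke \cref{thm:acyclic-min-eps-optimal}, then translate the resulting MD strategy back as a deterministic Markov strategy. You spell out more details than the paper does (absorbing targets, value preservation under time-shift, and the passage from the multiplicative bound $\valueof{}{\state}(1+\eps)$ to the additive bound $\valueof{}{\state}+\eps$), but the argument is the same.
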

\begin{proof}
For a given game $\game=\gametuple$ and reachability target $\reachset \subseteq \states$,
one can construct the acyclic game that encodes the stage (clock value) into the states:
$\mathcal{G}'=\tuple{\states,(\zstates',\ostates',\rstates'),\transition',\probp'}$
where $\states'=\states\x\N$,
$\zstates'=\zstates\x\N$,
$\ostates'=\ostates\x\N$
$\rstates'=\rstates\x\N$, 
and for all $i\in\N$,
$(s,i)\transition'(t,i+1) \iff s\transition t$ and
$\probp((s,i))((t,i+1)) = \probp(s)(t)$.

Every Markov strategy in $\?G$ uniquely gives rise to a memoryless strategy in $\?G'$ and vice versa.
The claim now follows from \cref{thm:acyclic-min-eps-optimal}.
\end{proof}

In infinitely branching turn-based reachability games, optimal Minimizer
strategies need not exist \cite[]{KMSW2017}.
When they do exist, they may need infinite memory.
We slightly improve the result of \cite[Proposition 5.6.(a)]{kucera_2011}
by showing that even a step counter does not help Minimizer.

\begin{figure}
\begin{center}
\includegraphics[width=0.75\textwidth, angle=0]{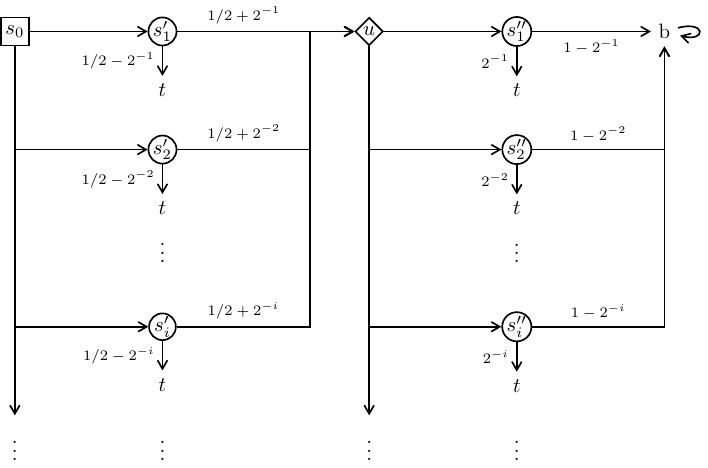}
\end{center}
\caption{The game $\game$ from \cref{def:infbranch-optmin}.}
\label{fig:infbranch-optmin}
\end{figure}

\begin{definition}\label{def:infbranch-optmin}
We define an infinitely branching turn-based reachability game $\game$ with
initial state $\state_0$ and target state $t$.
\new{See \cref{fig:infbranch-optmin} for a depiction.}
Let $\state_0$ be Maximizer-controlled. We have transitions
$\state_0 \to \state_i'$ for all $i \ge 1$.
All states $\state_i'$ are random states with 
$P(\state_i')(t) = 1/2 - 2^{-i}$ and 
$P(\state_i')(u) = 1/2 + 2^{-i}$.
The state $u$ is Minimizer-controlled with transitions
$u \to \state_i''$ for all $i \ge 1$.
All states $\state_i''$ are random states with 
$P(\state_i'')(t) = 2^{-i}$ 
and $P(\state_i'')(b) = 1-2^{-i}$
for a losing sink state $b$.
\end{definition}

\begin{proposition}\label{prop:infbranch-optmin}
There exists an infinitely branching turn-based reachability game $\game$
with initial state $\state_0$ and objective $\reach{\{t\}}$
as in \Cref{def:infbranch-optmin}, such that
\begin{enumerate}
\item\label{prop:infbranch-optmin-1}
Minimizer has an optimal strategy from $\state_0$.
\item\label{prop:infbranch-optmin-2}
Every randomized Minimizer strategy from $\state_0$ that uses only a step
counter and finite private memory is not optimal.
\end{enumerate}
\end{proposition}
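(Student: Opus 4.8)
The plan is to first determine the value of every state, and then treat the two claims in turn. Because $b$ is an absorbing losing sink and $t$ the target, $\valueof{\game}{\state_i''}=2^{-i}$ for all $i\ge1$, so $\valueof{\game}{u}=\inf_{i\ge1}2^{-i}=0$ (an infimum that is not attained); consequently $\valueof{\game}{\state_i'}=(1/2-2^{-i})\cdot1+(1/2+2^{-i})\cdot0=1/2-2^{-i}$, and finally $\valueof{\game}{\state_0}=\sup_{i\ge1}(1/2-2^{-i})=1/2$.

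For the first claim I would use that $u$ is the only Minimizer state, and that the play reaches $u$ only through a history $\state_0\,\state_i'\,u$, where $i$ is the index Maximizer picked at $\state_0$. Let $\ostrat$ be the (necessarily infinite-memory) Minimizer strategy that, upon reaching $u$ via $\state_i'$, moves to $\state_i''$. Conditioning on Maximizer's move $\state_0\transition\state_i'$, the probability of reaching $t$ is
\[
(1/2-2^{-i})+(1/2+2^{-i})\,2^{-i} \;=\; 1/2 - 2^{-i-1} + 2^{-2i} \;\le\; 1/2 ,
\]
since $2^{-2i}\le 2^{-i-1}$ for every $i\ge1$; this bound survives averaging over any randomized first move of Maximizer, so $\probm_{\game,\state_0,\zstrat,\ostrat}(\reach{\{t\}})\le 1/2=\valueof{\game}{\state_0}$ for every Maximizer strategy $\zstrat$, i.e.\ $\ostrat$ is optimal from $\state_0$.

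For the second claim the key point is that a step counter carries no information at $u$: the only predecessors of $u$ are the states $\state_i'$, which occur at time $1$, so $u$ is always reached at time $2$. Hence a Minimizer strategy $\ostrat$ using only a step counter plus a private finite memory $M$ chooses, at $u$, a mixed action $\nu_m\in\dist(\{\state_j''\mid j\ge1\})$ depending only on the current memory mode $m\in M$. Setting $c_m\eqdef\sum_{j\ge1}\nu_m(\state_j'')\,2^{-j}$, each $c_m>0$ (a probability distribution must put positive weight somewhere, and all the weights $2^{-j}$ are positive), so $c\eqdef\min_{m\in M}c_m>0$ because $M$ is finite. Whatever the history and whatever the (random) memory mode when the play reaches $u$, the conditional probability of later reaching $t$ is a convex combination of the $c_m$, hence at least $c$. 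Consequently, against the Maximizer strategy that plays $\state_0\transition\state_i'$, the probability of reaching $t$ is at least $(1/2-2^{-i})+(1/2+2^{-i})c\ge 1/2+c/2-2^{-i}$, which exceeds $\valueof{\game}{\state_0}=1/2$ as soon as $2^{-i}<c/2$; choosing such an $i$ exhibits a Maximizer strategy whose attainment exceeds the value, so $\ostrat$ is not optimal.

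There is no deep obstacle here; the only step that requires care is the reduction for the second claim, namely establishing that randomized memory updates do not help: at $u$ the whole strategy collapses to the finitely many distributions $\nu_m$, and the minimum of finitely many strictly positive numbers yields a history-independent positive lower bound $c$ on the escape probability from $u$, which Maximizer then magnifies by choosing a far-away entry state $\state_i'$.
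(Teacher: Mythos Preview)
Your proof is correct and follows essentially the same approach as the paper: you compute the same values, define the same history-dependent optimal Minimizer strategy (matching the index~$i$ of Maximizer's first move), and for item~2 observe that the step counter is useless at~$u$ and then take the minimum over the finitely many memory modes to obtain a uniform positive lower bound that Maximizer exploits by choosing~$i$ large. The only cosmetic difference is that the paper phrases the step-counter redundancy as ``the step counter is implicit in the states'' rather than your more concrete ``$u$ is always reached at time~$2$''.
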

\begin{proof}
Towards \Cref{prop:infbranch-optmin-1}, we note that $\valueof{\game}{u}=0$
and thus $\valueof{\game}{\state_0}=1/2$.
Minimizer's optimal strategy $\ostrat$ from $\state_0$ is defined as follows.
In plays where the state $u$ is not reached, Minimizer does not make any
decisions.
If state $u$ is reached, Minimizer considers the history of this play:
If Maximizer made the step $\state_0 \to \state_i'$ for some $i\ge 1$, then
Minimizer plays $u \to \state_i''$ for the same $i$.
Now we show that $\ostrat$ is optimal for Minimizer from $\state_0$.
Let $\zstrat$ be an arbitrary Maximizer strategy from $\state_0$
and let $x_i$ be the probability that $\zstrat$ chooses the step 
$\state_0 \to \state_i'$. This must be a distribution,
i.e., $\sum_{i\ge 1} x_i = 1$.
Then we have
\begin{equation}
\begin{aligned}
\probm_{\game,\state_0,\zstrat,\ostrat}(\reach{\{t\}})
&=
\sum_{i\ge 1} x_i((1/2 - 2^{-i}) + (1/2 + 2^{-i})2^{-i})\\
&\le
\sum_{i\ge 1} x_i(1/2)
=
1/2 = \valueof{\game}{\state_0}
\end{aligned}
\end{equation}
as required.

Towards \Cref{prop:infbranch-optmin-2}, we note that the step counter from
$\state_0$ is implicit in the states of $\game$, and thus superfluous for
Minimizer strategies. Hence it suffices to prove the property for Minimizer
strategies with finite memory.
Let $\ostrat$ be an FR Minimizer strategy with finitely many memory modes $\{1,\dots,k\}$.
In state $u$ this strategy $\ostrat$ can base its decision only on the current
memory mode $\memconf \in \{1,\dots,k\}$.
Let $X(\memconf) \eqdef \probm_{\game,u,\zstrat,\ostrat[\memconf]}(\reach{\{t\}})$
be the probability of reaching the target if $\ostrat$ is in mode $\memconf$
at state $u$. (From state $u$ only Minimizer plays, thus Maximizer has no influence.)
Since $X(\memconf) >0$ and the memory is finite, we have
$Y \eqdef \min_{\memconf \in \{1,\dots,k\}} X(\memconf) > 0$.
There exists a number $i$ sufficiently large such that $2^{-i} < Y/2$.
Let $\zstrat$ be a Maximizer strategy from $\state_0$ that chooses the
transition $\state_0 \to \state_i'$.
Then we have
\[
\probm_{\game,\state_0,\zstrat,\ostrat}(\reach{\{t\}})
\ge
(1/2 - 2^{-i}) + (1/2 + 2^{-i})Y
>
1/2 = \valueof{\game}{\state_0}
\]
and thus $\ostrat$ is not optimal.
\end{proof}
 
\section{Conclusion and Outlook}\label{sec:conclusion}
\new{
Our results closed many gaps about the strategy complexity of reachability
games; cf.~\Cref{maxtable} and \Cref{mintable}.
To summarize our main contributions, we return to the open questions raised
in \Cref{sec:intro}, which are now answered.
}

\new{
\begin{description}
\item[Q1.]
The negative result of \cite{Raghavan-Nowak:1991} can be strengthened.
There are no \emph{uniformly} $\eps$-optimal memoryless Maximizer strategies
in countably infinite reachability games, not even if the game is turn-based
and finitely branching; cf.~\Cref{thm:TB-BMI-zplus}.
This highlights the difference between (turn-based) 2-player stochastic
games and MDPs. In the latter, there do exist uniformly $\eps$\nobreakdash-optimal memoryless
strategies for reachability \cite{Ornstein:AMS1969}.
\item[Q2.]
In concurrent reachability games with finite action sets,
\emph{uniformly} $\eps$-optimal Maximizer strategies
exist and they require only $1$ bit of public memory.
In turn-based games, these strategies can even be chosen as deterministic.
See \Cref{thm:conc-reach-uniform}.
\item[Q3.]
If Minimizer is allowed infinite action sets then
reachability games are much more difficult for Maximizer.
Even in turn-based reachability games with infinitely branching Minimizer
states, Maximizer strategies based on a step counter plus arbitrary finite
private memory are
\verynew{insufficient. In general, they cannot guarantee any positive
  attainment against all Minimizer strategies, even if the start state has
  value $1$. In fact, 
the counterexample in \Cref{thm:no-sc-plus-finite} satisfies the even
stronger property that all states in it admit an almost surely winning Maximizer strategy.}
\end{description}
}

\new{
Open questions for further work concern the strategy complexity of optimal
Maximizer strategies, where they exist.
In general, a step counter plus finite private memory is not sufficient for
optimal Maximizer strategies, even in turn-based reachability games, by
\Cref{prop:conc-optmax}.
However, under certain mild conditions, a step counter plus $1$ bit of public
memory suffices for optimal Maximizer strategies in turn-based reachability
games, by \Cref{thm:turn-fb-optmax-upper}.
A similar theorem might hold for concurrent reachability games with
finite action sets under suitably adapted conditions.
}

\backmatter

\newpage

\begin{appendices}

\section{Technical Lemmas}

\label{app:technical}
The following inequality is due to Weierstrass.

\begin{proposition}[{\cite{Bromwich:1955} p.~104--105}]\label{prop:Weierstrass}
    Given an infinite sequence of real numbers $a_n$ with $0 \le a_n \le 1$,
\new{the following holds for all $n\in\N$.}
\[
\prod_{k=1}^n (1-a_k) \le \frac{1}{1+ \sum_{k=1}^n a_k}
\]
\end{proposition}
\begin{proof}
By induction on $n$.
In case $n=1$ we have $(1-a_1)(1+a_1)={(1-a_1^2)}\le 1$ as
required.
For the induction hypothesis we assume
\[
\prod_{k=1}^n (1-a_k)\left(1+\sum_{k=1}^n a_k\right) \le 1
\]
For the induction step we have
\begin{align*}
  \prod_{k=1}^{n+1} (1-a_k)\left(1+\sum_{k=1}^{n+1} a_k\right)
  &= (1-a_{n+1})\prod_{k=1}^n (1-a_k)\left(\left(1+\sum_{k=1}^n a_k\right)+a_{n+1}\right)\\
  &\le (1-a_{n+1})\left(1 + a_{n+1}\prod_{k=1}^n (1-a_k)\right)\\
  &\le (1-a_{n+1})(1 + a_{n+1})\\
  &= (1-a_{n+1}^2) \le 1
  \
\end{align*}
\end{proof}

\begin{proposition}\label{prop:product-sum}
Given an infinite sequence of real numbers $a_n$ with $0 \le a_n < 1$, we
have
\[
\prod_{n=1}^\infty (1-a_n) > 0 \quad\Leftrightarrow\quad \sum_{n=1}^\infty a_n
< \infty.
\]
and the ``$\Rightarrow$'' implication holds even for the weaker assumption $0 \le a_n \le 1$.
\end{proposition}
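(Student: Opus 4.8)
The plan is to prove the two implications separately, after observing that the infinite product is well defined: since every factor $1-a_n$ lies in $[0,1]$, the partial products $P_m \eqdef \prod_{n=1}^m(1-a_n)$ form a non-increasing sequence in $[0,1]$, hence converge to a limit $\prod_{n=1}^\infty(1-a_n)\in[0,1]$, and this limit is positive if and only if the sequence $(P_m)_m$ is bounded away from~$0$.

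For the direction $\prod_{n=1}^\infty(1-a_n)>0 \Rightarrow \sum_{n=1}^\infty a_n<\infty$, which I would prove under the weaker hypothesis $0\le a_n\le 1$, I would argue by contraposition. Assume $\sum_{n=1}^\infty a_n=\infty$. By \cref{prop:Weierstrass} we have $P_m\le 1/(1+\sum_{k=1}^m a_k)$ for every $m$, and the right-hand side tends to~$0$ as $m\to\infty$ because the partial sums diverge. Hence $P_m\to 0$, i.e.\ $\prod_{n=1}^\infty(1-a_n)=0$. This is a direct application of the already-established Weierstrass inequality, so no further work is needed for this implication.

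For the converse $\sum_{n=1}^\infty a_n<\infty \Rightarrow \prod_{n=1}^\infty(1-a_n)>0$, I would first note that convergence of $\sum a_n$ forces $a_n\to 0$, so there is an index $N$ with $a_n\le\tfrac12$ for all $n\ge N$. The key elementary fact is that $1-x\ge 4^{-x}$ for all $x\in[0,\tfrac12]$: the map $x\mapsto \ln(1-x)+x\ln 4$ is concave on $[0,\tfrac12]$ and vanishes at both endpoints, hence is non-negative there. Applying this to the tail factors gives, for every $m>N$,
\[
P_m \;=\; \Bigl(\prod_{n=1}^{N-1}(1-a_n)\Bigr)\prod_{n=N}^{m}(1-a_n) \;\ge\; \Bigl(\prod_{n=1}^{N-1}(1-a_n)\Bigr)\,4^{-\sum_{n=N}^{m}a_n} \;\ge\; c\cdot 4^{-S},
\]
where $c\eqdef\prod_{n=1}^{N-1}(1-a_n)>0$ is a finite product of strictly positive numbers (this is the one place the hypothesis $a_n<1$ is used) and $S\eqdef\sum_{n=N}^\infty a_n<\infty$. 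Thus $(P_m)_m$ is bounded below by the positive constant $c\cdot 4^{-S}$, and being non-increasing it converges to a positive limit.

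The only step that is not a one-line manipulation is the inequality $1-x\ge 4^{-x}$ on $[0,\tfrac12]$ (equivalently $1-x\ge e^{-2x}$, or $-\ln(1-x)\le 2x$, on that interval), and I expect this --- minor as it is --- to be the main obstacle. A convenient alternative that avoids tracking the constant is to note $-\ln(1-a_n)\le 2a_n$ for $n\ge N$, so that $\sum_{n\ge N}-\ln(1-a_n)$ converges by comparison with $\sum_{n\ge N}a_n$; then $\prod_{n\ge N}(1-a_n)=\exp\bigl(\sum_{n\ge N}\ln(1-a_n)\bigr)>0$, and multiplying by the strictly positive finite prefix $\prod_{n<N}(1-a_n)$ yields a positive product. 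Either way, it is worth recording explicitly that the strict inequality $a_n<1$ enters only through the positivity of the finite prefix, which is exactly why the converse implication --- unlike the forward one --- does not survive the relaxation to $a_n\le 1$.
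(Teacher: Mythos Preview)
Your proof is correct, and it takes a genuinely different route from the paper's. The paper reduces to the case $a_n\to 0$, takes logarithms, and then uses the limit comparison test (via $\lim_{x\to 0}\frac{-\ln(1-x)}{x}=1$) to handle both implications simultaneously. You instead treat the two directions asymmetrically: for ``$\Rightarrow$'' you reuse the Weierstrass inequality (\cref{prop:Weierstrass}) already established in the paper, and for ``$\Leftarrow$'' you use an explicit elementary bound $1-x\ge 4^{-x}$ on $[0,\tfrac12]$ to control the tail. Your approach has the virtue of making the role of the strict hypothesis $a_n<1$ completely transparent (it is needed only to keep the finite prefix positive), and it avoids L'H\^opital and the limit comparison test; the paper's approach is more symmetric and slightly shorter once the case split on whether $a_n\to 0$ is made. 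One minor quibble: your parenthetical ``equivalently $1-x\ge e^{-2x}$'' is not literally equivalent to $1-x\ge 4^{-x}$ since $\ln 4\ne 2$, but either inequality holds on $[0,\tfrac12]$ and either suffices for the argument.
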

\begin{proof}
If $a_n=1$ for any $n$ then the ``$\Rightarrow$'' implication is vacuously
true, but the ``$\Leftarrow$'' implication does not hold in general.
In the following we assume $0 \le a_n < 1$.

In the case where $a_n$ does not converge to zero, the property is trivial.
In the case where $a_n \rightarrow 0$, it is shown
by taking the logarithm of the product and using the limit comparison test as follows.

Taking the logarithm of the product gives the series
\[
\sum_{n=1}^{\infty} \ln(1 - a_n)
\]
whose convergence (to a finite number $\le 0$) is equivalent to the positivity of the product.
It is also equivalent to the convergence (to a number $\ge 0$) of its negation
$\sum_{n=1}^{\infty} -\ln(1 - a_n)$.
But observe that (by L'H\^{o}pital's rule)
\[
\lim_{x \rightarrow 0} \frac{-\ln(1-x)}{x} = 1.
\]
Since $a_n \rightarrow 0$ we have
\[
\lim_{n \rightarrow \infty} \frac{-\ln(1-a_n)}{a_n} = 1.
\]
By the limit comparison test, the series
$\sum_{n=1}^{\infty} -\ln(1 - a_n)$ converges if and only if the series
$\sum_{n=1}^\infty a_n$ converges.
\end{proof}

\begin{proposition}\label{prop:tail-product}
Given an infinite sequence of real numbers $a_n$ with $0 \le a_n \le 1$,
\[
\prod_{n=1}^\infty a_n > 0 \quad\Rightarrow\quad \forall\eps>0\,\exists N.\, \prod_{n=N}^\infty a_n \ge (1-\eps).
\]
\end{proposition}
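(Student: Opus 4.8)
The plan is to exploit the fact that the full product is the product of any initial segment with the corresponding tail, together with the fact that the initial segments converge down to the full product; this forces the tails to converge up to~$1$.

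If $\eps \ge 1$ the statement is trivial, since every finite or infinite product of numbers in $[0,1]$ itself lies in $[0,1] \supseteq [1-\eps,1]$. So assume $0 < \eps < 1$. Set $P \eqdef \prod_{n=1}^\infty a_n$, which is $>0$ by hypothesis, and for $m \ge 0$ let $P_m \eqdef \prod_{k=1}^m a_k$ (with $P_0 = 1$). Since each factor satisfies $a_k \le 1$, the sequence $(P_m)_m$ is non-increasing; and since $P>0$, no factor can be $0$, so $a_k > 0$ for all $k$, hence $P_m > 0$ for all $m$, and $P_m \downarrow P$.

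First I would check that for each $N \ge 1$ the tail product $T_N \eqdef \prod_{n=N}^\infty a_n$ is well defined: the finite products $\prod_{n=N}^m a_n$ (for $m \ge N$) form a non-increasing sequence bounded below by $0$, hence converge. Next, for every $m \ge N-1$ we have the factorization $P_m = P_{N-1} \cdot \prod_{n=N}^m a_n$; letting $m \to \infty$ gives $P = P_{N-1} \cdot T_N$, i.e.
\[
T_N = \frac{P}{P_{N-1}}\,.
\]
Since $P_{N-1} \to P$ as $N \to \infty$ and $P < P/(1-\eps)$, there exists $N$ with $P_{N-1} < P/(1-\eps)$; for this $N$ we obtain $T_N = P/P_{N-1} > 1-\eps$, which proves the claim. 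The argument is entirely routine; the only point that warrants a word of care is the existence and multiplicativity of the infinite tail product, and both follow immediately from the monotone convergence of the partial products.
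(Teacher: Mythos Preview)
Your proof is correct. The approach differs from the paper's: the paper takes logarithms, turning the positive infinite product into a convergent series $\sum_{n}\ln(a_n) > -\infty$, uses that the tail of a convergent series can be made $\ge -\delta$, and then exponentiates (choosing $\delta = -\ln(1-\eps)$). You instead work multiplicatively throughout, using the factorization $P = P_{N-1}\cdot T_N$ together with $P_{N-1}\downarrow P$ to force $T_N = P/P_{N-1} \to 1$. Both arguments are elementary; yours avoids the log/exp detour and is slightly more self-contained, while the paper's version is the textbook route via the series--product correspondence. Either way the content is the same: positivity of the full product forces the tails to approach~$1$.
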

\begin{proof}
  If there is $n$ with $a_n=0$ \new{or if $\eps \ge 1$} then the property is vacuously true.
  In the following we assume $a_n >0$ \new{and $\eps < 1$}.
  Since $\prod_{n=1}^\infty a_n > 0$, by taking the logarithm we obtain
  $\sum_{n=1}^\infty \ln(a_n) > -\infty$.
  Thus for every $\delta>0$ there exists an $N$
  s.t.\ $\sum_{n=N}^\infty \ln(a_n) \ge -\delta$.
  By exponentiation we obtain
  $\prod_{n=N}^\infty a_n \ge \exp(-\delta)$.
  By picking $\delta = -\ln(1-\eps)$ the result follows.
\end{proof}

\end{appendices}

\bibliography{journals,conferences,autocleaned}

\end{document}